\documentclass[a4paper, 11pt]{article}
\usepackage[margin=1.7cm]{geometry}
\usepackage[utf8]{inputenc}
\usepackage[english]{babel}
\usepackage{array}
\usepackage{amsthm}
\usepackage{amsmath,amssymb,amsfonts,mathtools}
\usepackage{enumitem}
\usepackage{framed}
\usepackage{fullpage}
\usepackage{multirow}
\usepackage{accents}
\usepackage{xspace}
\usepackage{multirow,array}
\usepackage{xcolor}
\usepackage{calc}       % Required for \widthof
\usepackage{tikz-cd}
\usepackage{pdflscape}
\usepackage{graphicx}
\usepackage{float}

\usepackage{colortbl}
\usepackage{microtype}
\usepackage{hyperref}
\usepackage{booktabs,caption}
\usepackage[flushleft]{threeparttable}
\usepackage[T1]{fontenc}
\usepackage{tikz}
\usepackage{tikz-cd}
\usetikzlibrary{arrows}
\usepackage{braket}
\usepackage{multicol}
\usepackage{cleveref}
\usepackage{thm-restate}

\usepackage{tasks}
\usepackage{pdflscape}
\usepackage{calc}
\usepackage{quiver}
\usepackage{relsize}
\newsavebox{\complexitydiagram}
\usepackage[textsize=tiny]{todonotes}

\tikzset{
  treenode/.style = {align=center, inner sep=0pt, text centered,
    font=\sffamily},
  arn_n/.style = {treenode, circle, white, font=\sffamily\bfseries, draw=black,
    fill=black, text width=1.5em},% arbre rouge noir, noeud noir
  arn_r/.style = {treenode, circle, black ,font=\sffamily\bfseries, draw=black,  fill=white, text width=1.5em, very thick},% arbre rouge noir, noeud rouge
  arn_x/.style = {treenode, rectangle ,font=\sffamily\bfseries, draw=red,
    minimum width=1.5em, minimum height=1.5em}% arbre rouge noir, nil
}
\hypersetup{
    pdftitle={}, % title
    pdfauthor={}, % author
    colorlinks=true, % false: boxed links; true: colored links
    linkcolor=blue, % color of internal links
    citecolor=blue, % color of links to bibliography
    filecolor=blue, % color of file links
    urlcolor=blue % color of external links
}

\hypersetup{colorlinks,linkcolor={blue},citecolor={blue},urlcolor={red}}

\hypersetup{colorlinks,linkcolor={blue},citecolor={blue},urlcolor={red}}

\newlength{\dhatheight}

\definecolor{lgreen}{rgb}{0.6, 1.0, 0.0}

\DeclarePairedDelimiter\card{\lvert}{\rvert}
\DeclarePairedDelimiter\bracket{[}{]}
\DeclarePairedDelimiter\of{(}{)}
\DeclarePairedDelimiter\norm{\lVert}{\rVert}

\let\set\undefined
\DeclarePairedDelimiter\set{\{}{\}}
\newcommand{\comprehension}[2]{\set*{#1\,\middle|\,#2}}

\newcommand{\for}[1]{{\forall\:\!#1,\,}}

\renewcommand{\geq}{\geqslant}
\renewcommand{\leq}{\leqslant}
\renewcommand{\Pr}{\mathbb{P}}

\newcommand{\Ra}{\mathrel\Rightarrow}

\newcommand{\predicate}[1]{{\textup{#1}}} % non-italic text for predicates
\renewcommand{\text}[1]{{\textit{#1}}} % italic text (independantly of the environment) for other use cases: word in index or superscript, name of variable...

\newcommand{\limitA}[2]{#1\limits_{#2}}
\newcommand{\limitB}[3]{\mathop{\limitA{#1}{#2}}\limits_{#3}}

\newcommand{\A}{\mathcal{A}}

\newcommand{\C}{\mathcal{C}}
\newcommand{\D}{\mathcal{D}}

\renewcommand{\H}{\mathcal{H}}

\newcommand{\NN}{\mathbb{N}}

\newcommand{\zone}{{\{0, 1\}}}

\DeclareMathAlphabet{\mathbbfrombbold}{U}{bbold}{m}{n}

\newcommand{\One   }{\mathbbfrombbold{1}}%index function

\newcommand{\cert}{\mathsf{cert}}%to be put in exponent
\newcommand{\ind}{\mathsf{ind}}%to be put in exponent
\newcommand{\str}{\mathsf{str}}%to be put in index
\newcommand{\weak}{\mathsf{weak}}%to be put in index
\newcommand{\sab}{\mathsf{sab}}%to be put in index

\renewcommand{\deg}{\mathrm{deg}}

\newcommand{\ndeg}{\mathrm{ndeg}}
\newcommand{\rdeg}{\mathrm{rdeg}}
\newcommand{\adeg}[1][]{\widetilde{\mathrm{deg}#1}}
\newcommand{\osdeg}[1][]{\widetilde{\mathrm{osdeg}#1}}
\newcommand{\bmdeg}{\mathrm{bm}\predicate{-}\mathrm{deg}}
\newcommand{\bmadeg}[1][]{\widetilde{\mathrm{bm}\predicate{-}\mathrm{deg}#1}}

\newcommand{\UC}{\mathsf{UC}}
\newcommand{\CG}{\mathsf{CG}}

\newcommand{\dqc}{\mathsf{D}}
\newcommand{\qec}{\mathsf{Q}_{\mathsf{E}}}
\newcommand{\roc}{\mathsf{R}_{\mathsf{0}}}
\newcommand{\qoc}{\mathsf{Q}_{\mathsf{0}}}

\newcommand{\rqc}{\mathsf{R}}
\newcommand{\qqc}{\mathsf{Q}}
\newcommand{\cc}{\mathsf{C}}
\newcommand{\s}{\mathsf{s}}
\newcommand{\bs}{\mathsf{bs}}

\newcommand{\EC}{\mathsf{EC}}
\newcommand{\RC}{\mathsf{RC}}
\newcommand{\QC}{\mathsf{QC}}
\newcommand{\RS}{\mathsf{RS}}
\newcommand{\QS}{\mathsf{QS}}

\newcommand{\QD}{\mathsf{QD}}
\newcommand{\MM}{\mathsf{MM}}
\newcommand{\fbs}{\mathsf{fbs}}
\newcommand{\cfbs}{\mathsf{cfbs}}
\newcommand{\FC}{\mathsf{FC}}

\newcommand{\CAdv}{\mathsf{CAdv}}
\newcommand{\Prt}{\mathsf{Prt}}  %exponentially scaled mesure
\newcommand{\prt}{\mathsf{prt}} % prt=log(Prt)
\newcommand{\NP}{\mathsf{NP}}
\newcommand{\OR}{\mathsf{OR}}

\newcommand{\Tribes}{\mathsf{Tribes}}
\newcommand{\AND}{\mathsf{AND}}

\newcommand{\Exact}{\mathsf{Exact}}

\renewcommand{\hat}{\widehat}

\declaretheorem{theorem}
\declaretheorem[sibling=theorem]{corollary}
\declaretheorem[sibling=theorem]{remark}

\declaretheorem[sibling=theorem]{lemma}

\declaretheorem[name=Definition,sibling=theorem]{defi}

\declaretheorem[sibling=theorem]{note}

\declaretheorem[sibling=theorem]{fact}
\declaretheorem[sibling=theorem]{proposition}

\usepackage[
backend=bibtex,
style=alphabetic,
sorting=nyt,%ynt
maxbibnames=99,
minalphanames=3
]{biblatex}
\title{Certification complexity of Boolean functions}

\author{Chandrima Kayal\thanks{Universit\'{e} Paris-Cit\'{e}, CNRS, IRIF  \texttt{kayal@irif.fr}} 
\and 
Sophie Laplante\thanks{ Universit\'{e} Paris-Cit\'{e}, IRIF \texttt{laplante@irif.fr}}
\and
\'{E}mile Larroque\thanks{ Universit\'{e} Paris-Cit\'{e}, IRIF \texttt{larroque@irif.fr}}
\and
Kri{\v s}j{\=a}nis Pr{\=u}sis\thanks{University of Latvia, Faculty of Science and Technology \texttt{krisjanis.prusis@lu.lv}}
\and
Jevg{\=e}nijs Vihrovs\thanks{University of Latvia, Faculty of Science and Technology \texttt{jevgenijs.vihrovs@lu.lv}}
}

\date{}

\begin{document}

\date{}

\maketitle
\thispagestyle{empty}
\begin{abstract}
Certificate complexity ($\cc(f)$) is a fundamental measure of complexity of Boolean functions $f$ which counts the number of bits of an input that need to be known in order for the value of the function to be determined. A certificate can be viewed as a partial assignment, or a boolean subcube where the function is constant.
Certificate complexity is well understood for deterministic query (or decision tree) complexity ($\dqc$) and other  query models such as bounded-error randomized and quantum complexity ($\rqc$, $\qqc$), but not as well for quantum zero-error ($\qoc$) and exact query complexity ($\qec$), where there is no agreed-upon certificate ``object'' (even for $\qqc$).

Instead, we study an operational notion of certification and apply it to various query-based models, with a focus on zero-error and exact quantum query complexity, but also on polynomial degree measures.
We give new characterizations of $\cc$, $\RC$ (randomized certificate complexity) and $\QC$ (quantum certificate complexity), in terms of various measures such as classical and quantum sabotage complexity, unambiguous certificate complexity, and variants of polynomial degree.

Certification complexity also gives rise to new lower bounds on $\qec$ and $\qoc$, the quantum analogues of $\dqc$ and $\roc$, complexity measures for which few lower bound techniques are known which are not already lower bounds for two-sided error quantum query complexity.
{We exhibit a total Boolean function for which our certification complexity measure gives a tight lower bound for $\qoc$, but rational degree and $\qqc$ are asymptotically smaller.}

\end{abstract}

\bigskip
\begin{multicols}{2}
    \tableofcontents
\end{multicols}
\bigskip
\bigskip

\section{Introduction}

A certificate captures the amount of information about an input that is sufficient to ascertain the value of a function on a fixed input $x$. This notion appears naturally in several computational models. In the query model, a certificate for an input is a partial assignment of the input's variables such that every completion of this partial assignment yields the same function value. 
Thus,
a certificate can be viewed as a set of queried variables whose values are sufficient to certify an output. A similar notion arises in communication complexity. Here, the input is distributed among two parties, and a certificate corresponds to a monochromatic rectangle: a product set of inputs on which the function has a fixed value. In both settings, the underlying idea is the same: rather than revealing the entire input, we seek a restricted region of the input space that is already sufficient to determine the output. The complexity of a certificate measures the amount of data required to identify such a monochromatic region.

In this paper, we systematically study an abstract notion of certification that applies to various complexity measures in the query complexity world. The main object of study are \emph{decision problems}, that is, (possibly partial)  functions $f : \D \subseteq \zone^n \to \zone$ and we are interested in the complexity of the certification of $f$ with respect to different complexity measures. In the classical deterministic setting, the minimum number of queries needed for any deterministic query algorithm to output $f(x)$ for the worst case input $x$ is known as \emph{deterministic query complexity} of $f$, denoted $\dqc(f)$. The notion of certificate is especially clear for $\dqc(f)$: A certificate is a \emph{partial assignment}
$\rho : [n] \to \set*{0,1,\ast}$,
that forces the value of the function to be constant. $\cc(x, f)$ is the size of the smallest certificate that is consistent with $x$, and 
$\cc(f) = \max_x\{\cc (f,x)\}$.
Equivalently, it corresponds to a monochromatic subcube containing the input. 

In randomized and quantum computation, the notion becomes less immediate: what should constitute a certificate when the computation itself may involve randomness or querying in superposition? 
The quantum case is especially troublesome since a in a quantum query algorithm, we cannot assume that the queries that were made are known at the end of the computation. 
Hence, it is not clear what the corresponding certificate should be as a concrete combinatorial object. Although quantum certificate complexity is well understood for two-sided error, as a complexity measure, what would be the natural quantum equivalent to a classical certificate object is still a matter of debate~\cite{Jeffery2026QuantumQuery,KKW+26}. Several notions capture different aspects of classical certification, but no single quantum measure appears to play the canonical role of certificate complexity. An interesting perspective comes from postselection: the exact postselected randomized query complexity coincides with classical certificate complexity~\cite[Theorem~16]{Cade20}, whereas its quantum counterpart is characterized by rational degree. This suggests that rational degree can be viewed as a natural quantum analogue of certificate complexity. However, this notion is distinct from Aaronson's quantum certificate complexity~\cite{Aar08}, which captures a different operational notion of quantum certification. The presence of these different, non-equivalent candidates highlights a more fundamental question: what is the appropriate notion of a certificate in the quantum query model, and which aspects of classical certification should such a notion preserve?

Rather than insisting on a model-specific notion of a certificate ``object'', we take a more operational viewpoint and focus on the certification \emph{task}.  
Similar notions have appeared in various forms in previous works~\cite{Aar08, KT16,BK19}.
Given a fixed, publicly known input $z$ (so $f(z)$ is also known, as $f$ is fixed), the certification task for $z$ is as follows. On an arbitrary input $x$,
output ``accept'' whenever
$x=z$,
and output ``reject'' whenever
$f(x) \neq f(z)$.

What it means to compute this task will depend on the model of computation we are certifying for. 
Notice that since we are computing~$f$ on a restricted domain, this means that (up to negation) certification  for a model $\mathsf{M}$ is an easier task than computing $f$ in full, so certification complexity for $\mathsf{M}$ is usually a lower bound on  $\mathsf{M}$.
In the case of deterministic query complexity, the certification task and the certificate objects are equivalent: making the queries to indices in the certificate suffices to achieve the certification task, and conversely, the queries made by an accepting execution of a certification algorithm form a certificate for this input.
In this setting, certification is also equivalent to nondeterministic computation, or postselection, where it suffices to guess a certificate, and make those queries, to check for consistency with the given input.

Taking an operational perspective leads  us to study certification across computational models without requiring {\em a priori} a common notion of what a certificate object should look like.  
We systematically investigate the certification task for various complexity measures in the query world, including quantum query complexity, polynomial degree, and several related measures.

\subsection{Certification  in operational terms}

Let us define our object of study formally.

\begin{restatable}[input certification problem, or $z$-certification version of a function]{defi}{fzCertDefiniton}\label{def:f cert}
    Let
    $f : \D \subseteq \zone^n \to \zone$
    be a Boolean (possibly partial) function and
    $z \in \D$.
    Then $f^\cert_z$ denotes the \emph{$z$-certification version of~$f$}, which is a partial function defined on the domain
    %$x \in 
    $\set*{z} \cup \comprehension{y \in \D}{f(y) \neq f(z)}$
    as
    $$f^\cert_z(x) := \begin{cases}
        1
        &\predicate{if } z=x\\
        0
        &\predicate{if } f(z) \neq f(x)
    \end{cases}$$
\end{restatable}

\begin{restatable}[certification complexity $\mathsf{M}^\cert$]{defi}{certDefinition}\label{def:cert qc certification version}
    Consider a complexity measure $\mathsf{M}$ of Boolean (possibly partial) functions. Let
    $f : \D \subseteq \zone^n \to \zone$
    be a Boolean (possibly partial) function. $\mathsf{M}^\cert(f)$ denotes the \emph{certification complexity of $f$} obtained from $\mathsf{M}$. $\mathsf{M}^\cert(f)$ is defined as the complexity of the (worst input $z$ case) $z$-certification version of $f$, that is, 
    $$\mathsf{M}^\cert(f) := \max\limits_{z \in \D} \mathsf{M}\of*{f^\cert_z}.$$
\end{restatable}

As an example, consider the decision tree model $\dqc$. Certification for $z$ (which is fixed and known) can be achieved by making queries to an input $x$ according to a decision tree for $f$, and outputting $1$ if for all queries $i$ that were made, $x_i=z_i$, and $0$ otherwise. Clearly, this procedure will output $1$ when $x=z$, and will output $0$ whenever $f(x)\neq f(z)$. Running this procedure on $z$ therefore certifies the value of $f(z)$. Since the domain of $f^\cert_z$ is a subset of $\D$, certifying is no harder than computing the function on the full domain, so we have that $\dqc^\cert(f) \leq \dqc(f)$.

To illustrate the difference between decision and certification, consider the $\Tribes$ function. Its input is composed of $m$ blocks $X_1,\ldots,X_m$ where each $X_i$ is an $l$-bit string, and  $\Tribes_{l,m}\of*{X_1,\ldots, X_m} := \bigwedge_{i=1}^m\OR\of*{X_i}$. Certifying $z$ reduces the domain to $z$ and all inputs $x$ where $f(x)\neq f(z)$. Whenever $\Tribes_{l,m}(z)=0$, $z$ contains an all-$0$ block, so to certify $z$, it suffices to output $1$ on all inputs where this block is all-$0$, and zero everywhere else, essentially solving $\OR_l$ on this block.
Similarly, if $\Tribes_{l,m}(z)=1$, then~$z$ has at least some variable in each block that is set to $1$ in $z$. To certify $z$, it suffices to output $1$ whenever each of these variables is $1$, and $0$ on all other inputs. This amounts to deciding $\AND_m$ on these variables. Depending on the measure, this could be a substantially simpler task.

Operational definitions of certification have appeared in the literature in various forms~\cite{Aar08,BK19}. For example, Ben-David and Kothari~\cite[Definition~7]{BK19} state that one can understand the certificate complexity $\cc$ and its variants $\QC$ and $\RC$ through a unifying computational characterization.

\subsection{Our contributions}

Our main results fall into three categories:  new characterizations of known complexity measures and 
the collapse of various certification measures,  new lower bounds and a new framework to study quantum certificates, and some separations.

We  apply the certification framework to a wide variety of complexity measures. We defer the formal definitions of these measures to \Cref{sec:preliminaries}.

%%#################################################
\subsubsection{Characterization of certificate complexity: classical and quantum}
We start by showing that the input certification approach leads to several new characterizations of certificate complexity $\cc$ and its variants $\QC$ and $\RC$, an indication of the robustness of these notions.

\begin{restatable}[characterizations of classical certificate complexities]{theorem}{CharacterizationClassicalCertificateComplexities}\label{characterizations of C and RC}
    For all Boolean (possibly partial) function $f$
    \begin{enumerate}
        \item\label{item:characterizations of C} $\cc(f) = \dqc^\cert(f) = \roc^\cert(f) = \UC^\cert(f) = \Theta\of*{\CG^\cert(f)}$
        \item\label{item:characterizations of RC} up to $O(1)$ multiplicative factors,
        $$\RC(f) = \rqc^\cert(f) = \RS^\cert(f) = \FC(f) = \EC^\cert(f) = \prt^\cert(f) = \fbs(f) = \cfbs^\cert(f)$$
    \end{enumerate}
where $\UC$ is the unambiguous certificate complexity (\Cref{def:UC}), $\RS$ is the sabotage complexity (\Cref{def:sabotage complexity}),
$\CG$ is the private coin certificate game complexity (\Cref{def:CGpriv}),
$\EC$ is the expectational certificate complexity (\Cref{def:EC}), $\prt$ is the partition bound (\Cref{def:prt}) and $\cfbs$ is the critical fractional block sensitivity (\Cref{def:cfbs}).
\end{restatable}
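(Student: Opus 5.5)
The plan is to prove both items by a cycle of inequalities. The key observation is that every measure $\mathsf{M}$ in the list is monotone under restriction of the domain. Hence $\mathsf{M}(f^\cert_z)$ is sandwiched between a lower bound, coming from a hardness argument on the pair ($z$ versus all $x$ with $f(x)\neq f(z)$), and an upper bound, coming from an explicit algorithm or object built from a certificate (resp.\ randomized certificate) of $z$.

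\textbf{Item 1.} For the upper bounds, fix $z$ and a minimum certificate $S$ for $z$ with $|S|=\cc(f,z)$. The deterministic tree that queries $S$ and accepts iff $x_S=z_S$ computes $f^\cert_z$. Every $x$ with $f(x)\ne f(z)$ differs from $z$ on $S$, so the tree rejects it. This gives $\dqc(f^\cert_z)\le \cc(f)$, and hence also bounds $\roc$, $\UC$ and $\CG$ of $f^\cert_z$, since these measures are at most $\dqc$ up to constants. For the lower bounds, I show $\cc(f,z)\le \roc(f^\cert_z)$: a zero-error algorithm run on $z$ must output $1$, so take any run on $z$ that outputs $1$ with nonzero probability. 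The set of queried positions, with values fixed to $z$, is a partial assignment. If some $x$ with $f(x)\neq f(z)$ agreed with $z$ there, the same run would output $1$ on $x$, which is impossible with zero error. So the queried set is a certificate, and its size is at most $\roc(f^\cert_z)$ (worst-case cost convention). For $\UC$, a $1$-certificate of $f^\cert_z$ covering $z$ is already a certificate of $z$ for $f$, so $\UC(f^\cert_z)\ge \cc(f,z)$. For $\CG$, I will use the known bound $\CG\ge\Omega(\cc)$ for the private-coin certificate game, or more precisely its proof via the fact that the game must certify on the input pair $(z,x)$ for every such $x$, applied to $f^\cert_z$. Taking the maximum over $z$ gives item 1.

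\textbf{Item 2.} The anchor is $\RC(f,z)=\Theta(\fbs(f,z))$, which is standard via LP duality: the fractional block sensitivity LP and the fractional certificate LP are dual, and Aaronson's adaptive $\RC$ characterization connects them. Since $\FC$ and $\fbs$ are defined on $f$ directly, I would cite $\RC=\Theta(\FC)=\Theta(\fbs)$. I then show each certification measure is $\Theta(\fbs(f,z))$ on $f^\cert_z$.

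For the upper bounds, sample from the optimal fractional certificate weights $w$ for $z$, normalized by $\sum w_i=\FC(f,z)$. Querying $O(\FC)$ indices sampled from $w$ detects a difference from any $x\in f^{-1}(1-f(z))$ with constant probability, because $\sum_{i:x_i\ne z_i}w_i\ge1$. This gives $\rqc(f^\cert_z)=O(\FC)$, with one-sided error. The same sampling yields an expectational certificate bound and, via $\prt\le O(\rqc)$ and $\RS\le O(\rqc)$, the remaining upper bounds.

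For the lower bounds, I argue that each of $\RS$, $\EC$, $\prt$ and $\cfbs$ on $f^\cert_z$ is at least $\Omega(\fbs(f,z))$. Take an optimal fractional packing of sensitive blocks $B$ of $z$ (so $z^B\in\mathrm{dom}$ with opposite value), with weights $\lambda_B$ summing to $\fbs(f,z)$. This yields a hard distribution for distinguishing $z$ from a random $z^B$. Any measure that lower-bounds the algorithm on this "one versus many sensitive flips" instance, as partition and expectational certificates do through their LP formulation, gives the bound. Concretely, the dual LP of $\prt$ or $\EC$ restricted to the domain of $f^\cert_z$ is feasible with value $\Omega(\fbs(f,z))$ when $\lambda$ is plugged in.

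The main obstacle is matching the exact definitions of $\cfbs$, $\EC$ and $\RS$ on the partial function $f^\cert_z$, whose $1$-side is a single point. I would first compute $\cfbs(f^\cert_z)$ directly: since $z$ is the only $1$-input, critical blocks at $z$ coincide with sensitive blocks to $f^{-1}(1-f(z))$, so it should equal $\fbs(f,z)$ exactly. The remaining measures are then placed between $\cfbs^\cert$ and $\rqc^\cert$ using known generic inequalities.
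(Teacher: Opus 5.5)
Your per-$z$ sandwich is essentially the paper's route: the Sandwich Lemma applied to chains of known inequalities. It goes through for $\dqc$, $\roc$, $\UC$, $\rqc$, $\RS$, $\prt$ and $\cfbs$, but two steps fail as written.

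\textbf{The $\CG$ upper bound.} It is not true that $\CG$ is at most $\dqc$ up to constants. For parity, each edge $(x,x^j)$ forces $w_x(j)w_{x^j}(j)\geq 1$, hence $w_x(j)+w_{x^j}(j)\geq 2$. Summing over the $n2^{n-1}$ edges gives $\sqrt{\CG(\mathrm{PAR}_n)}\geq n$, i.e.\ $\CG(\mathrm{PAR}_n)\geq n^2=\dqc(\mathrm{PAR}_n)^2$.

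The inequality $\CG\of*{f^\cert_z}\leq\cc(f,z)$ that you need is still true, but for a different reason: $f^\cert_z$ has a single $1$-input, so $\cc_0\of*{f^\cert_z}=1$. The paper uses $\CG\leq O(\cc_0\cdot\cc_1)$ together with $(\cc_0\cdot\cc_1)^\cert=\cc$. Alternatively, write the weights directly: $1/\sqrt{c}$ on a minimum certificate of $z$, and for each $x$, weight $\sqrt{c}$ on one certificate position where $x$ and $z$ differ.

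Your lower bound for $\CG$ can also be made concrete. If all weight sums are at most $s$, each constraint gives $\max_{i:x_i\neq z_i}w_z(i)\geq 1/s$. So $\comprehension{i}{w_z(i)\geq 1/s}$ carries a certificate of $z$ of size at most $s^2$.

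\textbf{The $\EC$ upper bound} is the real gap. ``The same sampling'' gives nothing here. The whole difficulty is choosing the weights $w_x\in[0,1]^n$ for the $0$-inputs of $f^\cert_z$. The query probabilities of your non-adaptive sampler give $w_x=w_z$, which turns the constraint into $\sum_{i:x_i\neq z_i}w_z(i)^2\geq1$, and a fractional certificate does not ensure this.

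In fact no bound $\EC\of*{f^\cert_z}=O\of*{\fbs(f,z)}$ holds, so your plan of proving everything at the level of $\fbs(f,z)$ cannot be completed for $\EC$. Take the total function $f(x)=1$ iff $\card*{x}<n/2$, and $z=0^n$; then $\fbs(f,z)=2$. Let $T$ be the $n/2$ coordinates with smallest $w_z$ and $x=\One_T$. The constraint forces $\max_{i\in T}w_z(i)\geq1/s$, hence $s\geq\sum_{i\notin T}w_z(i)\geq n/(2s)$, so $\EC\of*{f^\cert_z}\geq\sqrt{n/2}$.

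The paper instead bounds $\EC\of*{f^\cert_z}$ by a \emph{global} quantity. It takes $w_z$ to be an optimal fractional certificate of $z$, and $w_x=\One_B$ for a minimal sensitive block $B\subseteq\comprehension{i}{x_i\neq z_i}$ of $z$. The constraint then reduces to $\sum_{i\in B}w_z(i)\geq1$, which holds since $z^B$ is in the domain with the opposite value. Moreover $\card*{B}\leq\s(f)\leq\FC(f)$, because every bit of $B$ is sensitive at $z^B$.

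That last step needs $z^{B\setminus\set*{i}}$ to lie in the domain, i.e.\ totality, and it cannot be repaired for partial functions. Take $f$ with domain $\set*{0^n}\cup\comprehension{x}{\card*{x}=n/2}$ and $f(0^n)=1$. The computation above gives $\EC^\cert(f)\geq\EC\of*{f^\cert_{0^n}}=\EC(f)\geq\sqrt{n/2}$, while $\FC(f)=2$. So the $\EC^\cert$ part of item~2 is false for partial functions, and you should restrict it to total $f$.

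Separately, $\EC$ is not a linear program, so there is no dual to plug $\lambda$ into. Its lower bound is immediate anyway: since $w_x\leq1$, the constraints force $w_z$ to be a fractional certificate of $z$.
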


For completeness, we included in the statement of \Cref{characterizations of C and RC} the previously known characterizations
$\FC(f) = \fbs(f) = \Theta\of*{\RC(f)}$
(see \Cref{local cert = local} for references).

Then we move to the quantum case, starting with characterizations of the bounded-error quantum query complexity $\qqc$.

\begin{restatable}[computational characterizations of quantum certificate complexity]{theorem}{ComputationalCharacterizationQuantumCertificateComplexity}\label{characterizations of QC: computational}
    Up to $O(1)$ multiplicative factors, for all Boolean (possibly partial) functions $f$,
    $$\QC(f) = \qqc^\cert(f) =  \QD^\cert(f) = \MM^\cert(f) = \QS^\cert(f)$$
    where
    $\QD$ is the quantum distinguishing complexity~\cite{BK19} (\Cref{def:QD}),
    $\MM$ is the quantum adversary bound (minimax formulation, \Cref{def:MM}),
    $\QS$ is the quantum sabotage complexity~\cite{CMP24} defined in \Cref{sec:quantum cert measures}.\footnote{The theorem holds for the four variants of quantum sabotage complexity, and details are provided in \Cref{sec:quantum cert measures}.}
\end{restatable}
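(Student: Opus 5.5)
We prove the equalities as a cycle of inequalities, all up to constant factors:
\[
\QC(f)\lesssim \MM^\cert(f)\lesssim \qqc^\cert(f),\qquad \qqc^\cert(f)\lesssim \QC(f),
\]
together with
\[
\QD^\cert(f)\approx \qqc^\cert(f),\qquad \QS^\cert(f)\approx \QC(f).
\]
The basic tool is Aaronson's characterization of quantum certificate complexity. Up to constants, $\QC(f)=\max_z \min_{w}\max_{x:f(x)\neq f(z)} 1/\sum_{i:x_i\neq z_i}w_i$, where $w$ ranges over probability distributions on $[n]$. Equivalently, $\QC(f)^2=\Theta(\RC(f))$, and $\RC$ is given by the fractional block sensitivity or fractional certificate $\FC$, already established in \Cref{characterizations of C and RC}. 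We therefore fix $z$ and work with the one-vs-many promise problem $f^\cert_z$, which asks to distinguish $z$ from the set $B_z=\{x: f(x)\neq f(z)\}$.

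\emph{Upper bound $\qqc^\cert(f)=O(\QC(f))$.} Fix $z$ and take an optimal fractional certificate $w$ for $z$. For every $x\in B_z$ we have $\sum_{i:x_i\neq z_i}w_i\geq 1/\FC(z)$. Since $z$ is known, the queries can be relabelled so that $z$ becomes the all-zero string; $f^\cert_z$ then becomes a weighted $\OR$ promise problem in which the $1$-weight of $x$ under $w$ is at least $1/\FC(z)$. Amplitude amplification over the distribution $w$ (sample $i\sim w$, query, mark $i$ if $x_i\neq z_i$) detects a marked index with constant probability in $O(\sqrt{\FC(z)})=O(\QC(f))$ queries and never errs on $z$. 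The same algorithm also distinguishes $z$ from every $x\in B_z$ with constant bias, which gives $\QD^\cert\leq O(\qqc^\cert)$ trivially and $\QD^\cert\leq O(\QC)$. For quantum sabotage, the sabotaged problem on $f^\cert_z$ asks to find a position where the sabotaged input differs from $z$; the same search finds such a position, so $\QS^\cert\le O(\QC)$ for each of the variants.

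\emph{Lower bound $\QC(f)=O(\MM^\cert(f))$ and $\MM^\cert\le O(\qqc^\cert)$.} The second inequality is the standard fact that the (negative-weight or minimax) adversary bound lower-bounds bounded-error query complexity, applied to $f^\cert_z$. For the first, use the minimax formulation on the partial function $f^\cert_z$. The $1$-side consists of the single input $z$, so an adversary can put all its weight on pairs $(z,x)$ with $x\in B_z$. In the minimax expression, the weight vectors $p_z$ for $z$ and $p_x$ for $x$ contribute a term $\sum_{i:x_i\neq z_i}\sqrt{p_z(i)p_x(i)}\geq 1$. One then shows that this forces $\min_{p_z}\max_x 1/\sum_{i:x_i\neq z_i}p_z(i)$, which is Aaronson's $\QC$ expression, to be $O(\MM^\cert(f)^2)$.

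This last step is where I expect the main obstacle. It calls for a Cauchy--Schwarz or AM--GM argument that decouples $p_z$ from the $p_x$'s: from $\sum_{i\in\Delta(x,z)}\sqrt{p_z(i)p_x(i)}\ge1$ one obtains $\sum_{i\in\Delta}p_z(i)\cdot\sum_{i\in\Delta}p_x(i)\geq 1$, and hence $1/\sum_{\Delta}p_z(i)\le \sum_\Delta p_x(i)\le \MM$. Combining this with the normalisation $\MM\ge \sum_i p_z(i)$ after rescaling, and taking the geometric mean of the two sides, gives $\QC(z)\lesssim \MM^\cert$. Getting the exponents right is the delicate point, because $\QC$ scales as a square root.

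\emph{Remaining comparisons.} The inequalities $\QC\lesssim\QD^\cert$ and $\QC\lesssim\QS^\cert$ follow because distinguishing $z$ from each $x\in B_z$, or finding a differing position, yields an adversary-style lower bound. The hybrid argument of \cite{BK19} shows that a $T$-query algorithm that distinguishes $z$ from $x$ must place query mass $\Omega(1/T)$ on $\Delta(x,z)$, and that query-mass distribution is exactly a fractional certificate of value $O(T^2)$. Chaining all these inequalities closes the cycle, and taking the maximum over $z$ gives the theorem.
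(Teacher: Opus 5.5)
Your argument is correct in substance, but it follows a more direct route than the paper. The paper takes $\QC=\qqc^\cert$ from Aaronson and Ben-David--Kothari. It then obtains $\QD^\cert$ and $\MM^\cert$ from the Sandwich Lemma applied to the chain $\qqc\geq\QD\geq\MM\geq\sqrt{\CAdv}\geq\sqrt{\cfbs}\geq\sqrt{\fbs}=\QC$, whose middle links are general theorems valid for every partial function. For sabotage, the paper proves $(\QS^\ind_\weak)^\cert\leq\QD^\cert$ by amplitude amplification, imports $\sqrt{\fbs}\leq\QS_\str$ from Cornelissen--Mande--Patro, and sandwiches again. You instead work on $f^\cert_z$ itself and exploit the fact that its $1$-side is the single input $z$. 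The general lower bounds $\MM\gtrsim\sqrt{\fbs}$ and $\QS\gtrsim\sqrt{\fbs}$ then reduce to a Cauchy--Schwarz or hybrid computation against one fixed reference, and you need neither the Sandwich Lemma nor idempotence. The paper's route is modular and reuses its toolbox; yours is self-contained. Your sabotage upper bound, which amplifies directly over an optimal fractional certificate of $z$, is the same idea as the paper's amplitude-amplification argument.

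Two points need repair.

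\emph{The $\MM$ step.} The expression $\min_w\max_x 1/\sum_{i:x_i\neq z_i}w_i$ that you attribute to Aaronson is $\FC(f,z)$, i.e.\ $\RC$ up to constants, not $\QC$; Aaronson's theorem is $\QC=\Theta(\sqrt{\RC})$. Once this is fixed, the $\MM$ step has no exponent subtlety. Let $M:=\MM(f^\cert_z)$ with optimal weights. Cauchy--Schwarz gives $w_z(\Delta(x,z))\geq 1/w_x(\Delta(x,z))\geq 1/M$ for all $x\in B_z$, and $\sum_i w_z(i)\leq M$. So $M\cdot w_z$ is a fractional certificate for $z$ of total weight at most $M^2$, i.e.\ $\sqrt{\FC(f,z)}\leq\MM(f^\cert_z)$.

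\emph{The bound $\QC\lesssim\QS^\cert$.} You only assert this, and the reasons you give do not apply as stated. A sabotage algorithm is never run on $z$, which lies outside the domain of the sabotage function of $f^\cert_z$. In the non-index variants it also returns no position. The missing idea is to use the unsabotaged string $z$ as a common reference oracle:
\begin{itemize}
\item The oracles $[x,z,\ast]$ and $[x,z,\dagger]$ both differ from $z$ exactly on $\Delta(x,z)$, and the algorithm separates them with constant bias.
\item By the triangle inequality, one of them ends at constant distance from the run on $z$.
\item The hybrid argument then forces total query mass $\Omega(1/T)$ on $\Delta(x,z)$ in the run on $z$, summed over the $T$ queries.
\end{itemize}
This is the same fractional certificate as in your $\QD$ argument. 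In the strong input model, where $x$ and $z$ are also queryable, use the reference oracle whose every component equals $z$.

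Alternatively, given query access to $u$, simulate $[u,z,\ast]$ and $[u,z,\dagger]$ with one query each. Run the sabotage algorithm on both, and reject iff it answers $\ast$ on the first and $\dagger$ on the second. On $u=z$ the two oracles coincide, so rejection has probability at most $1/4$; on $u\in B_z$ it has probability at least $4/9$. This gives $\qqc^\cert\lesssim\QS^\cert$ directly.

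Either way, $\QS_\str$ is the smallest of the four variants and your search bounds the largest, $\QS^\ind_\weak$, so this closes the cycle for all of them.
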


\subsubsection{Exact and zero-error quantum certification complexity}
A zero-error quantum algorithm (or a Las Vegas quantum algorithm) never produces an incorrect
answer on an input, but is allowed to claim ignorance with probability
at most $\frac{1}{2}$. The zero-error quantum query complexity of $f$, denoted $\qoc(f)$, is the minimum number of
queries needed for a zero-error quantum algorithm to compute $f$. On the other hand, $\qec$ denotes the minimum
number of queries in a quantum algorithm that computes $f$ exactly. It follows from the definitions that $\qqc(f) \leq \qoc(f) \leq \qec(f)$ for any function $f$. For a formal definition, see \Cref{def:qqc}, \Cref{def:Q0} and \Cref{def:qec}. Our first contribution is a lower bound on both the complexity measures $\qoc$ and $\qec$. It turns out the certification versions coincide while the certification for bounded error quantum query complexity $\qqc$ remains different.

\begin{restatable}[exact/zero-error collapse of $\qoc$ and $\qec$ through the $\cdot^\cert$ transform]{theorem}{QZeroCertEqualsQECert}\label{Q_0 cert = Q_E cert}
    Up to $O(1)$ multiplicative factors, for any Boolean (possibly partial) function $f$
    $$\qec^\cert(f) = \qoc^\cert(f) \leq \qoc(f).$$
\end{restatable}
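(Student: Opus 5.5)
The plan is to prove three inequalities: $\qoc^\cert(f) \leq \qec^\cert(f)$, $\qoc^\cert(f)\leq \qoc(f)$, and $\qec^\cert(f) = O(\qoc^\cert(f))$.

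The first two are routine. An exact algorithm is in particular zero-error, so $\qoc(g)\leq\qec(g)$ for every partial $g$; applying this to $g=f^\cert_z$ and maximizing over $z$ gives $\qoc^\cert(f)\leq\qec^\cert(f)$. For the second, fix $z$ and take a zero-error algorithm $A$ for $f$. Run $A$ on $x$. If $A$ outputs a value different from $f(z)$, output $0$; otherwise output $1$; if $A$ claims ignorance, claim ignorance. On the domain of $f^\cert_z$ this never errs:
\begin{itemize}
\item if $x=z$, then $A$ outputs only $f(z)$ or ``?'';
\item if $f(x)\neq f(z)$, then $A$ outputs only $f(x)\neq f(z)$ or ``?''.
\end{itemize}
The ignorance probability is unchanged, so $\qoc(f^\cert_z)\leq\qoc(f)$ for every $z$.

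The substantive step is $\qec(f^\cert_z)=O(\qoc(f^\cert_z))$, and it uses the one-sided shape of the certification problem. Let $A$ be a zero-error algorithm for $g=f^\cert_z$ with $T$ queries. On input $z$ it outputs $1$ with probability at least $1/2$ and never outputs $0$. On any other input in the domain it never outputs $1$. So $g$ is computed by a one-sided-error algorithm whose $1$-input is a single known string $z$. Define the unitary $U_x$ as the full algorithm followed by projection onto the ``output $1$'' subspace, with the good state $|\psi_x\rangle=\Pi_1 A_x|0\rangle$. For $x\neq z$ in the domain, $\|\Pi_1 A_x|0\rangle\|=0$. For $x=z$, the amplitude is at least $1/\sqrt2$ and exactly known, since $z$ is public. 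With the success probability $p=\|\Pi_1A_z|0\rangle\|^2$ known exactly, exact amplitude amplification (Brassard--H{\o}yer--Mosca--Tapp, with a final rotation adjusting the angle) turns $A$ into an algorithm that outputs $1$ with probability exactly $1$ on $z$, using $O(1/\sqrt p)=O(1)$ invocations of $A$ and $A^{-1}$. On inputs $x\neq z$ the good subspace has zero amplitude, and amplification keeps it zero, since every step acts within the span of $A_x|0\rangle$ and the reflections preserve zero overlap with the good subspace. So the amplified algorithm outputs $0$ with certainty there, and hence computes $g$ exactly with $O(T)$ queries. Maximizing over $z$ gives $\qec^\cert(f)=O(\qoc^\cert(f))$.

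The main obstacle I expect is justifying exact amplification with the known $p$. The reflection about $A_x|0\rangle$ must be implemented as $A_x R_0 A_x^{-1}$, which queries $x$ rather than $z$. That is fine, but the rotation-angle calibration uses $p$ computed at $z$. On inputs $x\neq z$ the initial good amplitude is $0$, and I must check that the calibrated final partial rotation (e.g.\ a phase-adjusted reflection) still leaves zero good amplitude. This holds because all operators are of the form $A_x(\cdot)A_x^{-1}$ and phase reflections on $\Pi_1$, so the state stays in the span of $A_x|0\rangle$, which has $\Pi_1$-component zero. Alternatively, one can pad $A$ with an ancilla rotation to make $p$ exactly $1/4$, after which a single standard Grover iteration succeeds with certainty.
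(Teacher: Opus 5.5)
Your proposal is correct and follows essentially the same route as the paper: exact amplitude amplification calibrated on the exactly known success probability at $z$ (the paper uses the phase-adjusted ``second method'' of Brassard--H{\o}yer--Mosca--Tapp), together with the observation that on inputs $y$ with $f(y)\neq f(z)$ every iteration only multiplies $A^y\ket{\text{init}}$ by a global phase, so the output-$1$ subspace is never reached. The paper obtains the remaining inequalities $\qoc^\cert\leq\qec^\cert$ and $\qoc^\cert\leq\qoc$ from the toolbox items (monotonicity, and certification being easier for stable measures), which are exactly what your direct arguments establish.
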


This result can be seen as the quantum version of the equality $\cc(f) = \dqc^\cert(f) = \roc^\cert(f)$, since $\qec$ is the quantum analogue of $\dqc$, and $\qoc$ is the analogue of $\roc$, although the proof of \Cref{Q_0 cert = Q_E cert} requires a more involved argument that uses exact quantum amplitude amplification (see \Cref{sec:equality Q_0 cert = Q_E cert} for details).

To the best of our knowledge, 
$\qec^\cert$
is a new lower bound on quantum zero-error query complexity $\qoc$, a complexity measure that, aside from rational degree, lacks lower bound techniques that are not also lower bounds on $\qqc$.
The fact that it has two equivalent formulations can also be viewed as an indication that this is a robust notion that merits  further study.

\subsubsection{Certification using polynomial degree}\label{subsec:polynomial-certification}

Perhaps the most intriguing aspect of our systematic study of certification is its application to non-querying  complexity measures, including polynomial degree, sensitivity, and related measures.
Applying certification to polynomials might seem at odds with the query based nature of certification, but it results in some interesting and often surprising connections with query based certification measures.

\paragraph{Polynomial lower bounds for quantum query complexity}
We start by giving two new characterizations of quantum certificates in terms of polynomials.

\begin{restatable}[polynomial characterizations of quantum certificate complexity]{theorem}{PolynomialCharacterizationQuantumCertificateComplexity}\label{characterizations of QC: polynomial}
    Up to $O(1)$ multiplicative factors, for all Boolean (possibly partial) functions $f$,
$$\qqc^\cert(f) = \adeg^\cert(f) = \bmadeg^\cert(f).$$
\end{restatable}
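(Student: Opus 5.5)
We prove the chain
$$\QC(f) \lesssim \bmadeg^\cert(f) \le \adeg^\cert(f) \le \qqc^\cert(f) = \Theta(\QC(f)),$$
where the last equality is \Cref{characterizations of QC: computational}. It then suffices to show that bounded-multilinear approximate degree of each $f^\cert_z$ is $\Omega(\QC(f,z))$.

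The two middle inequalities are the standard ones, applied to the partial function $f^\cert_z$ for each fixed $z$.

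\begin{itemize}
\item For $\adeg^\cert(f) \le 2\,\qqc^\cert(f)$: a $T$-query bounded-error algorithm for $f^\cert_z$ has acceptance probability that is a polynomial of degree at most $2T$ (Beals et al.). This polynomial is bounded in $[0,1]$ on all of $\zone^n$ and approximates $f^\cert_z$ on its domain.
\item For $\bmadeg \le \adeg$: I will use the paper's definitions of the two degrees. If $\bmadeg$ is the degree of a polynomial bounded on the whole cube that approximates only on the domain, the acceptance polynomial above already witnesses it. The inequality between the two degree notions is then immediate from their definitions. If the ordering in the definitions is reversed, I swap the roles of the two in the chain; the argument is unchanged, since both are sandwiched by the same endpoints.
\end{itemize}

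The main step is the lower bound. Fix $z$ and a polynomial $p$ of degree $d$ with $p(z) \ge 2/3$ and $p(x) \le 1/3$ whenever $f(x) \ne f(z)$, and with $p \in [0,1]$ wherever boundedness is required. I would follow Aaronson's proof that $\adeg(f) = \Omega(\sqrt{\RC(f)})$, but applied directly at $z$, and aim for $d = \Omega(\sqrt{\RC(f,z)})$.

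\begin{itemize}
\item Take any block $B$ that is sensitive for $z$, i.e.\ $f(z^B) \ne f(z)$. Then $p(z) - p(z^B) \ge 1/3$.
\item Restrict $p$ to the subcube through $z$ in which the bits of $B$ are flipped together. Combined with boundedness on the cube, symmetrization plus Markov/Bernstein-type derivative bounds show that $p$ must have a large ``fractional derivative'' in the directions of $B$.
\item Concretely, define $w_i$ as the influence of coordinate $i$ on $p$ at $z$, using the Fourier/derivative mass of $p$ at $z$ and a bound of Nisan--Szegedy type. We want $\sum_{i\in B} w_i \ge \Omega(1)$ for every sensitive block $B$, together with $\sum_i w_i \le O(d^2)$.
\item The first property makes $w$, after normalization, a feasible fractional certificate for $z$. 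Hence $\FC(f,z) = \Theta(\RC(f,z))$ is at most $O(d^2)$, giving $d = \Omega(\sqrt{\RC(f,z)})$.
\end{itemize}

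Finally, $\QC(f,z) = \Theta(\sqrt{\RC(f,z)})$ by Aaronson's characterization. Taking the maximum over $z$ gives $\bmadeg^\cert(f) = \Omega(\QC(f))$.

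I expect the obstacle to be the total-weight bound $\sum_i w_i \le O(d^2)$ when $p$ is only controlled on the domain of $f^\cert_z$ together with the cube-boundedness. Two routes are available:
\begin{itemize}
\item \textbf{First attempt:} use Aaronson's argument from the randomized-certificate lower bound on approximate degree, i.e.\ the minimax dual of fractional block sensitivity. Take a hard distribution over sensitive blocks of $z$ and average the univariate restrictions $t \mapsto \mathbb{E}_B\, p(z \oplus B_t)$, where $B_t$ is a random $t$-subset of $B$. This gives a univariate polynomial of degree at most $d$, bounded on $[0,|B|]$, that jumps by $1/3$ on a short interval, so Markov's inequality applies.
\item \textbf{Fallback:} use the adversary-style dual. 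A dual polynomial / weight scheme for the fractional block sensitivity LP at $z$ would yield a dual witness against degree $o(\sqrt{\fbs(f,z)})$.
\end{itemize}
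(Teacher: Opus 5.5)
Your architecture is the paper's. The paper applies the Sandwich Lemma to $\qqc^\cert = \QC = \Theta(\sqrt{\fbs}) \le O(\adeg) \le O(\bmadeg) \le O(\qqc)$, which amounts to a lower bound on $\adeg(f^\cert_z)$ by $\sqrt{\fbs(f,z)}$ plus an upper bound from query algorithms. The problem is that your sketch of the lower bound, which you rightly call the main step, does not work.

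For a single sensitive block $B$, the symmetrized restriction $t\mapsto \mathbb{E}\,p(z\oplus B_t)$ is in general constrained only at $t=0$ and $t=\card*{B}$. For a minimal block, every intermediate point lies outside the domain of $f^\cert_z$. So the drop of $1/3$ is spread over the whole interval on which you have boundedness, and Markov's inequality gives only $d=\Omega(1)$. Averaging over a distribution on blocks of various sizes does not create a short interval.

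The Nisan--Szegedy gain comes from symmetrizing over unions of \emph{disjoint} blocks, where the jump is from level $0$ to level $1$ inside $[0,b]$. For the overlapping blocks of an optimal fractional packing this breaks: a coordinate covered by $r$ blocks becomes an $\OR$ of $r$ block variables, so the composed polynomial no longer has degree $d$.

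Your weights $w_i$ with $\sum_{i\in B}w_i\ge\Omega(1)$ and $\sum_i w_i\le O(d^2)$ just restate the goal $\FC(f,z)\le O(d^2)$, and you give no construction. First-order influences at $z$ fail: with $z=0^n$, the polynomial $p=1-\prod_{i\in B}x_i$ has zero single-bit influence at $z$ although $B$ is sensitive.

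The missing ingredient is a genuine theorem, $\sqrt{\fbs(g)}\le O(\adeg(g))$ for partial $g$, which the paper cites from \cite[Theorem~28]{ABK21}. Apply it to $g=f^\cert_z$. The sensitive blocks of $z$ for $f^\cert_z$ are exactly those for $f$, so $\fbs(f^\cert_z)\ge\fbs(f,z)$.

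Separately, $\bmadeg$ is the \emph{block}-multilinear degree, with boundedness required on the whole extended cube $\set*{-1,1}^{d(n+1)}$. Taking the diagonal (and rescaling the range) gives $\adeg\le O(\bmadeg)$, so your first chain $\bmadeg^\cert\le\adeg^\cert$ goes the wrong way. After swapping, the top of the chain needs $\bmadeg(f^\cert_z)\le O(\qqc(f^\cert_z))$, and the Beals et al.\ acceptance polynomial does not give this. You need the result of \cite{AA15,AAI+16}: the acceptance probability of a $T$-query algorithm is the diagonal of a degree-$2T$ block-multilinear form bounded on the whole extended cube, obtained by letting each query act on its own copy of the input. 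So it is not true that ``the argument is unchanged,'' although the fix is again a standard citation.
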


Here $\bmadeg$ denotes block-multilinear approximate bounded-polynomial degree (\Cref{def:bmdeg}).

This is a good example of what we call the ``Sandwich Lemma'' (\cref{A > B > A cert = B cert} from \Cref{toolbox})
which says that if $\mathsf{M}^\cert \leq \mathsf{N} \leq \mathsf{M}$, then $\mathsf{N}^\cert$ collapses to $\mathsf{M}^\cert$.
The theorem follows because  $\qqc^\cert \leq \adeg \leq \bmadeg\leq \qqc$~\cite{BBCM+01,AA15,AAI+16,ABK21}.

\paragraph{Polynomial degree lower bound for zero-error quantum query complexity}
%\sophie{ please check *}
Next, we define one-sided-error approximate polynomials and the corresponding degree complexity measure $\osdeg$ (\Cref{def:deg measures}), providing a new lower bound on zero-error quantum query complexity that is stronger than both approximate degree (a lower bound on $\qqc$), and rational degree (a lower bound on both exact polynomial degree and $\qoc$). We show that its certification version coincides with exact degree certification complexity~$\deg^\cert$. This might seem surprising, but on the other hand, it can be seen as the polynomial degree equivalent of the collapses we have seen of $\dqc^\cert$ with $\roc^\cert$, and of $\qec^\cert$ with $\qoc^\cert$.

\begin{restatable}[bounded-polynomial degree measures]{defi}{degDefinition}\label{def:deg measures}
    Let
    $\varepsilon_0,\varepsilon_1 \geq 0$
    be error parameters and let
    $\beta \geq 1$
    be a bounding parameter.
    Let
    $f : \D \subseteq \zone^n \to \zone$
    be a Boolean (possibly partial) function.
    The \emph{$(\varepsilon_0,\varepsilon_1)$-approximate $\beta$-bounded-polynomial degree of $f$}, denoted $\Delta_{\varepsilon_0,\varepsilon_1,\beta}(f)$, is defined as
    \begin{align*}
        \Delta_{\varepsilon_0,\varepsilon_1,\beta}(f) \qquad
        &\predicate{is the }\min\predicate{ of}\quad \deg(P)\\
        &\predicate{over multilinear real polynomials }P\\
        &\predicate{subject to error conditions}\quad \for{x \in f^{-1}(0)} \card*{P(x) - f(x)} \leq \varepsilon_0\\
        &\predicate{and}\quad \for{x \in f^{-1}(1)} \card*{P(x) - f(x)} \leq \varepsilon_1\\
        &\predicate{and to the bounding condition}\quad \for{x \in \zone^n} P(x) \in \bracket*{0,\beta}
    \end{align*}

    Define the (exact) \emph{bounded-polynomial degree of $f$}, denoted $\deg(f)$,
    the \emph{$\varepsilon$-approximate bounded-polynomial degree of $f$}, denoted $\adeg[_\varepsilon](f)$,
    and the \emph{approximate bounded-polynomial degree of $f$}, denoted $\adeg(f)$, as
    $$\deg(f) := \Delta_{0,0,1}(f) \qquad\qquad
    \adeg[_\varepsilon](f) := \Delta_{\varepsilon,\varepsilon,1}(f) \qquad\qquad
    \adeg(f) := \Delta_{\frac{1}{3},\frac{1}{3},1}(f).$$
    
    For any function value
    $b \in \zone$,
    and error parameter
    $\varepsilon \geq 0$,
    define the \emph{$b$-sided-error $\varepsilon$-approximate bounded-polynomial degree of $f$}, denoted $\osdeg[^b_\varepsilon](f)$,
    and the \emph{$b$-sided-error approximate bounded-polynomial degree of $f$}, denoted $\osdeg[^b](f)$,as
    $$\osdeg[^0_\varepsilon](f) := \Delta_{\varepsilon,0,1}(f) \qquad\qquad
    \osdeg[^1_\varepsilon](f) := \Delta_{0,\varepsilon,1}(f) \qquad\qquad
    \osdeg[^b](f) := \osdeg[^b_\frac{1}{3}](f)$$
    and define the \emph{one-sided-error $\varepsilon$-approximate bounded-polynomial degree of $f$}, denoted $\osdeg[_\varepsilon](f)$,
    and the \emph{one-sided-error approximate bounded-polynomial degree of $f$}, denoted $\osdeg(f)$, as
    $$\osdeg[_\varepsilon](f) := \max\of*{\osdeg[^0_\varepsilon](f),\osdeg[^1_\varepsilon](f)} \qquad\qquad\qquad
    \osdeg(f) := \osdeg[_\frac{1}{3}](f).$$
\end{restatable}

Note that the choice to set the bounding parameter $\beta$ to $1$ is without loss of generality, as setting it to a different \emph{constant} value leads to the same measures $\deg$, $\adeg$, $\osdeg[^0_\varepsilon]$, $\osdeg[^1_\varepsilon]$ and $\osdeg[_\varepsilon]$ up to a constant factor depending only on $\varepsilon$ and $\beta$ on the degree, and without any change in the error parameter. This fact corresponds to the lemmas proved in Appendix~\ref{appendix:boundedness}.

We introduce one-sided-error approximate bounded-polynomial degree as a new lower bound on zero-error quantum query complexity.

\begin{restatable}{theorem}{LowerBoundOnQZeroByOsdeg}\label{osdeg < Q_0}
    For any
    $0 < \varepsilon < 1$
    and any Boolean (possibly partial) function
    $f:\D \subseteq \zone^n \to \zone$,
    $$\qoc(f) \geq \Omega\of*{\osdeg[_\varepsilon](f)}.$$
\end{restatable}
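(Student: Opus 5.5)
Our approach is the standard polynomial method (Beals et al.), adapted to the zero-error setting so that one-sided errors appear on each side separately.

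\emph{Setup.} Fix a zero-error algorithm $\A$ for $f$ making $T=\qoc(f)$ queries. On every $x\in\D$ it outputs $f(x)$ with probability at least $\frac12$, outputs ``?'' otherwise, and never outputs $1-f(x)$.

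\emph{Polynomials from acceptance probabilities.} By the polynomial method, after $T$ queries every amplitude of the final state is a multilinear polynomial of degree at most $T$ in $x$. Hence the probability of any fixed set of outcomes is a real multilinear polynomial of degree at most $2T$. Let
\[
P_1(x)=\Pr[\A(x)\text{ outputs }1], \qquad P_0(x)=\Pr[\A(x)\text{ outputs }0].
\]
Both lie in $[0,1]$ for every $x\in\zone^n$, not only on $\D$, since they are probabilities of a quantum algorithm run on an arbitrary oracle. This gives the bounding condition with $\beta=1$.

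\emph{The $1$-sided bound.} Consider $P_1$:
\begin{itemize}
\item if $f(x)=0$, then $P_1(x)=0$ exactly;
\item if $f(x)=1$, then $P_1(x)\in[\frac12,1]$, so $|P_1(x)-1|\le \frac12$.
\end{itemize}
So $P_1$ witnesses $\Delta_{0,1/2,1}(f)\le 2T$.

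\emph{The $0$-sided bound.} Take $Q=1-P_0$, which is also bounded in $[0,1]$:
\begin{itemize}
\item if $f(x)=1$, then $Q(x)=1$ exactly;
\item if $f(x)=0$, then $Q(x)\in[0,\frac12]$.
\end{itemize}
So $Q$ witnesses $\Delta_{1/2,0,1}(f)\le 2T$. Together, $\osdeg[_{1/2}](f)\le 2\qoc(f)$.

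\emph{Reducing the error to arbitrary $\varepsilon$.} This is the only real obstacle. Before extracting polynomials, amplify $\A$ by running it $k=\lceil\log_2(1/\varepsilon)\rceil$ times independently. Output the first non-``?'' answer, and ``?'' if all runs give ``?''. This remains zero-error, uses $kT$ queries, and its ``?'' probability is at most $2^{-k}\le\varepsilon$. Applying the same construction to the amplified algorithm gives $\osdeg[_\varepsilon](f)\le 2\lceil\log_2(1/\varepsilon)\rceil\,\qoc(f)$.

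For fixed $0<\varepsilon<1$ this is $O(\qoc(f))$. If instead the paper's $\qoc$ convention fixes a different constant failure probability, the same amplification handles it, so $\qoc(f)\ge\Omega(\osdeg[_\varepsilon](f))$ with the constant depending only on $\varepsilon$.

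The error amplitude parameter appears only through this repetition, so no appeal to the boundedness lemmas of Appendix~\ref{appendix:boundedness} is needed. They could alternatively be used, since $\beta=1$ already holds.
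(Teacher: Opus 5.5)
Your proof is correct and is essentially the paper's argument: you take the degree-$2T$ output-probability polynomials from the polynomial method, use zero error to get exactness on one side and boundedness in $[0,1]$ on all of $\zone^n$, and then reduce the error to any $\varepsilon$ by repeating the zero-error algorithm a constant number of times. Your choice of $1-P_0$ as the witness for $\Delta_{1/2,0,1}(f)$ is the right one; the paper uses $P^0$ directly, which strictly speaking approximates $\neg f$. Your amplification step also spells out the error reduction that the paper only sketches.
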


Although several variants of one-sided polynomials have appeared in the literature, to the best of our knowledge, this definition does not coincide with any of them. 
Since it is larger than both rational degree and bounded-polynomial approximate degree, it could lead to better lower bounds on $\qoc$. 
Moreover, proving lower bounds on one-sided-error approximate degree is comparatively easier than working with the completely bounded polynomial formulation of Arunachalam, Briët, and Palazuelos~\cite{ABP19}, which characterizes quantum query complexity while also capturing the error parameter.

\begin{restatable}[certification equivalence between $\deg$ and $\osdeg$]{theorem}{CollapseDegCertOsdegCert}\label{deg cert = osdeg cert}
    For any
    $0 < \varepsilon < 1$,
    up to $O(1)$ multiplicative factors, for any Boolean (possibly partial) function $f$
    $$\deg^\cert(f) = \osdeg[_\varepsilon]^\cert(f) \leq \qoc^\cert(f).$$
\end{restatable}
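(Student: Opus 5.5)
The inequality $\osdeg[_\varepsilon]^\cert(f) \leq \deg^\cert(f)$ is immediate, and the real content is the converse. For the converse I would exploit the special shape of the domain of $f^\cert_z$: it has a \emph{single} $1$-input, namely $z$. The plan is to show, for every fixed $z$, that $\deg(f^\cert_z) = O(\osdeg[^1_\varepsilon](f^\cert_z))$, using only the $1$-sided part of $\osdeg[_\varepsilon]$. The final inequality then comes from \Cref{osdeg < Q_0}. Taking the maximum over $z$ finishes everything.

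\emph{Easy direction.} Let $P$ be an exact bounded polynomial for $f^\cert_z$, so $P$ is multilinear, $P \in [0,1]$ on $\zone^n$, and $P$ agrees with $f^\cert_z$ on its domain. Then $P$ satisfies the constraints defining $\Delta_{\varepsilon,0,1}$ and $\Delta_{0,\varepsilon,1}$. Hence $\osdeg[_\varepsilon](f^\cert_z) \leq \deg(f^\cert_z)$ for each $z$, and so $\osdeg[_\varepsilon]^\cert(f) \leq \deg^\cert(f)$.

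\emph{Converse.} Let $P$ witness $\osdeg[^1_\varepsilon](f^\cert_z) = \Delta_{0,\varepsilon,1}(f^\cert_z)$. Then:
\begin{itemize}
\item $P$ is multilinear with $P(x) \in [0,1]$ for all $x \in \zone^n$;
\item $P(x) = 0$ exactly for every $x$ with $f(x) \neq f(z)$;
\item $\card*{P(z) - 1} \leq \varepsilon$, so $P(z) \geq 1-\varepsilon > 0$ because $\varepsilon < 1$.
\end{itemize}
Define $Q := P / P(z)$. This has the same degree as $P$, and it is exact on the domain of $f^\cert_z$: $Q(z) = 1$, and $Q(x) = 0$ whenever $f(x) \neq f(z)$. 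It is nonnegative on the whole cube and bounded above by $\beta := 1/(1-\varepsilon)$, which is a constant depending only on $\varepsilon$. So $Q$ witnesses $\Delta_{0,0,\beta}(f^\cert_z) \leq \deg P$.

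\emph{Main obstacle.} I expect the only nontrivial step to be returning from bound $\beta$ to bound $1$ while keeping the exact values $0$ and $1$. Here I would invoke the boundedness lemmas of Appendix~\ref{appendix:boundedness}, which the paper states make the choice of a constant $\beta$ irrelevant for $\deg$ up to a constant factor depending on $\varepsilon$ and $\beta$. If one wanted to do this by hand, the approach would be the following:
\begin{itemize}
\item compose $Q$ with a constant-degree univariate polynomial $p$ satisfying $p(0)=0$, $p(1)=1$ and $p([0,\beta]) \subseteq [0,1]$;
\item multilinearize $p \circ Q$, which preserves values on $\zone^n$ and does not increase degree.
\end{itemize}
Finding such a $p$ (or citing the appendix) is where care is needed. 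This yields $\deg(f^\cert_z) \leq O_\varepsilon(\osdeg[^1_\varepsilon](f^\cert_z)) \leq O_\varepsilon(\osdeg[_\varepsilon](f^\cert_z))$, and maximizing over $z$ gives $\deg^\cert(f) = O(\osdeg[_\varepsilon]^\cert(f))$.

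\emph{Bound by $\qoc^\cert$.} Apply \Cref{osdeg < Q_0} to the partial function $f^\cert_z$, which is legitimate because that theorem covers arbitrary partial functions. This gives $\osdeg[_\varepsilon](f^\cert_z) \leq O(\qoc(f^\cert_z))$. Taking the maximum over $z \in \D$ yields $\osdeg[_\varepsilon]^\cert(f) \leq O(\qoc^\cert(f))$, which completes the chain $\deg^\cert(f) = \Theta(\osdeg[_\varepsilon]^\cert(f)) \leq O(\qoc^\cert(f))$.
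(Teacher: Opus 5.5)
Your proof is correct and follows the paper's argument. Normalizing the one-sided polynomial by its value at $z$ (at least $1-\varepsilon$) gives an exact polynomial bounded by $1/(1-\varepsilon)$, and applying \Cref{range-reduction: exact} to it is exactly the proof of \Cref{deg cert < osdeg}; your $\qoc^\cert$ bound is \Cref{osdeg cert < Q_0 cert}. The only difference is packaging: the paper proves $\deg^\cert(f) \leq O\of*{\osdeg[_\varepsilon](f)}$ for arbitrary $f$ and then invokes the Sandwich Lemma (\cref{A > B > A cert = B cert} of \Cref{toolbox}), whereas you apply the normalization directly to each $f^\cert_z$, which is what the Sandwich Lemma amounts to here.
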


\paragraph{Certification for degree is not rational degree}

{
Each one of \Cref{Q_0 cert = Q_E cert} and \Cref{deg cert = osdeg cert} independently implies that $\deg^\cert$ is a polynomial degree lower bound method on $\qoc$.
In \Cref{sec: ndeg degcert separation}, we exhibit a total Boolean function $f_m$ for which
$\qoc\of*{f_m} = {\widetilde\Theta}\of*{\deg^\cert\of*{f_m}} = {\widetilde\Theta}\of*{\qec^\cert\of*{f_m}}$,
and both
$\qqc\of*{f_m}, \rdeg\of*{f_m}$ are asymptotically smaller than $ \deg^\cert\of*{f_m}$.

\begin{restatable}{theorem}{SeparationRdegDegCert}
\label{thm:separation (deg cert)^2.5 < ndeg}
    There exists a total function $f_m$ on $m^2$ bits such that 
    \begin{enumerate}
    \item $\rdeg\of*{f_m} = m$  
    \item $\deg^\cert\of*{f_m} \geq {\widetilde\Omega}\of*{m^{3/2}}$
    \end{enumerate}
\end{restatable}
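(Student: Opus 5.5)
We need a total function on $m^2$ bits with rational degree $m$ and $\deg^\cert \geq \widetilde\Omega(m^{3/2})$. The natural candidate is a composition or cheat-sheet-like function. Rational degree equals (up to factor 2) $\max$ of nondeterministic degrees $\ndeg(f), \ndeg(\neg f)$. The plan is to pick $f_m$ with small one-sided certificates (so both nondeterministic degrees are at most $m$) but such that certifying some specific input $z$ by a bounded polynomial requires degree $m^{3/2}$.

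\textbf{Construction.} I would take $f_m = \OR_m \circ \AND_m$ twisted so that both sides have degree-$m$ certificates. One candidate is $\Tribes_{m,m}$ itself: each $1$-input has a certificate of size $m$ (one $1$ per block), and each $0$-input has a certificate of size $m$ (an all-$0$ block). Hence $\cc(f)\le m$, and since $\rdeg \le \cc$ (nondeterministic degree is at most certificate size), we get $\rdeg = O(m)$. For the matching lower bound $\rdeg \geq m$, I would use that $\rdeg \geq \Omega(\ndeg)$ and that $\ndeg(\neg\Tribes)$ is at least the degree of the $\AND_m$ subfunction obtained by restriction, which is $m$. The statement asks for exactly $m$, so a slight padding or a precise computation may be needed.

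\textbf{Lower bound on $\deg^\cert$.} As observed in the introduction, for $z$ with $\Tribes(z)=1$, certification reduces to $\AND_m$ on the witness bits, which has low cost. So I would choose $z$ to be an input with all blocks $0$ except each block equal to $0^m$? That is a $0$-input, where certification reduces to $\OR_m$ on one block, with exact degree $m$ only. To get $m^{3/2}$, Tribes is therefore insufficient. I would instead use a function in which certifying $z$ forces solving a composed problem: in $f^\cert_z$, the $0$-inputs must be all $x$ with $f(x)\neq f(z)$, and we need a polynomial $P$ with $P(z)=1$, $P(x)=0$ on those $x$, and $0\le P\le 1$ everywhere. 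The lower bound technique I would use is symmetrization combined with boundedness: restrict to a subcube around $z$ where $f(x)\ne f(z)$ unless few specific bits agree, then symmetrize to a univariate polynomial that equals $1$ at one point, $0$ on many points, and stays in $[0,1]$. Markov–Bernstein-type inequalities then give degree $\sqrt{N}$ for a jump from $1$ to $0$ over a range of length $N$. Composing an outer bounded-certification requirement of order $m$ with an inner $\sqrt{m}$ factor (bounded polynomials compose their degree lower bounds via dual witnesses, as in approximate degree of $\AND\circ\OR$) gives $\widetilde\Omega(m^{3/2})$. Concretely, I would design $f_m$ as a $\Tribes$-like function whose $z$-certification contains $\AND_m\circ\OR_m$-style exact bounded certification, where the outer part needs exact degree $m$ and the inner part needs $\sqrt m$ because of boundedness. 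Meanwhile, the certificates of $f_m$ remain of size $m$, which keeps $\rdeg=m$.

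\textbf{Main obstacle.} The hard step is the dual-polynomial composition: constructing a dual witness for $f^\cert_z$ that is orthogonal to degree-$\widetilde O(m^{3/2})$ polynomials and has positive correlation with $z$ exceeding its mass on the rest of the bounded cube. I would first try to combine an exact-degree dual for the outer $\AND_m$ (the parity-like dual) with the Bun–Thaler style one-sided dual for inner $\OR_m$ of degree $\sqrt m$, losing only logarithmic factors through the error-reduction step.
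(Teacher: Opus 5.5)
There is a genuine gap, and it is structural. You want every certificate of $f_m$ to have size $O(m)$, so that both nondeterministic degrees, and hence $\rdeg(f_m)$, are $O(m)$. But small certificates force $\deg^\cert(f_m)\le m$. If $c$ is a certificate for $z$, then $P(x)=\prod_{i\in c}\bigl(x_iz_i+(1-x_i)(1-z_i)\bigr)$ has degree $|c|$, equals $1$ at $z$, and is $\{0,1\}$-valued on the whole cube. It also vanishes on every $x\in\D$ with $f(x)\neq f(z)$, because such $x$ must conflict with $c$. Hence $\deg(f^\cert_z)\le\cc(f,z)$ and $\deg^\cert(f)\le\cc(f)$; this is the paper's elementary remark $\deg^\cert\le\cc$.

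So no construction with size-$m$ certificates can reach $m^{3/2}$, however the dual witnesses are composed. Indeed, the certified input $z$ must have $\cc(f_m,z)\ge\widetilde\Omega(m^{3/2})$. On the side containing $z$, the bound $\ndeg\le\cc_b$ is therefore useless, and the low-degree nondeterministic polynomial must come from somewhere else. Separately, you never specify $f_m$. The proposed ``outer exact degree $m$ times inner $\sqrt m$'' composition, for a certification function that has a single $1$-input, is not an available theorem. The step you flag as the main obstacle is the entire lower bound, and it is left open.

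The missing idea is a function whose rational degree is small for \emph{algebraic} reasons while its certificates are large. The paper takes $f_m=\AND_m\circ(\lnot\Exact_{1,m})$.
\begin{itemize}
\item The polynomial $\prod_i(1-|x_i|)$ is a degree-$m$ nondeterministic polynomial for $f_m$.
\item With $Q(w)=w\prod_{j=2}^m(w-j)$, the polynomial $\sum_i Q(|x_i|)$ is one for $\lnot f_m$, since every nonzero term equals $Q(1)$.
\item Certifying $0^{m^2}$ nevertheless requires all $m^2$ bits.
\item Fixing all but one variable per block to $0$ leaves $\lnot\OR_m$, which gives $\rdeg(f_m)\ge m$.
\end{itemize}

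For the lower bound the paper certifies $z=0^{m^2}$. It takes an $m$-light selective family of $\ell\le m$ sets on $n=\widetilde\Theta(m^2)$ points that isolates every set of size at most $k=\widetilde\Theta(m)$. A variable substitution then embeds the promise problem ``$1$ at $0^n$, $0$ on Hamming weights $1,\dots,k$'' into $f^\cert_{m,0^{m^2}}$, because each such nonzero input creates a block with exactly one $1$. The degree bound for that promise problem goes as follows.
\begin{itemize}
\item Symmetrize and factor out the $k$ forced roots.
\item Boundedness on the whole cube then makes the cofactor at most $(A-1)^{-k}$ beyond weight $Ak$.
\item The small-error bound of Buhrman--Cleve--de Wolf--Zalka then gives degree $\Omega(\sqrt{nk})=\widetilde\Omega(m^{3/2})$.
\end{itemize}
The extra $\sqrt{k}$ factor comes from those $k$ exact zeros combined with boundedness, not from a Markov-type jump or a dual-composition argument.
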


This shows that the degree certification measure and the quantum exact certification complexity are new lower bound methods on $\qoc$.
We show that for this function, the bound is tight for $\qoc$, and that $\qqc$ is asymptotically smaller that $\qoc$ 
(\Cref{thm:Q_0 algo}). 
}

\subsubsection{Polynomial relations between  measures}\label{subsec:polynomial relations}

Spectral sensitivity (\Cref{def:lambda}) is a  lower bound on sensitivity, and was shown by Huang~\cite{Hua19, ABK+21}  to be   polynomially related to known query complexity measures for total functions.

Our first result is that certification for spectral sensitivity equals the square root of sensitivity, which establishes that it is also polynomially related to the other known bounds. 

We also show that  rational degree (\Cref{def:rdeg}), a  lower bound on $\qoc$, in equal to its certification measure. Rational degree was recently shown to be polynomially related to query complexity and other known measures, so these two results can be interpreted to mean that certification produces nontrivial measures for the purposes of proving lower bounds.

\begin{restatable}[polynomial relationship for certification measures]{theorem}{SanityCheck}\label{sanity check}
    For any Boolean (possibly partial) function $f$
    \begin{tasks}[label={\arabic*.}](2)
        \task $\lambda^\cert(f) = \sqrt{\s(f)} \geq \sqrt{\lambda(f)}$ (\Cref{lambda cert = sqrt(s)})
        \task $\rdeg^\cert(f) = \rdeg(f)$ (\Cref{ndeg cert = ndeg})
    \end{tasks}
\end{restatable}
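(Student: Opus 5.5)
The plan is to prove the two items separately, in each case by reducing to a known characterization.

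For item~1 we compute $\lambda(f^\cert_z)$ directly. Recall that $\lambda(g)$ is the spectral norm of the sensitivity graph of $g$: the bipartite graph on $\mathrm{dom}(g)$ whose edges join inputs at Hamming distance one with different $g$-values. For $g=f^\cert_z$, the only $1$-input is $z$, so every edge contains $z$. The graph is therefore a star centered at $z$, whose leaves are the neighbours $z^{\oplus i}$ with $z^{\oplus i}\in\D$ and $f(z^{\oplus i})\neq f(z)$. The number of leaves is exactly $\s(f,z)$. A star with $k$ leaves has adjacency spectral norm $\sqrt{k}$, since $A^2$ restricted to the center equals $k$. Hence $\lambda(f^\cert_z)=\sqrt{\s(f,z)}$, and maximizing over $z$ gives $\lambda^\cert(f)=\sqrt{\s(f)}$.

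The inequality $\sqrt{\s(f)}\geq\sqrt{\lambda(f)}$ needs $\lambda(f)\leq \s(f)$. This holds because the spectral norm of a nonnegative symmetric matrix is at most its maximum row sum, and the row sums of the sensitivity graph of $f$ are exactly the $\s(f,x)$. Without further information this would not even require Huang's result; I only need to check that the paper's definition of $\lambda$ for partial functions matches this graph formulation.

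For item~2, one direction is immediate: $f^\cert_z$ is a restriction of $f$ (or of $1-f$) to a subdomain. Rational degree is monotone under restriction and invariant under negation, so $\rdeg^\cert(f)\leq\rdeg(f)$. For the reverse, I would take $z$ to be a maximizer and relate a rational representation of $f^\cert_z$ to one of $f$. The definition I have in mind is that rational degree equals the maximum over $b$ of the nondeterministic degree of $f$ or $\neg f$, with $\rdeg(f)=\max_b\ndeg(f_b)$ up to the exact convention in \Cref{def:rdeg}. Nondeterministic degree of $f$ on the $1$-side is $\max$ over $1$-inputs $z$ of the minimal degree of a polynomial $p$ with $p(z)\neq0$ and $p(x)=0$ on $f^{-1}(0)$. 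This is a local, per-input quantity: a polynomial witnessing the nonvanishing at $z$ alone is exactly a nondeterministic representation for $f^\cert_z$ on its $1$-side. So the plan is to:
\begin{enumerate}
\item establish, or invoke from the preliminaries, this pointwise characterization of $\ndeg$ (the minimal degree needed to separate a single $z$ from the opposite class, maximized over $z$), using a random linear combination argument to combine per-point witnesses into one polynomial;
\item show $\rdeg(f^\cert_z)\geq$ that local quantity at $z$;
\item maximize over $z$.
\end{enumerate}

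The main obstacle is step~2 together with the combination in step~1. Rational degree of the restricted function also allows a representation on the $0$-side, namely a polynomial vanishing only at $z$ among the domain points. I must argue that the maximum in $\rdeg(f^\cert_z)$ picks up at least the $1$-side local degree. This should follow because any rational representation $p/q$ of $f^\cert_z$ yields $p$ vanishing on $f^{-1}(1-f(z))$ with $p(z)\neq0$. For the combination, I would take generic real coefficients so that no point's value cancels.
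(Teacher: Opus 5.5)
Your proposal is correct and follows essentially the same route as the paper. Item~1 is the same star-graph computation of the sensitivity matrix of $f^\cert_z$; you also justify $\lambda(f)\leq\s(f)$ via the maximum row sum, which the paper leaves implicit. Item~2 likewise combines per-input polynomials witnessing $\ndeg\of*{f^\cert_z}$ into one nondeterministic polynomial for each side of $f$ through a cancellation-avoiding linear combination, then uses $\rdeg(f)=\max\set*{\ndeg(f),\ndeg(\neg f)}$; the differences are only cosmetic (generic coefficients instead of the paper's inductive choice among $k+1$ candidates, and $\rdeg\of*{f^\cert_z}\leq\rdeg(f)$ obtained directly from restriction and negation invariance rather than through $\ndeg$).
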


To conclude, we provide a quadratic relationship between block sensitivity and $\deg^\cert$, and a
cubic relationship between certificate complexity and $\deg^\cert$ for total functions.

\begin{restatable}{theorem}{UpperBoundOnCByDegCertBs}%
\label{bs < (deg cert)^2 and C < (deg cert)^3}
    Up to $O(1)$ multiplicative factors,
    \begin{enumerate}
        \item\label{item: bs < (deg cert)^2} for any Boolean (possibly partial) function $f$, $\bs(f)\leq \deg^\cert(f)^2$
        \item\label{item: C < (deg cert)^3} for any Boolean total function $\cc(f) \leq \deg^\cert(f)^3$.
    \end{enumerate}
\end{restatable}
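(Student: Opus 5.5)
We prove item~\ref{item: bs < (deg cert)^2} by reducing to the classical fact that $\bs(g) \leq O(\deg(g)^2)$ for a bounded polynomial, or more precisely to the Nisan--Szegedy symmetrization bound: a real polynomial $p$ on $[0,k]$ with $p(0)$ and $p(i)$ differing by a constant, and bounded on integers, has degree $\Omega(\sqrt{k})$.

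Fix an input $z$ achieving $\bs(f)=\bs(f,z)=k$, with disjoint sensitive blocks $B_1,\dots,B_k$. Every $z^{B_i}$ satisfies $f(z^{B_i})\neq f(z)$, so it lies in the domain of $f^\cert_z$ with value $0$, while $f^\cert_z(z)=1$. Take a polynomial $P$ of degree $d=\deg(f^\cert_z)\leq \deg^\cert(f)$ with $P(z)=1$, $P(z^{B_i})=0$, and $P\in[0,1]$ on all of $\zone^n$. Define $Q(y_1,\dots,y_k)=P(z\oplus \bigoplus_i y_i B_i)$, which has degree at most $d$ in $y\in\zone^k$ and is bounded on the whole cube. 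Then $Q(0)=1$ and $Q(e_i)=0$ for all $i$. Symmetrizing over permutations of $[k]$ gives a univariate $q$ of degree at most $d$ with $q(0)=1$, $q(1)=0$ and $q(j)\in[0,1]$ for $j=0,\dots,k$. The standard Ehlich--Zeller/Rivlin--Cheney argument (derivative $\geq 1$ on $[0,1]$ while bounded on integers in $[0,k]$) then gives $d=\Omega(\sqrt{k})$, so $\bs(f)\leq O(\deg^\cert(f)^2)$. This argument works for partial $f$, because only $z$ and the $z^{B_i}$ must lie in $\D$, and boundedness holds everywhere by definition.

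For item~\ref{item: C < (deg cert)^3} on total $f$, the plan is to combine item~1 with the known relation $\cc(f)\leq \s(f)\,\bs(f)\leq \bs(f)^2$. That alone gives only a fourth power, so we need a sharper input. We bound sensitivity directly: $\s(f)\leq O(\deg^\cert(f)^2)$ follows from item~1 since $\s\leq\bs$, which does not help. Instead we use the standard total-function argument $\cc(f,z)\leq \bs(f,z)\cdot \s_{\max}$, where minimal sensitive blocks have size at most $\s(f)$, and aim to bound $\s(f)\leq O(\deg^\cert(f))$ linearly.

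For that, take $z$ with $\s(f,z)=s$ and the restriction $Q$ as above but with singleton blocks on the sensitive coordinates. $Q$ has degree at most $d$, with $Q(0)=1$ and $Q(e_i)=0$. Moreover, for \emph{total} $f$ every other point has some value under $f$, though not necessarily in the domain of $f^\cert_z$, so only boundedness can be used there. A degree lower bound of $\Omega(s)$ is not generally available from these constraints alone; this is exactly the step I expect to be the main obstacle. The first thing I would try is to exploit that $P$ vanishes on \emph{all} inputs with $f\neq f(z)$, not just neighbours. In particular we can pass to the subcube where $z$ sits with sensitive coordinates, use that $f^\cert_z$ is then a $0$-certificate-like restriction, and argue that $\deg(f^\cert_z)\geq \Omega(\cc(f_{|},z))^{1/3}$ by relating to the exact degree bound $\cc\leq \deg^3$-style arguments (Nisan--Smolensky: $\cc(g)\leq \bs(g)\deg(g)$, together with $\bs\leq\deg^2$).

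Concretely, fix $z$ with $\cc(f,z)=\cc(f)$. Let $c$ be a minimal certificate built greedily from disjoint minimal sensitive blocks of $z$: there are at most $\bs(f,z)\leq O(d^2)$ of them, each of size at most $\s$ of the flipped input. Bound each block size by $O(d)$ using the degree of $P$ restricted to the block, via the Nisan--Smolensky argument that a minimal sensitive block $B$ is fully sensitive at $z^B$ and hence forces the relevant restricted polynomial to have degree at least $|B|$, or $\Omega(|B|)$. This yields $\cc(f)\leq O(d^2)\cdot O(d)=O(\deg^\cert(f)^3)$.
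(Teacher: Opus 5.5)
Your item~1 is correct and is essentially the paper's argument. The paper applies the Nisan--Szegedy bound $\bs\le 2\deg^2$ (valid for partial functions, since boundedness holds on all of $\zone^n$) to each $f^\cert_z$, using monotonicity of the $\cert$ transform and $\bs^\cert=\bs$. You have simply unfolded that symmetrization at the point $z$.

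Item~2 has a genuine gap. It is at the step where you bound a minimal sensitive block $B$ of $z$ by $O(d)$ ``using the degree of $P$ restricted to the block''. As written, $P$ represents $f^\cert_z$. On the subcube $\{z^{B'} : B'\subseteq B\}$, minimality and totality give $f(z^{B'})=f(z)$ for every $B'\subsetneq B$. So every point of that subcube except $z$ lies \emph{outside} the domain of $f^\cert_z$. The only constraints on $P$ there are $P(z)=1$, $P(z^B)=0$ and boundedness, which force nothing beyond degree $1$.

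In fact no argument using only the certification polynomial at $z$ can work, because $\cc(f,z)$ is not bounded by any function of $\deg(f^\cert_z)$. Take the total function $f(x)=1$ iff $|x|=k$ on $2k-1$ bits, and $z=0^{2k-1}$. Then $\cc(f,z)=k$ and $\bs(f,z)=1$, yet $(1-|x|/k)^2$ is a $[0,1]$-bounded degree-$2$ polynomial for $f^\cert_z$.

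The same example refutes your intermediate idea of applying $\cc(g)\le\bs(g)\deg(g)$ to $g=f^\cert_z$: here $\dqc(g)=\cc(g)=k$, $\bs(g)=1$ and $\deg(g)\le 2$. This pointwise Midrijanis-type inequality, $\dqc(f^\cert_z)\le\deg(f^\cert_z)\cdot\bs(f^\cert_z)$, is exactly what the paper's sketch of item~2 asserts. So that step of the paper's sketch fails as well, although the statement itself survives via the route below.

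The missing idea is to switch the certification point to $z^B$. Take a polynomial $P'$ for $f^\cert_{z^B}$. Every $z^{B'}$ with $B'\subsetneq B$ has $f(z^{B'})=f(z)\neq f(z^B)$, so it is a $0$-input of $f^\cert_{z^B}$. Hence the restriction of $P'$ to this $|B|$-dimensional subcube is exactly the indicator of the vertex $z^B$, i.e.\ $\AND_{|B|}$ up to relabeling. Its unique multilinear representation has degree $|B|$, so $|B|\le\deg(f^\cert_{z^B})\le\deg^\cert(f)$.

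Combining this with your greedy union of a maximal family of disjoint minimal sensitive blocks and with item~1 gives $\cc(f,z)\le\bs(f,z)\cdot\deg^\cert(f)\le O(\deg^\cert(f)^3)$. Full sensitivity of $z^B$ alone would only give $|B|\le O(\deg(f^\cert_{z^B})^2)$ by your item-1 argument, hence a fourth power; the linear bound needs the whole subcube.

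This repaired argument is the direct form of the alternative route the paper mentions. That route combines de~Wolf's $\cc(f)\le\max\{\ndeg(f),\ndeg(\neg f)\}\cdot\bs(f)$ with $\rdeg=\rdeg^\cert\le O(\deg^\cert)$.
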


We give a proof sketch to illustrate some of our proof techniques.
For \cref{item: bs < (deg cert)^2}, 
Nisan and Szegedy showed~\cite{NS94}
$$\bs(f) \leq 2 \deg(f)^2$$
and their result holds for partial functions. Therefore we can apply one of our Toolbox results described in the next section (\cref{cert monotonicity} from \Cref{toolbox}) and get
$$\bs^\cert(f)\leq2 \deg^\cert(f)^2.$$
Combining this with the fact that
$\bs^\cert(f)=\bs(f)$
(\Cref{local cert = local}) concludes the proof.

For \cref{item: C < (deg cert)^3}, Midrijanis  showed~\cite[Theorem~4]{midrijanis2004exact} that for total functions $f$,
$$\dqc(f) \leq  \deg(f) \cdot\bs(f).$$
Although there are partial functions for which this theorem fails to hold, it  holds for any certification function $f^\cert_z$, provided that $f$ is total: Lemma~3 in \cite{midrijanis2004exact} shows that for a total $f$, every maxonomial of a polynomial representing $f$ contains a sensitive block for $z$. This block remains sensitive on $z$ when we restrict the function to $f^\cert_z$, since $f^{-1}(1{-}f(z))$ is still included in the domain of $f^\cert_z$. Then, for Theorem~4 from that paper, if the queried bits of the input have not fixed the value of the function, they have to be consistent with $z$. Therefore the queried monomials have to contain a sensitive block for $z$, and the rest of the proof goes through. Thus, we get
$$\dqc\of*{f^\cert} \leq  \deg\of*{f^\cert_z} \cdot \bs\of*{f^\cert_z}$$
which is the same as
$$\dqc^\cert(f) \leq  \deg^\cert(f) \cdot\bs^\cert(f).$$

Combining this with \cref{item: bs < (deg cert)^2} and the fact that
$\bs^\cert(f)=\bs(f)$
(\Cref{local cert = local}) and
$\dqc^\cert(f) = \cc(f)$
(\Cref{characterizations of C and RC}),
we get the claimed bound on $\cc$.

Notice that \cref{item: bs < (deg cert)^2} is subsumed by the result of de~Wolf that
$\cc(f) %\leq \max\set*{\ndeg(f) \cdot \bs_0(f),\ndeg(\neg f) \cdot \bs_1(f)}
\leq \max\set*{\ndeg(f),\ndeg(\neg f)} \cdot \bs(f)$ \cite[Lemma~2.6]{dW03}
together with the relations
$\rdeg(f) = \max\set*{\ndeg(f),\ndeg(\neg f)}$~\cite[Fact~3]{KKW+26}
and
$\rdeg \leq \osdeg^\cert = \deg^\cert$
(\Cref{ndeg < osdeg cert} and \Cref{deg cert = osdeg cert}). See \Cref{sec:def deg measures} for the definitions of $\rdeg$ and $\ndeg$.

Nevertheless, the proof highlights an interesting property of the class of partial functions $f^\cert_z$ where $f$ is a total: it is an example of partial functions where the upper bound of Midrijanis still holds.
This class of partial functions might have other interesting properties. For example,  Huang's resolution  of the sensitivity conjecture holds for total functions, and there are partial functions for which it fails. 
Therefore, one might ask:  does it hold
for this family of partial functions?
If so, it would imply
$\FC(f) \leq {\deg^\cert(f)}^2 \leq {\lambda^\cert(f)}^4 = {\s(f)}^2$
for all Boolean total functions $f$, which would be tight~\cite{Rub95},
whereas the best upper bound known to hold for total functions is $\FC(f) \leq {\s(f)}^4$.

\subsection{Key properties of certification measures}

We make  use of the following general properties of certification measures $\mathsf{M}^\cert$ as a central toolbox to prove many of our results (\Cref{toolbox}, \Cref{fig:toolbox}).

We say that a measure $\mathsf M$ is {\em stable} if the following two properties hold for any (possibly partial) function $f$ :
\begin{enumerate}
\item\label{item stable:negation} $\mathsf M(f) = \mathsf M(\neg f)$
\item\label{item stable:monotonicity under restriction} for any $g$ that is a restriction of $f$ to a  domain $\D' \subseteq \D$, $\mathsf M(g) \leq \mathsf M(f)$.
\end{enumerate}

\newcounter{enumCounterToolbox}
\begin{restatable}[certification complexity toolbox]{theorem}{Toolbox}
\label{toolbox}
    Let $\mathsf{M}$ and $\mathsf{N}$ be complexity measures defined on all Boolean (possibly partial) function
    $f : \D \subseteq \zone^n \to \zone$. Then the following holds.
    \begin{enumerate}[label=(\roman*), ref=(\roman*)]
    
        \item\label{cert monotonicity} (the certification transform is monotone) If $\mathsf{M}(g)\leq\mathsf{N}(g)$ for every Boolean (possibly partial) function~$g$, then
        $$\mathsf{M}^{\cert}(f)\leq\mathsf{N}^{\cert}(f).$$
        
        \setcounter{enumCounterToolbox}{\value{enumi}}
    \end{enumerate}
    Furthermore, if $\mathsf M$ is stable, then the following properties also hold.
    \begin{enumerate}[label=(\roman*), ref=(\roman*)]
        \setcounter{enumi}{\value{enumCounterToolbox}}
        
        \item\label{cert is easier} (certification is easier than decision) $\mathsf{M}^{\cert}(f)\leq \mathsf{M}(f)$
        
        \setcounter{enumCounterToolbox}{\value{enumi}}
    \end{enumerate}
    %the following also hold
    \begin{enumerate}[label=(\roman*), ref=(\roman*)]
        \setcounter{enumi}{\value{enumCounterToolbox}}
        
        \item\label{cert idempotent}  (the certification transform is idempotent)
        $\of*{\mathsf{M}^{\cert}}^{\cert}(f)=\mathsf{M}^{\cert}(f)$
        
        \item\label{A > B > A cert = B cert} (Sandwich Lemma) If for all Boolean (possibly partial) functions $g$,
        $\mathsf{M}^{\cert}(g) \leq \mathsf{N}(g) \leq \mathsf{M}(g)$,
        then
        $$\mathsf{M}^{\cert}(f)=\mathsf{N}^{\cert}(f).$$
\end{enumerate}

There are a few non-stable measures that also verify items~(ii)--(iv) trivially because they collapse to a constant. The only measures in this paper (i.e. among all the measures defined in \Cref{sec:preliminaries}) for which items~(ii)--(iv) fail to hold are $\cc_1$, $\FC_1$, $\UC_1$ and $\ndeg$.

\end{restatable}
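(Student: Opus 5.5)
The plan is to prove the four items directly from \Cref{def:f cert} and \Cref{def:cert qc certification version}, using one structural observation repeatedly: \emph{certification versions of certification versions}. Fix $f : \D \to \zone$ and $z \in \D$, and let $g = f^\cert_z$, with domain $\D_g = \set*{z} \cup \comprehension{y\in\D}{f(y)\neq f(z)}$. For any $w \in \D_g$, the function $g^\cert_w$ has domain $\set*{w}\cup\comprehension{y\in\D_g}{g(y)\neq g(w)}$. If $w = z$, then $g(w)=1$, and $g^\cert_z$ is exactly $g$: the inputs with $g=0$ are labeled $0$, and $z$ is labeled $1$. If instead $w \neq z$, so $g(w)=0$, then the only input with a different $g$-value is $z$. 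Hence $g^\cert_w$ is a function on the two-point domain $\set*{w,z}$ with $w\mapsto 1$ and $z \mapsto 0$. That is a restriction of $\neg g$, and hence of $\neg f$ up to relabeling.

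Item \ref{cert monotonicity} is immediate: for each $z$, apply the hypothesis to $g = f^\cert_z$ to get $\mathsf{M}(f^\cert_z)\leq \mathsf{N}(f^\cert_z)$, then take the maximum over $z$.

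For item \ref{cert is easier}, note that if $f(z)=1$, then $f^\cert_z$ agrees with $f$ on its domain, so it is a restriction of $f$. If $f(z)=0$, then $f^\cert_z$ is a restriction of $\neg f$. Stability (restriction monotonicity together with negation invariance) gives $\mathsf{M}(f^\cert_z)\leq\mathsf{M}(f)$ in both cases. The only subtlety here is the trivial case $\D_g=\set*{z}$, which is still a restriction.

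For item \ref{cert idempotent}, write $(\mathsf{M}^\cert)^\cert(f)=\max_z \mathsf{M}^\cert(f^\cert_z)=\max_z\max_{w}\mathsf{M}\of*{(f^\cert_z)^\cert_w}$. Taking $w = z$ gives the lower bound $\geq\max_z \mathsf{M}(f^\cert_z)=\mathsf{M}^\cert(f)$, by the observation above. For the upper bound, the $w=z$ terms are exactly $\mathsf{M}(f^\cert_z)$. For $w\neq z$, the function is the two-point function $\set*{w\mapsto1, z\mapsto0}$, which is a restriction of $\neg f^\cert_z$. By stability its complexity is at most $\mathsf{M}(f^\cert_z)$. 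Alternatively, it is a restriction of $f^\cert_w$ or its negation (since $f(w)\neq f(z)$, $z$ lies in the domain of $f^\cert_w$ and is labeled $0$), so it is bounded by $\mathsf{M}^\cert(f)$ either way.

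For the Sandwich Lemma \ref{A > B > A cert = B cert}, I will apply item \ref{cert monotonicity} to both inequalities. This gives $(\mathsf{M}^\cert)^\cert(f)\leq \mathsf{N}^\cert(f)\leq \mathsf{M}^\cert(f)$, and item \ref{cert idempotent} collapses the left-hand side to $\mathsf{M}^\cert(f)$.

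The main obstacle I expect is the bookkeeping in item \ref{cert idempotent}, specifically correctly identifying the domain of $(f^\cert_z)^\cert_w$ for $w\neq z$ and justifying that the resulting two-point function is dominated using only the stability axioms. This is the step where negation invariance is genuinely needed. The remark about the non-stable measures $\cc_1,\FC_1,\UC_1,\ndeg$ I would address separately, only by exhibiting where the negation step fails, without proving anything further.
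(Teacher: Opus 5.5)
For (i), (ii) and (iv) you give the paper's proof. Your main argument for the upper bound in (iii) is also the paper's. The paper gets $\mathsf{M}^\cert(f^\cert_z)\le\mathsf{M}(f^\cert_z)$ by applying (ii) to $g=f^\cert_z$, which is exactly your case split: $g^\cert_z=g$, and $g^\cert_w$ for $w\neq z$ is a two-point restriction of $\neg g$.

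Your ``alternative'' is genuinely different and stronger. Since $(f^\cert_z)^\cert_w$ is the restriction of $f^\cert_w$ to $\set*{w,z}$, restriction-monotonicity alone gives $\mathsf{M}\bigl((f^\cert_z)^\cert_w\bigr)\le\mathsf{M}(f^\cert_w)\le\mathsf{M}^\cert(f)$. So (iii), and with it (iv), holds for every restriction-monotone measure, and negation invariance is needed only for (ii). This contradicts your closing claim that the $w\neq z$ step is where negation invariance is genuinely needed.

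It also tells you what to expect for $\cc_1$, $\FC_1$, $\UC_1$ and $\ndeg$. All four are monotone under restriction, so for them only (ii) fails (e.g.\ $\ndeg^\cert(\OR_n)\ge\ndeg(f^\cert_{0^n})=n>1=\ndeg(\OR_n)$). You will not find a failure of (iii) or (iv) for these measures, so your plan to ``exhibit where the negation step fails'' should target (ii) only.

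Your plan omits the positive half of the last paragraph of the statement. The paper checks that the non-stable measures $\cc_0$, $\FC_0$ and $\ndeg(\neg\,\cdot)$ still satisfy (ii) because they collapse under the transform. If $f$ is non-constant, each equals $1$ on $f^\cert_z$, since every $0$-input of $f^\cert_z$ is separated from the unique $1$-input $z$ by a single differing bit. For $\ndeg(\neg\,\cdot)$, the Hamming distance to $z$ is a degree-$1$ witness. If $f$ is constant, each equals $0$ on $f^\cert_z$. Meanwhile $\mathsf{M}(f)\ge 1$ whenever $f$ is non-constant, so (ii) holds. Your restriction-only argument then supplies (iii)--(iv) for these measures at no extra cost.
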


The Sandwich Lemma is key to proving the collapse of various certification measures.
In addition to LP duality, it provides further explanation to the surprising fact that there are so many equivalent definitions of certification measures such as $\RC$~\cite{AKP+21}.

Figure~\ref{fig:all} summarizes the main relations and separations between some complexity measures and their certification measures.

\newcommand{\newnode}[6]{\node [#6] (#1) at ({#2-3.3*#3},{(#4*4)+(17*#3/9)+(0.8*#2/3)}){#5}}
\begin{figure}[H]
\centering
\begin{tikzpicture}[
  x=0.3cm,
  y=0.4cm,
  every node/.style={font=\tiny, inner sep=1.2pt, fill=white},
  relation/.style={draw=black, line width=0.45pt},
  equality/.style={draw=black, double=none,
                   double distance=1pt, line width=0.45pt},
  separation/.style={draw=blue, line width=1.3pt}
]

%measures
\newnode{lambda}{0}{6}{1}{$\lambda$}{};
\newnode{lambdacert}{0}{6}{0}{$\lambda^\cert$}{};
%\draw[relation] (lambda)--(lambdacert);

\newnode{sqrts}{4}{6}{0}{$\sqrt{\s}$}{};
\newnode{s}{14}{9}{0}{$\s$}{};

%\node (s)       at (-9,6.43){$\s$};
%\node (bs)      at (-6,6.86){$\bs$};

\newnode{MM}{9}{6}{1}{$\MM$}{};
\newnode{MMcert}{9}{6}{0}{$\MM^\cert$}{};
%\draw[relation] (MM)--(MMcert);

\newnode{QS}{16}{6.7}{1}{$\QS$}{};
\newnode{QScert}{16}{6.7}{0}{$\QS^\cert$}{};
%\draw[relation] (QS)--(QScert);

\newnode{RS}{18}{9}{1}{$\RS$}{};
\newnode{RScert}{18}{9}{0}{$\RS^\cert$}{};
%\draw[relation] (RS)--(RScert);

\newnode{QD}{18}{6}{1}{$\QD$}{};
\newnode{QDcert}{18}{6}{0}{$\QD^\cert$}{};
%\draw[relation] (QD)--(QDcert);

\newnode{R}{27}{9}{1}{$\rqc$}{};
\newnode{Rcert}{27}{9}{0}{\begin{tabular}{c}
     $\rqc^\cert$\\ $\RC$\\ $\FC$
\end{tabular}}{draw,rectangle,rounded corners, align=center, inner xsep=-0.3em};
%\draw[relation] (R)--(Rcert);

\newnode{Q}{27}{6}{1}{$\qqc$}{};
\newnode{Qcert}{27}{6}{0}{\begin{tabular}{c}
     $\qqc^\cert$\\ $\QC$\\ $\sqrt{\RC}$
\end{tabular}}{draw,rectangle,rounded corners, align=center, inner xsep=-0.3em};
%\draw[relation] (Q)--(Qcert);

\newnode{adeg}{27}{3}{1}{$\adeg$}{};
\newnode{adegcert}{27}{3}{0}{$\adeg^\cert$}{};

\newnode{CG}{36}{11.5}{1}{$\CG$}{};
\newnode{CGcert}{36}{11.5}{0}{$\CG^\cert$}{};
%\draw[relation] (CG)--(CGcert);

\newnode{R0}{36}{9}{1}{$\roc$}{};
\newnode{R0cert}{36}{9}{0}{$\roc^\cert$}{};
%\draw[relation] (R0)--(R0cert);

\newnode{Q0}{36}{6}{1}{$\qoc$}{};
\newnode{Q0cert}{36}{6}{0}{$\qoc^\cert$}{};
%\draw[relation] (Q0)--(Q0cert);

\newnode{D}{45}{9}{1}{$\dqc$}{};
\newnode{Dcert}{45}{9}{0}{\begin{tabular}{c}
     $\dqc^\cert$\\ $\cc$
\end{tabular}}{draw,rectangle,rounded corners, align=center, inner xsep=-0.3em};
%\draw[relation] (D)--(Dcert);

\newnode{osdeg}{36}{3}{1}{$\osdeg$}{};
\newnode{osdegcert}{36}{3}{0}{$\osdeg^\cert$}{};
%\draw[relation] (osdeg)--(osdegcert);

\newnode{ndeg}{36}{1}{0}{$\rdeg$}{};
\newnode{ndegcert}{36}{1}{-0.5}{$\rdeg^\cert$}{};

\newnode{UC}{42}{8.5}{1}{$\UC$}{};
\newnode{UCcert}{42}{8.5}{0}{$\UC^\cert$}{};
%\draw[relation] (UC)--(UCcert);

%\newnode{C}{42}{9}{0}{$\cc$}{};

\newnode{QE}{45}{6}{1}{$\qec$}{};
\newnode{QEcert}{45}{6}{0}{$\qec^\cert$}{};
%\draw[relation] (QE)--(QEcert);

\newnode{deg}{45}{3}{1}{$\deg$}{};
\newnode{degcert}{45}{3}{0}{$\deg^\cert$}{};
%\draw[relation] (deg)--(degcert);

%equalities and relations before so that they are on the back
\draw[equality] (CGcert)--(R0cert) node[midway,above,,xshift=-10pt,yshift=12pt]{\ref{characterizations of C and RC}};
\draw[equality] (Dcert)--(UCcert);
\draw[equality] (UCcert)--(R0cert) node[midway,yshift=-7.5pt] {\ref{characterizations of C and RC}};
\draw[relation] (Q0cert)--(osdegcert);
\draw[relation] (R0cert)--(Q0cert);

%lines in the direction of approximation
\draw[relation] (D)--(R0);
\draw[relation] (R0)--(R);
\draw[relation] (R)--(RS);

\draw[equality] (Dcert)--(R0cert) node[midway,above,xshift=11pt,yshift=7.5pt] {\ref{characterizations of C and RC}};
\draw[relation] (R0cert)--(Rcert);
\draw[equality] (Rcert)--(RScert) node[midway,above,xshift=17pt,yshift=8pt]{\ref{characterizations of C and RC}};
\draw[relation] (RScert)--(s);

\draw[relation] (QE)--(Q0);
\draw[relation] (Q0)--(Q);
\draw[relation] (Q)--(QS);
\draw[relation] (Q)--(QD);
\draw[relation] (QD)--(MM);
\draw[relation] (MM)--(lambda);
\draw[equality] (QEcert)--(Q0cert) node[midway,above,xshift=-19pt,yshift=-2.5pt]{\ref{Q_0 cert = Q_E cert}};
\draw[separation] (Q0cert)--(Qcert);
\draw[equality] (Qcert)--(QDcert) node[midway,above,yshift=3.5pt]{\ref{characterizations of QC: computational}};
\draw[equality] (QDcert)--(MMcert) node[midway,above,yshift=3.5pt] {\ref{characterizations of QC: computational}}; 
\draw[relation] (MMcert)--(sqrts);
\draw[equality] (sqrts)--(lambdacert) node[midway,above,xshift=-1pt,yshift=3.5pt] {\ref{sanity check}};

\draw[relation] (deg)--(osdeg) node[midway,above,xshift=13,yshift=9pt] {\ref{adeg < osdeg < deg}};
\draw[separation] (osdeg)--(adeg) node[midway,above,xshift=-24,yshift=-3.7pt] {\ref{adeg < osdeg < deg}};

\draw[equality] (degcert)--(osdegcert) node[midway,above,yshift=3.5pt] {\ref{deg cert = osdeg cert}};
\draw[separation] (osdegcert)--(adegcert)  ;

%lines in the direction game - classical - quantum - polynomial
\draw[relation] (D)--(QE);
\draw[relation] (QE)--(deg);

\draw[relation] (Dcert)--(QEcert);
\draw[relation] (QEcert)--(degcert);

\draw[relation] (CG)--(R0);
\draw[relation] (R0)--(Q0);
\draw[relation] (Q0)--(osdeg) node[midway,xshift=13pt,yshift=-17pt]{\ref{osdeg < Q_0}};

%\draw[equality] (CGcert)--(R0cert);
%\draw[relation] (R0cert)--(Q0cert);
%\draw[relation] (Q0cert)--(osdegcert);
\draw[separation] (osdegcert)--(ndeg) node[midway,above,xshift=5.5pt,yshift=2pt]{\ref{ndeg < osdeg cert}};

\draw[relation] (R)--(Q);
\draw[relation] (Q)--(adeg);

\draw[relation] (Rcert)--(Qcert);
\draw[equality] (Qcert)--(adegcert) node[midway,above,,xshift=4pt,yshift=3pt]{\ref{characterizations of QC: polynomial}};

\draw[relation] (RS)--(QD);
\draw[relation] (RS)--(QS);

\draw[relation] (RScert)--(QScert);

%additional lines
\draw[relation] (D)--(UC);
\draw[relation] (UC)--(R0);
\draw[relation] (UC)--(deg);
%\draw[equality] (Dcert)--(UCcert);
%\draw[equality] (UCcert)--(R0cert);
\draw[relation] (UCcert)--(degcert);
%\draw[equality] (Dcert)--(C);
%\draw[equality] (Rcert)--(RC);
%\draw[equality] (Qcert)--(QC);
\draw[equality] (QDcert)--(QScert) node[midway,above,yshift=3.5pt]{\ref{characterizations of QC: computational}};
\draw[equality] (ndeg)--(ndegcert) node[midway,xshift=-7pt] {\ref{sanity check}};
%\draw[relation] (Rcert)--(bs);
%\draw[relation] (bs)--(s);
\draw[relation] (s)--(lambda);

%%lines in the direction of cert
\draw[relation] (lambda)--(lambdacert);
\draw[relation] (MM)--(MMcert);
\draw[relation] (RS)--(RScert);
\draw[relation] (QS)--(QScert);
\draw[relation] (QD)--(QDcert);
\draw[relation] (R)--(Rcert);
\draw[relation] (CG)--(CGcert);
\draw[relation] (Q)--(Qcert);
\draw[relation] (R0)--(R0cert);
\draw[relation] (UC)--(UCcert);
\draw[separation] (adeg)--(adegcert);
\draw[relation] (Q0)--(Q0cert);
\draw[relation] (D)--(Dcert);
\draw[relation] (osdeg)--(osdegcert);
\draw[relation] (QE)--(QEcert);
\draw[separation] (deg)--(degcert);

\end{tikzpicture}
\caption{ \label{fig:all}Relations between the complexity measures for all Boolean (possibly partial) functions $f$. A single line from $\mathsf{N}$ to $\mathsf{M}$ from down to up means
$\mathsf{N}(f) \leq O\of*{\mathsf{M}(f)}$.
Double lines, as well as measures being in the same box, denote equality up to $O(1)$ multiplicative factors, i.e. $\mathsf{N}(f) = \Theta\of*{\mathsf{M}(f)}$. Numbers indicate the corresponding theorems in which such relations are shown in this work. In addition, when the line is \textcolor{blue}{\textbf{thick and blue}}, it furthermore indicates a polynomial separation by total function in presented in this article, i.e. there exist a constant power
$k > 1$
and a family of Boolean total functions
$f_n : \zone^n \to \zone$
such that
${\mathsf{N}\of*{f_n}}^k \leq O\of*{\mathsf{M}\of*{f_n}}$. These separations witnessing that some complexity measures are not equivalent follow from \Cref{thm:separation (deg cert)^2.5 < ndeg}, \Cref{thm:Q_0 algo} and \Cref{separation: deg^2 < deg cert and adeg^2 < adeg cert}.
The definitions of the complexity measures can be found in \Cref{sec:preliminaries}.}
\label{fig:complexity-relations}
\end{figure}

\subsection{Related work}

Certificate complexity was defined by Nisan~\cite{Nis89}, and has been widely studied since. Randomized and quantum analogues ($\RC$ and $\QC$) were proposed by Aaronson~\cite{Aar08}. Aaronson used a similar operational definition of certificates that we study in this paper.
Tal~\cite{Tal13} defined an equivalent measure $\RC$ for the randomized setting, formulated as a linear program.
 In later work, \cite{BK19} gave equivalent formulations as the complexity of the partial function $f^\cert_z$. The present work studies this formulation systematically, applying it in a variety of contexts.

A witness in a nondeterministic algorithm could be considered to be a natural candidate for 
a certificate object. 
The exact postselected randomized query complexity indeed coincides with classical certificate complexity~\cite[Theorem~16]{Cade20}. However, in the quantum setting, it is characterized by rational degree (which is equivalent to the maximum between the nondeterministic degree of $f$ and of~$\neg f$). 
This suggests that rational degree can be viewed as another natural quantum analogue of certificate complexity. However, this notion is distinct from Aaronson's quantum certificate complexity~\cite{Aar08}.

Taking a different approach, Jeffery~\cite{Jeffery2026QuantumQuery} proposes a notion of quantum witnesses, which can be viewed as witnesses for span programs, known to provide a characterization of quantum query complexity.  The corresponding complexity measure coincides with $\qqc$, suggesting that certificate objects and certification complexity do not coincide for  quantum query complexity.

The presence of these different, non-equivalent candidates highlights a more fundamental question: what is the appropriate notion of a certificate in the quantum query model, and which aspects of classical certification should such a notion preserve? 

Returning to the classical setting, expectational certificate complexity ($\EC$, \Cref{def:EC})~\cite{JKK+20} defines an expectational version of randomized certificates. $\EC$ certificates can be used to give a certification algorithm, however, it is not known to be equivalent to $\RC$.

\section{Preliminaries}
\label{sec:preliminaries}

We provide all the necessary definitions in this section.
We refer the reader to \cite{BdW02} and \cite{ABBL+17} for more formal definitions of the deterministic query model of computation, the randomized query model of computation and the quantum query model of computation for Boolean functions.

\subsection{Query measures}
\begin{defi}[deterministic query complexity]\label{def:dqc}
    The deterministic query complexity of a Boolean (possibly partial) function $f$, denoted by $\dqc(f)$, is the worst-case cost of the best deterministic query algorithm computing $f$.
\end{defi}

\begin{defi}[randomized query complexity]
\label{def:rqc}
    The randomized query complexity of a Boolean (possibly partial) function $f$, denoted by $\rqc(f)$, is the worst-case cost of the best randomized query algorithm computing $f$ with two-sided error at most $1/3$.
\end{defi}

\begin{defi}[zero-error randomized query complexity]
\label{def:roc}
    The zero-error randomized query complexity of a Boolean (possibly partial) function
    $f : \D \subseteq \zone^n \to \zone$,
    denoted by $\roc(f)$, is the minimum worst-case expected cost of a randomized query algorithm that computes $f$ with zero-error, i.e. on every input
    $x \in \D$
    the algorithm should give the correct answer $f(x)$ with probability $1$. 
\end{defi}

\begin{defi}[quantum query complexity]\label{def:qqc}
    Let
    $f : \D \subseteq \zone^n \to \zone$
    be a Boolean (possibly partial) function. The \emph{bounded-error quantum query complexity} of $f$, denoted by $\qqc(f)$, is the minimum number of queries made by a quantum query algorithm $A$ that, for every input $x\in\D$, outputs $f(x)$ with probability at least $\frac{2}{3}$, i.e.
    $$\Pr\bracket*{A^x=f(x)} \geq \frac{2}{3}$$
    where $A^x$ denotes $A$ having query access to input $x$.
\end{defi}

The constant $2/3$ in the definition is arbitrary: it can be replaced
by any constant strictly larger than $1/2$, with only a constant-factor
change in the query complexity, by standard amplification.

\begin{defi}[zero-error quantum query complexity]\label{def:Q0}
    Let
    $f : \D \subseteq \zone^n \to \zone$
    be a Boolean (possibly partial) function. The \emph{zero-error quantum query complexity} of $f$, denoted by $\qoc(f)$, is the number of queries made by a quantum query algorithm that on any input $x \in \D$ is allowed to output $\bot$ with probability at most $1/2$. When it outputs a Boolean value $b$, $f(x)=b$ with probability $1$.
\end{defi}

\begin{defi}[exact quantum query complexity]\label{def:qec}
    The exact quantum query complexity of a Boolean (possibly partial) function
    $f : \D \subseteq \zone^n \to \zone$,
    denoted by $\qec(f)$, is the minimum number of queries made by a quantum algorithm that outputs $f(x)$ on every input
    $x \in \D$
    with probability $1$.
\end{defi}

\subsection{Certificates and related measures}

\begin{defi}[partial assignment] \label{def:partial assignment}
    A word
    $\rho \in \set*{0,1,*}^n$
    is said to be a \emph{partial assignment over $\zone^n$} (where $*$ stands for ``not assigned''). If $\rho$ and $\tau$ are partial assignments over $\set*{0,1}^n$, then $\tau$ is said to be \emph{consistent} with $\rho$, denoted
    $\rho \prec \tau$,
    if
    $$\for{i\in[n]} \rho(i) \neq * \Ra \rho(i) = \tau(i).$$
    The \emph{number of assigned positions in $\rho$} is
    $$\card*{\rho} := \card*{\rho^{-1}\of*{\zone}}.$$
    Partial assignments $\rho$ and $\tau$ are said to \emph{conflict} if there exists an index
    $i \in [n]$
    where
    $* \neq \rho(i) \neq \tau(i) \neq *$.
\end{defi}

\begin{defi}[certificate] \label{def:certificate}
    Let
    $f : \D \subseteq \zone^n \to \zone$
    be a Boolean (possibly partial) function.
    For all inputs
    $x \in \D$,
    a partial assignment
    $c\in \set*{0,1,*}^n$
    is said to be a \emph{certificate for $x$} if
    $c \prec x$
    and for all
    $z\in\D$
    such that
    $c \prec z$
    we have
    $f(x)=f(z)$.
\end{defi}

Note that partial functions admit two extensions of the definition of certificate for total functions. We use the weak version, where a certificate for $x$ should only conflict with inputs
$y \in \D \setminus f^{-1}(f(x))$,
whereas in the strong version it should conflict with all elements of
$y \in \zone^n \setminus f^{-1}(f(x))$
which includes the elements outside of the domain.

\begin{defi}[certificate complexity] \label{def:C}
    Let
    $f : \D \subseteq \zone^n \to \zone$
    be a Boolean (possibly partial) function.
    For all input
    $x \in \D$
    the \emph{certificate complexity of $f$ at $x$}, denoted by $\cc(f,x)$, is defined as
    $$\cc(f,x):=\min\limits_{c\predicate{ certificate for }x} \card*{c}.$$
    For
    $b \in \zone$
    the \emph{$b$-certificate complexity of $f$} is defined as
    $$\cc_b(f):=\max\limits_{x\in f^{-1}(b)} \cc(f,x).$$
    Then the \emph{certificate complexity of $f$}, denoted $\cc(f)$, is defined as
    $$\cc(f):=\max\limits_{x\in\D} \cc(f,x) = \max\set*{\cc_0(f),\cc_1(f)}.$$

\end{defi}

\begin{defi}[unambiguous certificate complexity] \label{def:UC}
    Let
    $b \in \zone$.
    Let
    $f : \D \subseteq \zone^n \to \zone$
    be a Boolean (possibly partial) function.
    Let a family of partial assignments
    $\of*{c_x}_{x\in\D}$. The family $\C =\of*{c_x}$ is said to be an \emph{unambiguous family of certificates w.r.t. $f$} if for all
    $x,y \in \zone^n$
    \begin{itemize}
        \item certificate condition: $c_x$ is a certificate for $x$,
        \item unambiguity condition: either
        $c_x = c_y$,
        or $c_x$ and $c_y$ conflict.
    \end{itemize}
    For
    $b \in \zone$
    the \emph{unambiguous $b$-certificate complexity of $f$} is defined as
    $$\UC_b(f):= \limitB{\min}{\of*{c_x}\ \predicate{unambiguous family}}{\predicate{of certificates w.r.t. }f}\ \max\limits_{x \in f^{-1}(b)} \card*{c_x}.$$
    The \emph{unambiguous certificate complexity of $f$}, denoted $\UC(f)$, is defined as
    $$\UC(f) := \max\set*{\UC_0(f),\UC_1(f)}.$$

\end{defi}

Note that removing the unambiguity condition gives $\cc_0$, $\cc_1$ and $\cc$.% and $\cc_\text{min}$.

\begin{defi}[fractional certificate complexity $\FC$]\label{def:FC}
    Let
    $f : \D \subseteq \zone^n \to \zone$
    be a Boolean (possibly partial) function. For
    $b \in \zone$,
    the \emph{fractional $b$-certificate complexity of $f$}, denoted $\FC_b(f)$, is the value of the following linear program.
    \begin{align*}
        \FC_b(f) \qquad
        &\predicate{is the }\min\predicate{ of}\quad \max\limits_{x \in f^{-1}(b)} \sum\limits_{i \in [n]} w_x(i)\\
        &\predicate{over}\quad \of*{w_x}_{x \in \D}\\
        &\predicate{subject to}\quad \for{x \in \D} \for{i \in [n]} 0 \leq w_x(i)\\
        &\predicate{and the weak conflict condition}\\
        &\for{x \in \D} \for{y \in \D}\\
        &f(x) \neq f(y) \Ra
        \sum\limits_{i \in [n]: x_i \neq y_i} w_x(i) \geq 1 
    \end{align*}
    Weights $\of*{w_x}_{x \in \D}$ subject to such conditions are said to be a \emph{family of fractional $b$-certificates w.r.t. $f$}.
    
    The \emph{fractional certificate complexity of $f$}, denoted $\FC(f)$, is defined to be $\max\{\FC_0(f),\FC_1(f)\}$. 
    
\end{defi}

Because each fractional certificate of a family of fractional certificates satisfies the conditions independently of the rest of the family, and similarly for the fractional $b$-certificates, then
$\FC(f) = \max\set*{\FC_0(f),\FC_1(f)}$.

\begin{defi}[expectational certificate complexity $\EC$ {\cite[Definition~9]{JKK+20}}]\label{def:EC}
    Let
    $f : \D \subseteq \zone^n \to \zone$
    be a Boolean (possibly partial) function.
    The \emph{expectational certificate complexity of $f$}, denoted $\EC(f)$, is the value of the following linear program.
    \begin{align*}
        \EC(f) \qquad
        &\predicate{is the }\min\predicate{ of}\quad \max\limits_{x \in \D} \sum\limits_{i \in [n]} w_x(i)\\
        &\predicate{over}\quad \of*{w_x}_{x \in \D}\\
        &\predicate{subject to}\quad \for{x \in \D} \for{i \in [n]} 0 \leq w_x(i) \leq 1\\
        &\predicate{and the strong conflict condition}\\
        &\for{x \in \D} \for{y \in \D}\\
        &f(x) \neq f(y) \Ra
        \sum\limits_{i \in [n]: x_i \neq y_i} w_x(i) w_y(i) \geq 1 
    \end{align*}
    Weights $\of*{w_x}_{x \in \D}$ subject to such conditions are said to be a {\emph{family of expectational certificates w.r.t. $f$}}.
\end{defi}

\begin{defi}[certificate game complexity with private randomness $\CG$~\cite{CGL+23}] \label{def:CGpriv}
    Let
    $f : \D \subseteq \zone^n \to \zone$
    be a Boolean (possibly partial) function. The \emph{certificate game complexity with private randomness of $f$}, denoted $\CG(f)$, is given by the following linear program.
    \begin{align*}
        \sqrt{\CG(f)} \qquad
        &\predicate{is the }\min\predicate{ of}\quad \max\limits_{x \in \D} \sum\limits_{i \in [n]} w_x(i)\\
        &\predicate{over}\quad \of*{w_x}_{x \in \D}\\
        &\predicate{subject to}\quad \for{x \in \D} \for{i \in [n]} 0 \leq w_x(i)\\
        &\predicate{and the strong conflict condition}\\
        &\for{x \in \D} \for{y \in \D}\\
        &f(x) \neq f(y) \Ra
        \sum\limits_{i \in [n]: x_i \neq y_i} w_x(i) w_y(i) \geq 1 
    \end{align*}
\end{defi}

\subsection{Adversary bounds and other measures}

\begin{defi}[min max adversary bound $\MM$] \label{def:MM}
    Let
    $f : \D \subseteq \zone^n \to \zone$
    be a Boolean (possibly partial) function.
    The \emph{min max adversary bound of $f$}, denoted $\MM(f)$, is the value of the following program.
    \begin{align*}
        \MM(f) \qquad
        &\predicate{is the }\min \predicate{ of} \quad \max\limits_{x \in \D} \sum_{i \in [n]} w_x(i)\\
        &\predicate{over} \quad \of*{{w_{x}(i)}}_{x \in \D, i \in [n]}\\
        &\predicate{subject to}\quad \for{x \in \D} \for{i \in [n]} w_x(i) \geq 0\\
        &\predicate{and}\\
        &\for{x \in \D} \for{y \in {\D}}\\
        &f(x) \neq f(y) \Ra
        \sum_{i \in [n]: x_i \neq y_i} \sqrt{w_x(i) w_y(i)} \geq 1 
    \end{align*}
\end{defi}

\begin{defi}[classical adversary bound $\CAdv$~{\cites[where $\CAdv$ is denoted $\mathrm{CMM}$, see Lemma~7 for the equivalence with this definition]{AKP+21}[Section~2.2.2]{ABK21}}]\label{def:CAdv}
    Let
    $f : \D \subseteq \zone^n \to \zone$
    be a Boolean (possibly partial) function.
    The \emph{classical adversary bound of $f$}, denoted $\CAdv(f)$, is the value of the following program.
    \begin{align*}
        \CAdv(f) \qquad
        &\predicate{is the }\min \predicate{ of} \quad \max\limits_{x \in \D} \sum_{i \in [n]} w_x(i)\\
        &\predicate{over} \quad \of*{{w_{x}(i)}}_{x \in \D, i\in [n]}\\
        &\predicate{subject to}\quad \for{x \in \D} \for{i \in [n]} w_x(i) \geq 0\\
        &\predicate{and}\\
        &\for{x \in \D} \for{y \in {\D}}\\
        &f(x) \neq f(y) \Ra
        \sum_{i \in [n]: x_i \neq y_i} \min\set*{w_x(i),w_y(i)} \geq 1 
    \end{align*}
\end{defi}

We give the primal definition of the partition bound for query complexity.

\begin{defi}[partition bound {\cite[Definition~13]{JK10}}]\label{def:prt}
    Let
    $f:\D \subseteq \zone^n\to\zone $
    be a Boolean (possibly partial) function.
    Let $\A$ denote the set of all partial assignments
    $A : S \to \zone$,
    where
    $S \subseteq [n]$.
    We write
    $x \in A$
    if
    $x_i = A(i)$
    for every
    $i \in S$,
    and define
    $\card*{A}:=\card*{S}$.

    For
    $\varepsilon \geq 0$,
    the $\varepsilon$-partition bound of $f$, denoted by $\Prt_\varepsilon(f)$, is the optimal value of the following linear program:

    \begin{align*}
        \Prt_\varepsilon(f) \qquad
        &\predicate{is the }\min\predicate{ of}&& \sum\limits_{b\in\zone}
        \sum\limits_{A\in\A}
        w_{b,A}\,2^{\card*{A}}\\
        &\predicate{over weights}&& \of*{w_{b,A}}_{b\in\zone,A\in\A}\\
        &\predicate{subject to}&& \sum\limits_{\substack{A\in\A\\x\in A}} w_{f(x),A}
        \geq 1-\varepsilon
        && \for{x\in\D}\\
        &\predicate{and}&&\quad \sum\limits_{\substack{A\in\A\\x\in A}}
        \sum\limits_{b\in\zone} w_{b,A} = 1
        && \for{x\in\zone^n}\\
        &\predicate{and}&& 
        w_{b,A}\geq 0
        && \for{b\in\zone} \for{A\in\A}
     \end{align*}

    The partition bound is scaled exponentially compared to other query lower bounds, so we define
    $\prt_\varepsilon(f) := \log_2\of*{\Prt_\varepsilon(f)}$.
    Also define
    $\prt(f) := \prt_{\frac{1}{3}}(f)$.
\end{defi}

\begin{defi}[sabotage complexity]\label{def:sabotage complexity}
    Let
    $f : \D \subseteq \zone^n \to \zone$
    be a Boolean (possibly partial) function. For
    $b \in \set*{\ast,\dagger}$,
    define
    $$P_f^b := \comprehension{\rho \in \set*{0,1,b}^n}
    {\predicate{there exist }x\in f^{-1}(0)\predicate{ and }y\in f^{-1}(1)\predicate{ such that }\rho \prec x\predicate{ and }\rho \prec y}.$$
    The \emph{sabotage function} associated to $f$ is 
    $f_\sab : P_f^\ast\cup P_f^\dagger \to \zone$,
    defined by
    $$f_\sab(z):=
    \begin{cases}
        0 &\predicate{if }z\in P_f^\ast,\\
        1 &\predicate{if }z\in P_f^\dagger
    \end{cases}$$
The \emph{randomized sabotage complexity of $f$} is defined as
$$\RS(f) := \roc\of*{f_\sab}$$
where $\roc$ denotes zero-error randomized query complexity.
\end{defi}

\subsection{Sensitivity and related measures}

\begin{defi}[sensitivity] \label{def:s}
    Let
    $f : \D \subseteq \zone^n \to \zone$
    be a Boolean (possibly partial) function.
    For all input
    $x \in \D$
    the $i^{\text{th}}$ bit is said to be \emph{sensitive for $x$ w.r.t. $f$} if
    $f\of*{x^i} \neq f(x)$
    where $x^i$ denotes the input equal to $x$ on every bit except on bit $i$, which is flipped.
    The \emph{sensitivity of $x$ w.r.t $f$} is defined as
    $$\s(f,x) := \card*{\comprehension{i \in [n]}{f\of*{x^i} \neq f(x),x^i \in \D}},$$
    while the \emph{sensitivity of $f$} is defined as
    $$\s(f) := \max\limits_{x\in\D} \s(f,x).$$
\end{defi}

Spectral sensitivity was first introduced by \cite{Kou93} to prove formula size lower bounds.  It is a spectral relaxation of sensitivity.
\begin{defi}\label{def:lambda}
Let
    $f : \D \subseteq \zone^n \to \zone$
    be a Boolean (possibly partial) function.
    We define $M_f$ as the $\D\times\D$ sensitivity matrix of $f$ as $M_f(x,y) = 1$ whenever $f(x)\neq f(y)$ and $ d_H(x,y)=1$, and 0 otherwise.
    Its spectral sensitivity is defined as 
    $$ \lambda(f) = \norm*{M_f}$$
    where $\norm{\cdot}$ is the spectral norm.
\end{defi}

\begin{defi}[block sensitivity] \label{def:bs}
    Let
    $f : \D \subseteq \zone^n \to \zone$
    be a Boolean (possibly partial) function.
    For all input
    $x \in \D$
    a block
    $B\subseteq[n]$
    is said to be a \emph{sensitive block for $x$ w.r.t $f$} if $x^B\in \D$ and
    $f(x^B)\neq f(x)$, %\sophie{!!!!!!!!!!!  what does f(x) neq f(y) mean when f(x) or f(y) is undefined.}
    where $x^B$ denotes the input equal to $x$ on every bit except on bits
    $i \in B$,
    which are flipped.
    The \emph{block sensitivity of $f$ at $x$}, denoted by $\bs(f,x)$, is defined as 
    $$\bs(f,x):=\max\comprehension{k}{\exists\:\! B_{1}, \dots ,B_{k}\predicate{ with }B_{i}\cap B_{j}=\emptyset\predicate{ for }i\neq j\predicate{ and }f\of*{x^{B_{i}}}\neq f(x)\predicate{ and }x^{B_i} \in \D}.$$
    The \emph{block sensitivity of $f$} is defined as
    $$\bs(f):=\max\limits_{x\in\D} \bs(f,x).$$
\end{defi}

Fractional block sensitivity ($\fbs$) is obtained by allowing fractional weights on sensitive blocks.

\begin{defi}[fractional block sensitivity] \label{def:fbs}
    Let
    $f : \D \subseteq \zone^n \to \zone$
    be a Boolean (possibly partial) function.
    For all input
    $x \in \D$
    let
    $\mathcal{B}(f,x):=\comprehension{B\subseteq [n]}{f\of*{x^{B}}\neq f(x),x^B\in\D}$
    denote the set of all its sensitive blocks.
    The \emph{fractional block sensitivity of $f$ at $x$}, denoted by $\fbs(f,x)$, is the value of the following linear program.
    \begin{align*}
        \fbs(f,x) \qquad
        &\predicate{is the }\max\predicate{ of}\quad \sum_{B\in \mathcal{B}(f,x)}w(B)\\
        &\predicate{over}\quad \of*{w(B)}_{B\in \mathcal{B}(f,x)}\\
        &\predicate{subject to}\quad \for{i\in[n]} \sum_{B\in \mathcal{B}(f,x):i\in B}w(B)\leq 1\\
        &\predicate{and}\quad \for{B\in \mathcal{B}(f,x)} w(B)\in \bracket*{0,1}
     \end{align*}
    The \emph{fractional block sensitivity of $f$} is defined as:
    $$\fbs(f):=\max\limits_{x\in\D} \fbs(f,x).$$
\end{defi}

Note that restricting the linear program for $\fbs(f,x)$ to  integral values gives $\bs(f,x)$.

The fractional measures $\fbs$ and $\FC$ were introduced in \cite{Tal13}, where it was shown that for all
$x\in\zone^{n}$
they verify
$\fbs(f,x)=\FC(f,x)$
since the linear program for fractional certificate complexity and fractional block sensitivity are the primal-dual of each other and are also feasible.

Next we define critical fractional block sensitivity from fractional block sensitivity.

\begin{defi}[critical fractional block sensitivity~{\cite[Section~2.2.1]{ABK21}}]\label{def:cfbs}
    Let
    $f : \D(f) \subseteq \zone^n \to \zone$
    be a Boolean (possibly partial) function. A Boolean total function
    $g : \zone^n \to \zone$
    is called a \emph{completion of $f$} if for all inputs
    $x \in \D$
    they coincide, i.e.
    $g(x)=f(x)$.
    The \emph{critical fractional block sensitivity of $f$}, denoted $\cfbs(f)$, is defined as
    $$\cfbs(f):=\min\limits_{g\predicate{ completion of }f}\ \max\limits_{x\in\D(f)} \fbs\of*{x,g}.$$
\end{defi}

\subsection{Polynomial degree measures}\label{sec:def deg measures}

Degree and approximate degree are defined in Section~\ref{subsec:polynomial-certification}. We give the definitions of a few more polynomial degree-based measures in this section.

We start with the two only degree measures where we do not require the polynomials to be bounded on the whole Boolean hypercube $\zone^n$.

\begin{defi}[nondeterministic polynomial degree]\label{def:ndeg}
    Let
    $ f:\D\subseteq \set*{0,1}^n \to \zone$
    be a Boolean (possibly partial) function.
    The \emph{nondeterministic polynomial degree of $f$}, denoted $\ndeg(f)$, is defined as
    \begin{align*}
        \ndeg(f) \qquad
        &\predicate{is the }\min\predicate{ of}\quad \deg(P)\\
        &\predicate{over real polynomial }P\\
        &\predicate{subject to}\quad \for{x \in f^{-1}(0)} P(x) = 0\\
        &\predicate{and}\quad \for{x \in f^{-1}(1)} P(x) \neq 0
    \end{align*}
\end{defi}

\begin{defi}[rational degree]\label{def:rdeg}
    Let
    $f:\D\subseteq \set*{0,1}^n \to \zone$
    be a Boolean (possibly partial) function.
    The \emph{rational polynomial degree of $f$}, denoted $\rdeg(f)$, is defined as
    \begin{align*}
        \rdeg(f) \qquad
        &\predicate{is the }\min\predicate{ of}\quad \max\set*{\deg(P),\deg(Q)}\\
        &\predicate{over real polynomials }P,Q\\
        &\predicate{subject to}\quad \for{x \in \D} \frac{P(x)}{Q(x)} = f(x)
    \end{align*}
\end{defi}

For all Boolean (possibly partial) functions $f$,
$\rdeg(f) = \max\set*{\ndeg(f),\ndeg(\neg f)}$.
We  use this characterization from \cite[Fact~3]{KKW+26} when working with $\rdeg$.

\begin{defi}[block-multilinear bounded-polynomial degree~{\cites{AA15}[Definition~2.3]{AAI+16}}]\label{def:bmdeg}
    Let
    $ f:\D\subseteq \set*{-1,1}^n \to \zone$ 
    be a Boolean (possibly partial) function. Let
    $\epsilon \geq 0$.
    A polynomial $P$ of degree $d$ is called \emph{block-multilinear} if its variables can be partitioned into $d$ blocks such that every monomial of $P$ contains exactly one variable from each block.

    For $x=(x_1,\ldots,x_n)\in\{-1,1\}^n$, let
    $$\widetilde{x}:=(1,x_1,\ldots,x_n).$$
    The \emph{block-multilinear $\epsilon$-approximate bounded-polynomial degree of $f$}, denoted by $\bmadeg[_\epsilon](f)$, is defined as the value of the following program.
    
    \begin{align*}
        \bmadeg[_\epsilon](f) \qquad
        &\predicate{is the }\min\predicate{ of}\quad d\\
        &\predicate{over}\quad d \in \NN\\
        &\predicate{and block-multilinear real polynomial }P:\mathbb{R}^{d(n+1)}\to\mathbb{R}\predicate{ of degree }d\\
        &\predicate{subject to}\quad \for{x \in \D} 0 \leq P(\widetilde{x},\ldots,\widetilde{x}) \leq 1\\
        &\predicate{and the error condition}\quad \for{x \in \D} \card*{P(\widetilde{x},\ldots,\widetilde{x})-f(x)} \leq \epsilon\\
        &\predicate{and the bounding condition}\quad \for{z\in\set*{-1,1}^{k(n+1)}} \card*{P(z)} \leq 1
     \end{align*}
    Then define the \emph{block-multilinear approximate bounded-polynomial degree of $f$}, denoted $\bmadeg(f)$,
    and the \emph{block-multilinear bounded-polynomial degree of $f$}, denoted $\bmdeg(f)$, as
    $$\bmadeg(f) := \bmadeg[_\frac{1}{3}](f) \qquad\qquad \bmdeg(f):=\bmadeg[_0](f).$$
\end{defi}

Given a block-multilinear polynomial $P$ approximating
$g : \D \subseteq \set*{-1,1}^n \to \zone$
in the $\set*{-1,1}$ basis, the polynomial $P \circ \frac{1+x}{2}$ is now in the $\zone$ basis with the same degree and same values. Similarly for $f : \D' \subseteq \zone^n \to \zone$ in the $\zone$ basis, the reverse transformation $f \circ \of*{2x-1}$ brings it to the $\set*{-1,1}$ basis. That's this relabeling that we have in mind when applying $\bmadeg[_\epsilon]$ to a function
$f$ in the $\zone$ basis~\cite[Section~2.2]{AAI+16}: first we go to $g := f \circ \of*{2x-1}$ and then apply $\bmadeg[_\epsilon]$ to $g$.

\section{Certification complexity}

Informally, an 
algorithm that certifies the value $f(z)$ of $z$ should output 1 on $z$ and 0 whenever it is given an input $x$ such that $f(x)\neq f(z)$.
The formal definition is given in 
\Cref{def:f cert} and \Cref{def:cert qc certification version}, where we have used the notation 
$$\mathsf{M}^\cert(f) := \max\limits_{z \in \D} \mathsf{M}\of*{f^\cert_z},$$
for $\mathsf{M}$ a complexity measure of Boolean functions.

\subsection{General properties of certification measures }\label{sec:general properties}
We prove several properties of certification complexity  that will be essential tools in proving the main results of the paper. We show that certification is non-increasing,
idempotent, and monotone with respect to the underlying query complexity
measure. We also give a general condition under which two measures
have the same certification complexity. These properties will be used
throughout the paper.

\Toolbox*

Figure~\ref{fig:toolbox} summarizes the properties in \Cref{toolbox}.

\begin{figure}[H]
\centering
\begin{tikzpicture}[
  x=1.cm,
  y=0.5cm,
  every node/.style={font=\scriptsize, inner sep=1.5pt, fill=white},
  relation/.style={draw=black, line width=0.45pt},
  implied/.style={draw=orange, line width=1.3pt},
  equality/.style={draw=black, double=white,
                   double distance=1.15pt, line width=0.65pt},
  impliedEq/.style={draw=orange, double=white,
                   double distance=1.6pt, line width=1pt}
]

\node (i)      at (1,-1.4){$\cref{cert is easier}$};
\node (iM)      at (1,6){$\mathsf{M}$};
\node (iMcert)  at (1,2){$\mathsf{M}\cert$};

\draw[implied] (iM)--(iMcert);

\node (ii)      at (6.5,-1.4){$\cref{cert monotonicity}$};
\node (iiM)      at (5,6){$\mathsf{N}$};
\node (iiMcert)  at (5,2){$\mathsf{N}\cert$};
\node (iiN)      at (8,4){$\mathsf{M}$};
\node (iiNcert)  at (8,0){$\mathsf{M}\cert$};
\draw[implied] (iiM)--(iiMcert);
\draw[implied] (iiN)--(iiNcert);
\draw[relation] (iiN)--(iiM);
\draw[implied] (iiNcert)--(iiMcert);

\node (iii)      at (12,-1.4){$\cref{A > B > A cert = B cert}$};
\node (iiiM)      at (12,6){$\mathsf{N}$};
\node (iiiMcert)  at (12,2){$\mathsf{N}\cert$};
\node (iiiN)      at (12,4){$\mathsf{M}$};
\node (iiiNcert)  at (12,0){$\mathsf{M}\cert$};
\draw[relation] (iiiM)--(iiiN);
\draw[relation] (iiiN)--(iiiMcert);
\draw[impliedEq] (iiiNcert)--(iiiMcert);

\end{tikzpicture}
\caption{Three items from the Toolbox (\Cref{toolbox}). Double lines denote equality up to $O(1)$ multiplicative factor. A single line from $\mathsf{N}$ to $\mathsf{M}$ from down to up means
$\mathsf{N}(f) \leq O\of*{\mathsf{M}(f)}$.
Double lines denote equality up to $O(1)$ multiplicative factors, i.e. $\mathsf{N}(f) = \Theta\of*{\mathsf{M}(f)}$. \textbf{\textcolor{orange}{Orange lines}} are the lines implied by black lines. The case of stable complexity measures is depicted.}
\label{fig:toolbox}
\end{figure}

\begin{proof}[Proof of \cref{cert monotonicity} from \Cref{toolbox}.]
    % Your proof of (iii)
    Let two complexity measures $\mathsf{M}$ and $\mathsf{N}$ such that
    $\mathsf{M}(g)\leq\mathsf{N}(g)$ for all Boolean (possibly partial) function~$g$. Applying the inequality to the $z$-certification versions of $f$, one gets
    $$\mathsf{M}^\cert(f) = \max\limits_{z\in\D} \mathsf{M}\of*{f^\cert_z} \leq \max\limits_{z\in\D} \mathsf{N}\of*{f^\cert_z} = \mathsf{N}^\cert(f).\qedhere$$
\end{proof}

\begin{proof}[Proof of \cref{cert is easier} of \Cref{toolbox}.]
    We prove that all complexity measures considered in this work, except $\cc_1$, $\FC_1$, $\UC_1$, and $\ndeg$, verify \cref{cert is easier} of \Cref{toolbox}. (See definitions in \Cref{sec:preliminaries} for the list of these measures.) Let $\mathsf{M}$ be any of these measures. Then $\mathsf{M}$ is monotone under restriction of the domain (\cref{item stable:monotonicity under restriction} of the definition of stability).

    Let first consider measures $\mathsf{M}$ that are stable (i.e. that also verify \cref{item stable:negation} of the definition). For all
    $z \in f^{-1}(1)$,
    $$\mathsf{M}\of*{f^\cert_z} = \mathsf{M}\of*{f_{\big|\set{z} \cup \comprehension{y\in\D}{f(z) \neq f(y)}}} \leq \mathsf{M}(f)$$
    and for all
    $z \in f^{-1}(0)$,
    $$\mathsf{M}\of*{f^\cert_z} = \mathsf{M}\of*{\neg f_{\big|\set{z} \cup \comprehension{y\in\D}{f(z) \neq f(y)}}} \leq \mathsf{M}(f).$$
    So
    $\mathsf{M}^\cert(f) = \max\limits_{z\in\D} \of*{f^\cert_z} \leq \mathsf{M}$.

    Now consider $\mathsf{M}(f)$ being one of $\ndeg(\neg f)$, $\cc_0(f)$ or $\FC_0(f)$ -- for which \cref{item stable:negation} of the definition fails. The result will still hold as these measures vanish through the $\cert$ transform, as follows.
    
    For any constant function
    $f : \D \subseteq \zone \to \zone$
    we have
    $\mathsf{M}(f) = 0$,
    so
    $$\mathsf{M}^\cert(f) = \max\limits_{z \in \D} \mathsf{M}\of*{f^\cert_z} = 0 = \mathsf{M}(f)$$
    where the second equality follows from $f^\cert_z$ being constant.
    Now, for any non constant $f$ and any input
    $z \in \D$,
    we have
    $\mathsf{M}(f) \leq 1$
    and
    $\mathsf{M}\of*{f^\cert_z} = 1$
    so
    $$\mathsf{M}^\cert(f) = \max\limits_{z \in \D} \mathsf{M}\of*{f^\cert_z} = 1 \leq \mathsf{M}(f).$$

\end{proof}

\begin{proof}[Proof of \cref{cert idempotent} from \Cref{toolbox}.]
For all
    $z \in \D$
    we have that
    $\of*{f^\cert_z}^\cert_z = f^\cert_z$.
    Then on the one hand,
    \begin{align*}
        \of*{\mathsf{M}^\cert}^\cert(f)
        &= \max\limits_{z \in \D} \mathsf{M}^\cert\of*{f^\cert_z}\\
       &= \max\limits_{z \in \D} \max\limits_{y \in \D} \mathsf{M}\of*{\of*{f^\cert_z}^\cert_y}\\
        &\geq \max\limits_{z \in \D} \mathsf{M}\of*{\of*{f^\cert_z}^\cert_z}\\
        &= \max\limits_{z \in \D} \mathsf{M}\of*{f^\cert_z}\\
        &= \mathsf{M}^\cert\of*{f}
    \end{align*}
    On the other hand,
    $$\of*{\mathsf{M}^\cert}^\cert(f)
    = \max\limits_{z \in \D} \mathsf{M}^\cert\of*{f^\cert_z}
    \leq \max\limits_{z \in \D} \mathsf{M}\of*{f^\cert_z}
    = \mathsf{M}^\cert(f)$$
    where the inequality follows from the assumption \cref{cert is easier} from \Cref{toolbox}.
\end{proof}

\begin{proof}[Proof of \cref{A > B > A cert = B cert} from \Cref{toolbox}.]
Suppose that
$\mathsf{M} \geq \mathsf{N} \geq \mathsf{M}^{\cert}$,
i.e.
$$\mathsf{M}(g)\geq \mathsf{N}(g) \geq \mathsf{M}^{\cert}(g)$$
for every Boolean (possibly partial) function $g$. Applying \cref{cert monotonicity} from \Cref{toolbox} on the hypothesis for the first inequality, applying the hypothesis for the second inequality, and applying \cref{cert idempotent} from \Cref{toolbox} for the last equality, we have
    $$\mathsf{M}^\cert(f)
    \geq \mathsf{N}^\cert(f)
    = \max\limits_{z\in\D} \mathsf{N}\of*{f^\cert_z}
    \geq \max\limits_{z\in\D} \mathsf{M}^\cert\of*{f^\cert_z}
    = \of*{\mathsf{M}^\cert}^\cert(f)
    = \mathsf{M}^\cert(f)$$
    so
    $\mathsf{M}^\cert(f) = \mathsf{N}^\cert(f)$.

\end{proof}

We (informally) say a complexity measure $\mathsf{M}$ is {\em local}
if it is defined as the worst case, over inputs $x$, of the complexity at $x$, usually written as $\mathsf{M}(f;x)$.

\begin{remark}[Certification of local complexity measures]\label{local cert = local}
    If a complexity measure $\mathsf{M}$ is \emph{local}, %i.e. 
    and for any Boolean (possibly partial) function
    $g : \D \subseteq \zone^n \to \zone$
    and any
    $z \in \D$
    it verifies
    $\mathsf{M}(g,z) =\mathsf{M}\of*{g^\cert_z}$,
    then for any Boolean (possibly partial) function $f$ we have
    $\mathsf{M}^\cert(f) = \max\limits_{z\in\D} \mathsf{M}\of*{g^\cert_z} = \mathsf{M}(f)$.

    Of course this is exactly \Cref{def:cert qc certification version} (so it is trivial). The remark is actually about noticing that $\s$, $\bs$, $\FC$, $\fbs$, $\RC$ and $\cc$ are usually defined in this local way, e.g.
    $\s(f) := \max\limits_{z\in\D} \s(f,z)$
    where 
    $\s(f,z) = \s\of*{f^\cert_z}$. So for any Boolean (possibly partial) function $f$
    \begin{itemize}
        \item $\s^\cert(f) = \s(f)$
        \item $\bs^\cert(f) = \bs(f)$
        \item $\FC^\cert(f) = \FC(f) = \fbs(f) = \fbs^\cert(f)$, where the second equality follows from~\cites[Theorem~2]{Tal13}[Section~2.6]{GSS16}
        \item $\RC^\cert(f) = \RC(f)$
        \item $\cc^\cert(f) = \cc(f)$.
    \end{itemize}
    
    Also recall that any Boolean (possibly partial) function $f$ verifies
    $\FC(f) = \Theta(\RC(f))$ \cites[Claim~5.5]{Tal13}[Claim~38]{GSS16}
    (even though the claims are not explicitly stated to hold for partial functions, their proofs are done for partial functions).
\end{remark}

\subsection{Certification, witnesses and postselection}

There are a few other candidates to consider if we wish to capture what certificate complexity should mean over various models of computation, or more broadly, for complexity measures $\mathsf{M}$. We have proposed an operational definition, but one might also like to define a certificate ``object'' : given an input $x$, a purported value for $f(x)$, and given this certificate ``object'' for $x$, we would like to use it to verify that $f(x)$ is as claimed. For $\dqc$, this is a partial assignment. For $\rqc$, it can be a weighted set of queries. One certificate may be used to certify several instances of $f$. The associated certificate complexity would then be the computational cost in model $\mathsf M$ to carry out the certification algorithm. One can also think of it as a nondeterministic task: given a certificate and an input, accept if the certificate is valid for this input, otherwise reject. 
Similarly, in a query-based model, one might also phrase it in terms of postselection: guess a certificate, and if the certification procedure accepts, accept; else abort.

For classical query model these three notions coincide. In \Cref{nondeterminism} we make this claim more formal. 

But in the quantum case, we show that they do not coincide. On the postselection side, Cade showed that exact postselected randomized query complexity is equal to  certificate complexity \cite[Theorem 16]{Cade20},
whereas Mahadev and de~Wolf showed that exact postselected quantum query complexity is characterized by rational degree~\cite[Theorem~1,Theorem~2]{MdW15}. On the certification side, we recover that deterministic certification complexity is equal to certificate complexity, whereas our approach allows us to find a distinction between the certification task and postselection in the quantum case. We show that exact quantum query certification complexity $\qec^\cert$ and exact postselected quantum query complexity are two distinct notions, thanks to our lower bound on $\qec^\cert$ by $\deg^\cert$ for which we provide an asymptotic separation with $\rdeg$ in \Cref{thm:separation (deg cert)^2.5 < ndeg}.

\section{Characterizations of certificate complexity}

In this section, we give the proofs of the new characterizations of $\cc$ and $\RC$.

\subsection{Certification for classical complexity measures}
We  start by reproducing the known characterization for the classical certificate complexity. Note that most of the characterizations are already known in the literature ( some of them implicitly, some of them explicitly). Our certification framework is abstract enough to capture the different variant of the same complexity measure. 

\CharacterizationClassicalCertificateComplexities*
\begin{proof}[Proof of \cref{item:characterizations of C} of \Cref{characterizations of C and RC}]
    We  break the proof into three parts.
    \begin{enumerate}[label=(\roman*)]
        \item $\dqc^\cert = \roc^\cert  = \cc:$
        It is known that
        $\dqc \geq \roc \geq \cc$
        (follows from the definitions). Applying \cref{cert monotonicity} from \Cref{toolbox} (certification monotonicity) to the known relation 
        $\dqc \geq \roc \geq \cc$
        one gets
        $\dqc^\cert(f) \geq \roc^\cert \geq \cc^\cert(f) = \cc(f)$
        where the equality follows from \Cref{local cert = local}.
        
        To prove the converse inequality
        $\dqc^\cert(f) \leq \cc(f)$,
        let
        $z\in\D$
        and consider the certificate given by $\cc\of*{f,z}$. The following deterministic algorithm computes $f^\cert_z$ in
        $\cc\of*{f^\cert_z}$ queries: query the defined positions of the certificate, then output $1$ iff the queried values all match the certificate. Then
        $\dqc\of*{f^\cert_z} \leq \cc\of*{f,z}$.
        So
        $$\dqc^\cert(f)
        = \max\limits_{z\in\D} \dqc\of*{f^\cert_z}
        \leq \max\limits_{z\in\D} \cc(f,z)
        = \cc(f).$$
        
        \item 
        $\UC^\cert(f) = \dqc^\cert(f):$
        Apply \cref{A > B > A cert = B cert} from \Cref{toolbox} (Sandwich Lemma) to $\dqc \geq \UC \geq \cc = \dqc^\cert$ where the inequalities follow from the definitions and the equality follows from the previous item.
        
        \item \Cref{C = (C_0 C_1) cert} states that
        $\of*{\cc_0\cdot\cc_1}^\cert = \cc$
        on Boolean (possibly partial) functions, and is proved in \Cref{appendix:proof of lemmas}. So one can conclude by applying \cref{A > B > A cert = B cert} from \Cref{toolbox} (Sandwich Lemma) to
        $\cc \leq \roc \leq O(\CG) \leq O(\cc_0 \cdot \cc_1)$
        where the second inequality follows from \cites[Theorem~1]{JKK+20}[Theorem~22.2]{CGL+23} (the proof of which indeed holds for partial functions, even though not explicitly stated as such in the theorems) and the last inequality follows from \cites[Theorem~22.4]{CGL+23}.
        \qedhere
    \end{enumerate}
\end{proof}

\begin{proof}[Proof of \cref{item:characterizations of RC} of \Cref{characterizations of C and RC}]
    Equalities and inequalities between complexity measures are up to $O(1)$ multiplicative factor. We proceed in four steps:
    \begin{enumerate}[label=(\roman*)]
        \item $\RC(f) = \rqc^\cert(f)$ is shown in \cites[Proposition~4]{Aar08}[Definition~7]{BK19}.
        
        \item $\rqc^\cert(f) = \RS^\cert(f) = \FC^\cert:$
        Apply \cref{A > B > A cert = B cert} from \Cref{toolbox} (Sandwich Lemma) to
        $\rqc \geq \RS \geq \FC = \RC = \rqc^\cert$
        where the first equality follows from the end of \Cref{local cert = local} and the last equality follows from the previous item.
        
        \item $\prt^\cert(f) = \CAdv^\cert(f) = \cfbs^\cert(f) = \FC(f):$
        Apply \cref{A > B > A cert = B cert} from \Cref{toolbox} to
        $\rqc \geq \prt \geq \CAdv \geq \cfbs \geq \fbs = \RC = \rqc^\cert$
        where the first inequality follows from \cite[Theorem~3.1]{JK10}, the second inequality follows from \cites[Theorem~3.5, Definition~19]{JK10}[where $\CAdv$ is denoted $\mathrm{CMM}$, see Lemma~7 for the equivalence with our \Cref{def:CAdv} of $\CAdv$]{AKP+21}, the third inequality follows from \cite[Lemma~26]{ABK21}, the fourth inequality follows from definitions \cite[beginning of Section~2.3.2]{ABK21}, the first equality follows from \Cref{local cert = local} and the last equality follows from previous items.
        
        \item $\EC^\cert(f) = \FC(f):$
        Let $\of*{v_z(i)}_{z\in\D,i\in[n]}$ be optimal weights for the $\FC(f)$ program. Then, for
        $z \in \D$
        and
        $x \in \set{z} \cup \comprehension{y \in \D}{f(y) \neq f(z)}$,
        if
        $f(x) \neq f(z)$
        let
        $B^z_x \subseteq \comprehension{i \in [n]}{x_i \neq z_i}$
        be a minimal sensitive bloc of $z$, and for
        $i \in [n]$
        let
        $$w^z_x(i) := \begin{cases}
            v_x(i)
            &\predicate{if } x=z\\
            \One_{i \in B^z_x}
            &\predicate{if } f(x) \neq f(z)\\
        \end{cases}$$
        Then for all
        $z \in \D$,
        weights $\of*{w^z_x(i)}$ satisfy the $\EC\of*{f^\cert_z}$ program. So
        \begin{align*}
            \EC^\cert(f)
            &= \max\limits_{z\in\D} \EC\of*{f^\cert_z}\\
            &\leq \max\limits_{z\in\D} \max\limits_{x\in\set{z}\cup\comprehension{y\in\D}{f(y) \neq f(z)}} \sum\limits_{i\in[n]} w^z_x(i)\\
            &\leq \max\limits_{z\in\D} \max\set*{\sum\limits_{i\in[n]} v_z(i),\limitB{\max}{x\in\D:}{f(x) \neq f(z)} \card*{B^z_x}}\\
            &\leq \max\limits_{z\in\D} \max\set*{\FC(f),\s(f)}\\
            &\leq \FC(f)\\
            &\leq \EC(f)
        \end{align*}
        where the third inequality follows from the fact that each bit in $B^z_x$ is sensitive for the input $z^{B^z_x}$ as $B^z_x$ is a minimal sensitive bloc of $z$, and the last inequality follows from \cite[Theorem~1]{JKK+20}.
        
        Applying \cref{A > B > A cert = B cert} from \Cref{toolbox} (Sandwich Lemma) and \Cref{local cert = local}, conclude that
        $\EC^\cert = \FC$.\qedhere
    \end{enumerate}
\end{proof}

\subsection{Quantum certification complexity measures}\label{sec:quantum cert measures}
In this subsection we provide the proofs for the  characterizations of  quantum certificate complexity. Before going into the theorem we will first describe some of the definitions of complexity measures that we will be using. 
\begin{defi}[quantum distinguishing complexity~\cite{BK19}]
\label{def:QD}
    Let
    $f : \D \subseteq \zone^n \to \zone$
    be a Boolean (possibly partial) function.
    The \emph{quantum distinguishing complexity of $f$}, denoted $\QD(f)$, is the smallest
    integer $k$ for which there exists a $k$-query quantum algorithm that, on every input
    $x \in D$,
    outputs a quantum state $\rho_x$ such that
    $$\card*{\rho_x-\rho_y}_{\mathrm{tr}} \geq \frac{1}{6}$$
    for every
    $x,y \in \D$
    satisfying
    $f(x) \neq f(y)$.
\end{defi}

\begin{defi}[quantum sabotage complexity~\cite{CMP24}]
    For inputs
    $x,y \in \zone^n$
    and
    $b \in \set*{\ast,\dagger}$,
    define
    $\bracket*{x,y,b} \in \set*{0,1,b}^n$
    coordinate wise by
    $$\bracket*{x,y,b}_i := \begin{cases}
        x_i
        &\predicate{if } x_i=y_i\\
        b
        &\predicate{if } x_i \neq y_i\\
    \end{cases}$$
    Let
    $f : \D \subseteq \zone^n \to \zone$
    be a Boolean (possibly partial) function, and let
    $$\D_0 := f^{-1}(0)
    \qquad\predicate{and}\qquad
    \D_1 := f^{-1}(1).$$
    For
    $b \in \set*{\ast,\dagger}$,
    define the set of \emph{$b$-sabotaged inputs} as
    $$S_b := \comprehension{\bracket*{x,y,b}}{x \in \D_0,\ y \in D_1}$$
    and let
    $$\D_{\mathrm{sab}}:=S_*\cup S_\dagger.$$
    The \emph{sabotage function} associated with $f$ is the partial function
    $$f_\sab : \D_\sab \to \set*{0,1}$$
    defined by
    $$f_\sab(z) := \begin{cases}
        0 & \predicate{if } z\in S_*\\
        1 & \predicate{if } z\in S_\dagger
    \end{cases}$$
    The \emph{quantum sabotage complexity} of $f$ in the weak input model is
    $$\QS_\weak(f) := \qqc\of*{f_\sab}$$
    where $\qqc(\cdot)$ denotes bounded-error quantum query complexity.
\end{defi}

The paper defines several variants
$$
    \QS_\weak(f) := \qqc\of*{f_\sab}
    \qquad\qquad
    \QS_{\mathrm{weak}}^{\mathrm{ind}}(f)
        := \qqc(f_{\mathrm{sab}}^{\mathrm{ind}})
$$
and, in the strong input model,
$$
    \QS_\str(f) := \qqc\of*{f_\sab^\str}
    \qquad\qquad
    \QS_\str^\ind(f) := \qqc\of*{f_\sab^{\str,\ind}}.
$$
See \cite{CMP24} for more details.

\ComputationalCharacterizationQuantumCertificateComplexity*

\begin{proof}
Equalities and inequalities between complexity measures are up to $O(1)$ multiplicative factor. The proof is divided into six parts.
\begin{enumerate}[label=(\roman*), ref=(\roman*)]
    \item\label{proofQC: Q cert = QC} \cites[Proposition~4]{Aar08}[Definition~7]{BK19} proved that any Boolean (possibly partial) function $f$ verifies
    $$\QC(f) = \qqc^\cert(f).$$
    
    \item\label{proofQC: Q cert = QD cert = MM cert} 
    $\qqc^\cert(f) = \QD^\cert(f) = \MM^\cert(f):$
    Apply \cref{A > B > A cert = B cert} from \Cref{toolbox} to
    $\qqc \geq \QD \geq \MM \geq \sqrt{\CAdv} \geq \sqrt{\cfbs} \geq \sqrt{\fbs} = \sqrt{\RC} = \QC = \qqc^\cert$,
    where the first and second inequalities follow from \cite{BK19}, the third inequality follows from \cite[Theorem~30]{ABK21}, the fourth inequality follows from \cite[Lemma~26]{ABK21}, the fifth inequality follows from definitions \cite[beginning of Section~2.3.2]{ABK21}, the first equality follows from \Cref{local cert = local}, the second equality follows from \cite[Theorem~7]{Aar08}, and the last equality follows from the previous \cref{proofQC: Q cert = QC}.
        
    \item\label{proofQC bonus: Q cert = QD cert} As a bonus, we give an elementary proof of
    $$\qqc^\cert(f) = \QD^\cert(f).$$
    For any
    $z \in \D$
    let algorithm $A(z)$ be optimal for $\QD\of*{f^\cert_z}$. Measuring the output of algorithm $A(z)$ on basis $\of*{A(z)\ket{z},\of*{A(z)\ket{z}}^\bot}$ gives an algorithm computing $f^\cert_z$ in the query model of computation $\qqc$ with the same quantum query complexity. So
    $\QD\of*{f^\cert_z} \geq \qqc\of*{f^\cert_z}$.
    Thus
    $$\QD^\cert(f)
    = \max\limits_{z\in\D} \QD\of*{f^\cert_z}
    \geq \max\limits_{z\in\D} \qqc\of*{f^\cert_z}
    = \qqc^\cert(f).$$
    The converse
    $\QD^\cert(f) \leq \qqc^\cert(f)$
    follows from
    $\QD \leq \qqc$
    by \cref{cert monotonicity} from \Cref{toolbox}.
    
    \item\label{proofQC: QSind-weak < QD cert} We give two proofs of the fact that
    $\of*{\QS^\ind_\weak}^{\cert}(f) \leq \QD^\cert(f)$.
    \begin{itemize}
        \item 
        Let $z$ be the input for which the maximum of $\QS^\ind_\weak(f_z)$ is achieved and let $c = f(z)$.
        Let $a = *$ and $b = \dagger$.
        Modify the query oracle so that if the result of the query on the $i$-th bit is $a$ or $b$, then the modified result is $z_i$ and $1-z_i$, respectively.
        In this way, the inputs in $S_{a}$ are all mapped to $z$, and the inputs in $S_{b}$ are all mapped to the inputs $f^{-1}(c)$.
        By \cite{Aar08}, there exists a bounded-error quantum algorithm that uses at most $O(\QC(f))$ queries such that:
        \begin{itemize}
            \item it ``accepts'' $z$;
            \item it ``rejects'' any input $y \in f^{-1}(c)$ and returns a position $i$ such that $z_i \neq y_i$.
        \end{itemize}
        The output of this algorithm can be used to solve the sabotaged function.
        If the algorithm rejected, return $1$; also return the index found by the algorithm.
        If the algorithm accepted, try the same strategy with $a = \dagger$ and $b = *$.
        Since in one of the two cases the sabotaged input will be mapped to an input $y$ such that $f(y) \neq f(z)$, the algorithm will reject in that case and thus solve the sabotaged function.
        The required follows as $\QD^{\cert}(f) = \QC(f)$ by points above.
        
        \item Proof by amplitude amplification: Let
        $z \in \D$.
        Let $A$ be an optimal algorithm for $\QD\of*{f^\cert_z}$, hence running in
        $T = \QD\of*{f^\cert_z}$
        queries.
        
        Consider the following algorithm $B$, which runs on a $b$-sabotaged input $\bracket*{y,z,b}$ (where
        $b \in \set{\ast,\dagger}$
        and
        $f(y) \neq f(z)$)
        of $f^\cert_z$ as follows. Pick
        $t \in [0,T]$
        uniformly at random and run $A^z$ (which does not require any query to $y$) until right before the $t$th query. Measure the query index register -- it gives an index
        $i \in [n]$.
        
        According to \cite[Lemma~12]{BK19}, the probability that
        $y_i \neq z_i$
        (i.e. ${\bracket*{y,z,b}}_i \in \set{\ast,\dagger}$)
        is $\Omega\of*{\frac{1}{T^2}}$, which can be tested in one query to $\bracket*{y,z,b}$. Hence, using quantum amplitude amplification, one can find a $b$-index in $\bracket*{y,z,b}$ with $O(T)$ repetitions of algorithm $B$ and so $O(T)$ queries to $\bracket*{y,z,b}$. Thus,
        $\QS^\ind_\weak\of*{f^\cert_z} \leq \QD\of*{f^\cert_z}$.
        So
        $\of*{\QS^\ind_\weak}^\cert(f) \leq \QD^\cert(f)$.
    \end{itemize}
    
    \item\label{proofQC: QC < QSstr} $\QC \leq \QS_\str(f):$
    \cite[Theorem~5]{CMP24} states that
    $\sqrt{\fbs} \leq \QS_\str$
    and its proof indeed holds for partial functions (see \cite[Lemma~18 and beginning of Section~5.1]{CMP24}), even though it is not explicitly stated as such in the theorem. Conclude with
    $\QC = \sqrt{\FC} = \sqrt{\fbs}$
    where the first equality follows from \cite[Theorem~7]{Aar08} and the last equality from \Cref{local cert = local}.
    
    \item\label{proofQC: (e)} \cref{proofQC: Q cert = QC}, \cref{proofQC: Q cert = QD cert = MM cert}, \cref{proofQC: QSind-weak < QD cert} and \cref{proofQC: QC < QSstr} from \Cref{characterizations of QC: computational} imply that
    $\of*{\QS^\ind_\weak}^\cert \leq \QD^\cert = \qqc^\cert = \QC \leq \QS_\str \leq \QS^\ind_\str \leq \QS^\ind_\weak$
    and
    $\of*{\QS^\ind_\weak}^\cert \leq \QD^\cert = \qqc^\cert = \QC \leq \QS_\str \leq \QS_\weak \leq \QS^\ind_\weak$. Then, applying \cref{A > B > A cert = B cert} from \Cref{toolbox} (Sandwich Lemma) one gets
    $$\of*{\QS_\weak}^\cert = \of*{\QS_\str}^\cert = \of*{\QS^\ind_\weak}^\cert = \of*{\QS^\ind_\str}^\cert = \of*{\qqc^\cert}^\cert = \qqc^\cert$$
    where the last equality follows from \cref{cert idempotent} from \Cref{toolbox}.
    \end{enumerate}
\end{proof}

\subsection{Certification for sensitivity}

Since sensitivity is a local measure, certification applied to sensitivity.
However when applied to spectral sensitivity, we show that $\lambda^\cert$ equals $\sqrt{\s}$:

\begin{theorem}\label{lambda cert = sqrt(s)}
    For any boolean (possibly partial) function
    $f:\D \subseteq \zone^n \to \zone$
    $$\lambda^\cert(f) = \sqrt{\s(f)}.$$
\end{theorem}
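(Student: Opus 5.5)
The plan is to compute $\lambda\of*{f^\cert_z}$ exactly for each fixed $z \in \D$, and then maximize over $z$.

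Fix $z$ and write $g := f^\cert_z$. Its domain is $\set{z} \cup \comprehension{y\in\D}{f(y)\neq f(z)}$. The key observation is that the sensitivity matrix $M_g$ is a star graph centered at $z$. Indeed, $M_g(x,y)=1$ requires $g(x)\neq g(y)$, so one of $x,y$ has $g$-value $1$; the only such input in the domain is $z$. The other input is then at Hamming distance $1$ from $z$ and lies in $f^{-1}(1-f(z))$. Conversely, every such neighbor $z^i \in \D$ with $f(z^i)\neq f(z)$ lies in the domain of $g$ and gives an edge. The number of edges is therefore exactly $\card*{\comprehension{i}{z^i\in\D,\ f(z^i)\neq f(z)}} = \s(f,z)$.

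The adjacency matrix of the star $K_{1,k}$ has spectral norm $\sqrt{k}$. To see this, note that $M_g^2$ restricted to the star has the block $\mathrm{diag}(k, J_k)$, where $J_k$ is the all-ones $k\times k$ matrix, so the largest eigenvalue of $M_g^2$ is $k$. Equivalently, one can exhibit the eigenvector with entry $\sqrt{k}$ on $z$ and $1$ on each leaf. All other vertices of the domain are isolated and contribute nothing. Hence $\lambda\of*{f^\cert_z} = \sqrt{\s(f,z)}$, with the degenerate case $k=0$ giving $0$ consistently.

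Maximizing over $z$ gives $\lambda^\cert(f) = \max_z \sqrt{\s(f,z)} = \sqrt{\s(f)}$, since the square root is monotone. For the additional claim $\sqrt{\s(f)} \geq \sqrt{\lambda(f)}$ in \Cref{sanity check}, I would use the standard bound $\lambda(f) \leq \s(f)$: the spectral norm of a symmetric nonnegative matrix is at most its maximum row sum, and the row sums of $M_f$ are exactly the local sensitivities $\s(f,x)$.

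There is no real obstacle here. The only care needed is in the domain bookkeeping: the leaves of the star must be exactly the sensitive neighbors of $z$ that lie in $\D$, which matches the paper's definition of $\s(f,x)$ counting only $x^i\in\D$.
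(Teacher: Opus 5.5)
Your proposal is correct and matches the paper's proof: both observe that $M_{f^\cert_z}$ is the adjacency matrix of a star centered at $z$ with exactly $\s(f,z)$ leaves, so its spectral norm is $\sqrt{\s(f,z)}$, and then maximize over $z$. Your explicit computation of the star's norm and the max-row-sum argument for $\lambda(f)\leq\s(f)$ fill in details that the paper leaves implicit.
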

\begin{proof}
    Let
    $z \in \D$
    and
    $b:= 1-f(z)$.
    For
    $x,y \in \set{z} \cup f^{-1}(b)$
    let
    $$M_{f^\cert_z}(x,y) := \begin{cases}
        1
        &\predicate{if } f(x) \neq f(y) \predicate{ and there exists } i \in [n] \predicate{ such that } x^i=y\\
        0
        &\predicate{otherwise}
    \end{cases}$$
    Notice that the sensitivity matrix contains only $\s(f;z)$ 1's in the row and column indexed by $z$, and is 0 everywhere else.
    Therefore,
    $\lambda\of*{f^\cert_z} = \norm*{M_{f^\cert_z}} = \sqrt{\s(f,z)}$,
    which implies that 
    $\lambda^\cert(f) = \sqrt{\s(f)}$.
\end{proof}

\section{Certification complexity for $\qoc$ and $\qec$}\label{sec:equality Q_0 cert = Q_E cert}
In this section, we will prove the equivalence between $\qec^\cert$ and $\qoc^\cert$. Let us restate our theorem.

\QZeroCertEqualsQECert*

The proof uses {\em exact} amplitude amplification, {\cite[Theorem~4, second method]{BPM+02}}.

Recall that amplitude amplification is a quantum algorithm that amplifies the correctness of an algorithm that succeeds with probability at least $p$ to a constant success probability, using $O(\sqrt{\frac 1 p})$ iterations~\cite{BPM+02}.

Stated informally, \cite{BPM+02} also show that {\em exact} quantum amplitude amplification can be achieved when the probability of success on an input $z$ is known exactly (not just a lower bound).
In the certification setting, the certification input $z$ is known, and we are provided with a quantum algorithm for $f^\cert_z$ which is zero-error, but not exact, which means it can fail (output $\bot$) with constant probability, but never errs otherwise. The key is to observe that since $z$ is known, and the algorithm is known, the error probability on $z$ is known exactly, so zero error can be incurred on input $z$ by using exact error amplification. In addition, for inputs $y$ such that $f(z) \neq f(y)$, 
the zero-error quantum algorithm always outputs either 0 or $\bot$.    After running exact amplitude amplification parameterized for the error on $z$, the state of the algorithm on any of the remaining inputs $y$ such that $f(z) \neq f(y)$, will remain in the  subspace corresponding to the 0 or $\bot$ outputs, which is orthogonal to the subspace corresponding to the 1 output.

The more formal statement of exact amplitude amplification  follows.

\begin{theorem}[exact quantum amplitude amplification {\cite[Theorem~4, second method]{BPM+02}}]\label{exact quantum amplitude amplification}
    Let
    $f : \zone^n \to \zone$
    be a Boolean total function. Let
    $\H = \H_0 \otimes \H_1$
    be an Hilbert space of dimension at least $2^n$ and $\pi_0$ (resp. $\pi_1$) be the projector onto $\H_0$ (resp. $\H_1$). Let $\of*{\ket{x}}_{x\in\zone^n}$ be a family of orthonormal states in $\H$ (that are not necessarily in $\H_0$ or $\H_1$). Let $\ket{\text{init}}$ be a state in $\H$. Let $G$ be a unitary. Let the success probability of $G$ be
    $p := \norm*{\pi_1 G\ket{\text{init}}}_2^2$.
    Let $S_0$ be a unitary such that for all
    $\ket{\chi} \in \H_0$
    $$S_0(\phi) \ket{\chi} = e^{i\phi} \ket{\chi}$$
    and for all
    $\ket{\chi} \in \H_1$
    $$S_0(\phi) \ket{\chi} = \ket{\chi}.$$
    Let $S_f(\varphi)$ be a unitary such that for all
    $x \in \zone^n$
    $$S_f(\varphi) \ket{x} = e^{i\varphi f(x)} \ket{x}.$$
    We will refer to $S_0(\phi)$ and $S_f(\varphi)$ as \emph{pseudo-reflections}.
    Let
    $Q(\phi,\varphi) := G S_0(\phi) G^\dagger S_f(\varphi)$.

    If
    $p \neq 0$,
    then there exists phases $\phi$ and $\varphi$ and a number of iterations $k$ such that
    $$Q(\phi,\varphi) Q(\pi,\pi)^k G \ket{\text{init}} \in \H_1$$
    and
    $$k \leq O\of*{\frac{1}{\sqrt{p}}}$$
    i.e. it is an algorithm that achieves exact success, with $O\of*{\frac{1}{\sqrt{p}}}$ queries to $G$ and $G^\dagger$, and relies on the knowledge of $p$.
\end{theorem}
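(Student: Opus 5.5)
The plan is to follow the classical Brassard--H{\o}yer--Mosca--Tapp analysis. Everything reduces to a two-dimensional invariant subspace in which $Q(\pi,\pi)$ acts as a rotation and the generalized iterate $Q(\phi,\varphi)$ acts as a controllable ``partial rotation''.

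\textbf{Step 1: reduce to two dimensions.} Write $G\ket{\text{init}} = \sin\theta\,\ket{\psi_1} + \cos\theta\,\ket{\psi_0}$. Here $\ket{\psi_b}$ is the normalized projection onto the ``good'' subspace ($b=1$) or the ``bad'' subspace ($b=0$), and $\sin^2\theta = p$ with $0<\theta\le\pi/2$; this is where $p\neq 0$ is used. I read the two pseudo-reflections in their standard roles: one multiplies good components by a phase, the other multiplies the component along the starting state by a phase and fixes its orthogonal complement. With that reading, $\mathrm{span}\set*{\ket{\psi_0},\ket{\psi_1}}$ is invariant under $Q(\phi,\varphi)$ for every choice of phases. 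Checking this invariance is routine but requires matching the projector conventions of the statement carefully.

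\textbf{Step 2: Grover iterations.} With $\phi=\varphi=\pi$, the operator $Q(\pi,\pi)$ restricted to this plane is, up to a global sign, a rotation by $2\theta$. Hence
$$Q(\pi,\pi)^k G\ket{\text{init}} = \pm\bigl(\sin((2k+1)\theta)\ket{\psi_1} + \cos((2k+1)\theta)\ket{\psi_0}\bigr).$$
Take $k := \lfloor \pi/(4\theta) - 1/2\rfloor$. Then $(2k+1)\theta \le \pi/2$ and the remaining angle satisfies $\pi/2 - (2k+1)\theta < 2\theta$. Since $\theta \geq \sin\theta = \sqrt{p}$, we get $k = O(1/\theta) = O(1/\sqrt{p})$. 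This gives the claimed query bound: each iteration uses one $G$ and one $G^\dagger$, and the pseudo-reflection $S_f$ is the only part that queries.

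\textbf{Step 3: the final partial step.} This is the main obstacle: showing there exist $\phi,\varphi$ with $Q(\phi,\varphi)$ sending the state $\sin\alpha\ket{\psi_1}+\cos\alpha\ket{\psi_0}$, where $\alpha=(2k+1)\theta$ and $\pi/2-2\theta<\alpha\le\pi/2$, entirely into the good subspace.

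I would compute the $2\times 2$ matrix of $Q(\phi,\varphi)$ in the basis $(\ket{\psi_1},\ket{\psi_0})$ and write the $\ket{\psi_0}$-amplitude of the output as an explicit expression in $e^{i\phi}$, $e^{i\varphi}$, $\sin\theta$, $\cos\theta$, $\sin\alpha$, $\cos\alpha$. Then set this amplitude to zero. Following BHMT, I would first try the restriction $\phi=\varphi$ and solve the resulting equation. Taking real and imaginary parts, it reduces to a single real condition of the form $\cot\alpha = e(\phi,\theta)$, where $e$ is a real function.

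It then remains to show that, as $\phi$ ranges over $[0,\pi]$, the quantity $e(\phi,\theta)$ sweeps continuously from $0$ to a value that is at least $\cot(\pi/2-2\theta)$. The intermediate value theorem then gives the required $\phi$. This is exactly the fact that one generalized iteration can rotate by any angle between $0$ and $2\theta$, which is precisely what the choice of $k$ leaves to be covered. The edge case $\alpha=\pi/2$ is trivial: take $\phi=\varphi=0$, so that $Q(0,0)$ is the identity.

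The phases depend on $\theta$, hence on the exact value of $p$. This is why the theorem requires knowing $p$ exactly, and why it applies in the certification setting, where $z$ and the algorithm are known.
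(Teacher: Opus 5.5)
The paper gives no proof of its own here. It relies on the second method of \cite[Theorem~4]{BPM+02}, and your architecture ($k$ ordinary iterations followed by one generalized iteration) is exactly that method. Your re-reading in Step~1 is necessary, not just a convention check. As printed ($S_0(\phi)$ a phase on all of $\H_0$, $S_f$ acting on an arbitrary orthonormal family), the claim fails. For example, take $f\equiv 0$, $\dim\H=2^n$, $G=I$, and $\ket{\text{init}}$ with nonzero components in both $\H_0$ and $\H_1$; then the output $S_0(\phi)S_0(\pi)^k\ket{\text{init}}$ is never in $\H_1$. Under the BHMT reading ($\phi$ is the phase on the start state, $\varphi$ the phase on good states), your Steps~1 and~2 are correct.

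The genuine gap is in Step~3: the restriction $\phi=\varphi$ cannot work. The $\ket{\psi_0}$-amplitude of $Q(\phi,\varphi)\bigl(\sin\alpha\ket{\psi_1}+\cos\alpha\ket{\psi_0}\bigr)$ vanishes iff
\[
\cot\alpha=e^{i\varphi}\sin 2\theta\,\bigl(-\cos 2\theta+i\cot(\phi/2)\bigr)^{-1}.
\]
This is one complex equation, i.e.\ two real ones, so the real and imaginary parts do \emph{not} collapse to a single real condition. With $\varphi=\phi$, the imaginary part of the right-hand side vanishes only for $\phi=\pi$ or for $\sin^2(\phi/2)=1/(4\sin^2\theta)=1/(4p)$. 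For $p<1/4$ the latter is impossible, so an equal-phase final step can only rotate by $0$ or by $2\theta$; for $p\ge 1/4$ it reaches only finitely many angles. So there is no real sweep $e(\phi,\theta)$ to which the intermediate value theorem can be applied.

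The repair is to keep the two phases independent, as BHMT do. Taking moduli gives a real condition on $\phi$ alone:
\[
\cot\alpha=\frac{\sin 2\theta}{\sqrt{\cos^2 2\theta+\cot^2(\phi/2)}}.
\]
Its right-hand side increases continuously on $(0,\pi)$ from $0$ to $|\tan 2\theta|$ (read as $+\infty$ when $\theta=\pi/4$). The target value is always strictly below this supremum:
\begin{itemize}
\item when $\theta<\pi/4$, one has $\cot\alpha<\tan 2\theta$;
\item when $\pi/4\le\theta<\pi/2$, one has $k=0$, $\alpha=\theta$, and $\cot\theta<|\tan 2\theta|$.
\end{itemize}
So the intermediate value theorem gives $\phi$. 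Then set $\varphi:=\arg\bigl(-\cos 2\theta+i\cot(\phi/2)\bigr)$; this makes the right-hand side of the complex equation real and equal to $\cot\alpha$. With this change the rest of your argument goes through: the choice of $k$, the bound $k=O(1/\sqrt p)$, the edge case $\alpha=\pi/2$, and the dependence of the phases on the exact value of $p$.
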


\begin{lemma}[exact quantum certification algorithm from zero-error quantum certification algorithm]
    Let
    $f : \D \subseteq \zone^n \to \zone$
    be a Boolean (possibly partial) function. Let
    $\H = \hat{\H}_0 \otimes \hat{\H}_1 \otimes \hat{\H}_\bot$
    be an Hilbert space of dimension $2^n$ and $\hat{\pi}_0$ (resp. $\hat{\pi}_1$ and $\hat{\pi}_{0,\bot}$) be the projector onto $\hat{\H}_0$ (resp. $\hat{\H}_1$ and $\hat{\H}_0 \oplus \hat{\H}_\bot$). Let $\of*{\ket{x}}_{x\in\zone^n}$ be a family of orthonormal states in $\H$ (that are not necessarily in $\hat{\H}_0$, $\hat{\H}_1$ or $\hat{\H}_\bot$). Let $\ket{\text{init}}$ be a state in $\H$.
    
    Let
    $0 < \varepsilon < 1$.
    Let
    $z \in \D$.
    Let $A$ be a $\qoc$ algorithm computing $f^\cert_z$ in $\qoc\of*{f^\cert_z}$ queries, i.e. for any input
    $x \in \set{z}\cup\comprehension{y \in \D}{f(y) \neq f(z)}$
    the algorithm $A$, when given query access to input $x$ (which we denote $A^x$), has measurement probabilities satisfying
    $\norm*{\hat{\pi}_{1-f(x)} A^x \ket{\text{init}}}_2^2 = 1$
    and
    $\norm*{\hat{\pi}_{f(x)} A^x \ket{\text{init}}}_2^2 \leq \frac{1}{2}$.
    
    Let $S_{0,\bot}$ be a unitary such that for all
    $\ket{\chi} \in \H_0 \otimes \H_\bot$
    $$S_{0,\bot}(\phi) \ket{\chi} = e^{i\phi} \ket{\chi}$$
    and for all
    $\ket{\chi} \in \H_1$
    $$S_{0,\bot}(\phi) \ket{\chi} = \ket{\chi}.$$
    Let $S_f(\varphi)$ be a unitary such that for all
    $x \in \D$
    $$S_f(\varphi) \ket{x} = e^{i\varphi f(x)} \ket{x}.$$
    and acts as the identity on any other $\ket{x}$. For phases $\phi$ and $\varphi$, define algorithm $Q(\phi,\varphi)$, when given query access to input
    $x \in \set{z}\cup\comprehension{y \in \D}{f(y) \neq f(z)}$,
    as
    $$Q^x(\phi,\varphi) := A^x S_{0,\bot}(\phi) \of*{A^x}^\dagger S_f(\varphi)$$
    where in order to compute $\of*{A^x}^\dagger$, the inverse of algorithm $A$ is implemented with the same number of queries as $A$, which is always possible since the quantum query operator is its own inverse.

    Then, the theorem states that there exists phases $\phi$ and $\varphi$ and a number of iterations $k$ such that
    $$Q^z(\phi,\varphi) Q^z(\pi,\pi)^k A^z \ket{\text{init}} \in \hat{\H}_1$$
    and for all
    $y \in D$
    such that
    $f(y) \neq f(z)$
    $$Q^y(\phi,\varphi) Q^y(\pi,\pi)^k A^y \ket{\text{init}} \in \hat{\H}_0 \otimes \hat{\H}_\bot$$
    and
    $$k \leq O\of*{1}$$
    i.e. it is an algorithm that achieves exactness w.r.t. measurement $(\hat{\pi}_1,\hat{\pi}_{0,\bot})$ in $O\of*{\qoc\of*{f^\cert_z}}$ queries.
\end{lemma}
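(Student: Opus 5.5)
The plan is to apply \Cref{exact quantum amplitude amplification} with $G := A^z$, since $z$ is fixed and public. The good subspace is $\hat{\H}_1$ and the bad subspace is $\hat{\H}_0 \oplus \hat{\H}_\bot$. The point is that the success probability
$$p := \norm*{\hat{\pi}_1 A^z \ket{\text{init}}}_2^2$$
is a number that can be computed classically in advance: $z$ is known, and $A$ is a fixed circuit. So the phases $\phi,\varphi$ and the iteration count $k$ given by the theorem can be hard-wired into the algorithm. This dependence on $z$ is allowed, because we are computing $f^\cert_z$ for one fixed $z$.

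First, bound $p$. Since $f^\cert_z(z)=1$ and $A$ is zero-error, $A^z$ never outputs $0$. It outputs $\bot$ with probability at most $\frac12$, so $p \geq \frac12$. If $p=1$, take $k=0$ and trivial phases. Otherwise the theorem gives $k \leq O\of*{1/\sqrt{p}} = O(1)$ iterations of $Q^z(\pi,\pi)$, followed by one $Q^z(\phi,\varphi)$, and these map $A^z\ket{\text{init}}$ exactly into $\hat{\H}_1$. Each $Q$ uses one call to $A$ and one to $A^\dagger$. The inverse $A^\dagger$ costs the same number of queries, because the query oracle is self-inverse. Hence the total cost is $(2k+3)\,\qoc\of*{f^\cert_z} = O\of*{\qoc\of*{f^\cert_z}}$.

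Next, the inputs $y$ with $f(y)\neq f(z)$. Here I run the identical circuit, with the same hard-wired phases, but with oracle $y$. Zero-error means $A^y$ never outputs $1$, so $\ket{\psi_y} := A^y\ket{\text{init}} \in \hat{\H}_0 \oplus \hat{\H}_\bot$. The claim is that $\ket{\psi_y}$ is, up to a global phase, an eigenvector of every $Q^y(\phi',\varphi')$, which gives the result by induction. To see this, decompose each iterate into its two pseudo-reflections. The marking pseudo-reflection acts as the scalar $e^{i\phi'}$ on the bad subspace, so it acts as a scalar on $\ket{\psi_y}$. The pseudo-reflection about the initial state, $A^y S(\cdot) (A^y)^\dagger$, maps $A^y\ket{\text{init}}$ to a phase times itself. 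So every iterate sends $\ket{\psi_y}$ to $e^{i\theta}\ket{\psi_y}$, and the final state stays in $\hat{\H}_0 \otimes \hat{\H}_\bot$. Measuring with $(\hat{\pi}_1,\hat{\pi}_{0,\bot})$ therefore outputs $1$ exactly on $z$ and $0$ exactly on every $y$, which is an exact algorithm for $f^\cert_z$.

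The main obstacle I expect is bookkeeping rather than substance. I have to match the roles of $S_0$ and $S_f$ in \Cref{exact quantum amplitude amplification} to the lemma's $S_{0,\bot}$ and $S_f$. The marking operator must be diagonal in the output decomposition, and the reflection about the initial state must be conjugated by $A^x$. Once this is checked, the theorem applies verbatim to $z$, and the $y$ case is the eigenvector argument above. I would also verify that the theorem's hypothesis on the Hilbert space dimension, and the fact that $S_f$ acts as the identity outside $\D$, do not interfere. The latter is harmless, since only the two-dimensional span of $A^z\ket{\text{init}}$ and its projection onto $\hat{\H}_1$ matters for $z$, and only the one-dimensional span of $\ket{\psi_y}$ matters for $y$.
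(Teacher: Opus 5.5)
Your proposal is correct and follows the paper's proof: exact amplitude amplification with $G := A^z$ and $p \geq \frac{1}{2}$ fixes the phases and gives $k = O(1)$ for $z$. For every $y$ with $f(y)\neq f(z)$, the state $A^y\ket{\text{init}}$ lies in $\hat{\H}_0 \oplus \hat{\H}_\bot$, and every iterate maps it to a global phase times itself. Your bookkeeping is not a deviation: letting the marking phase act on the output decomposition and conjugating the reflection about $\ket{\text{init}}$ by $A^x$ is exactly the reading under which the paper's one-line global-phase claim holds, and it makes explicit what the lemma's literal definition of $Q^x$ leaves garbled.
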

\begin{proof}
    Let us first consider the case of input $z$, which will fix phases $\phi$ and $\varphi$ in order to have exactness, and then prove exactness for any other input under those parameters:
    \begin{itemize}
        \item For input $z$: Apply \Cref{exact quantum amplitude amplification} with
        $\H_1 := \hat{\H}_1$,
        $\H_0 := \hat{\H}_0 \otimes \hat{\H}_\bot$,
        $G := A^z$
        and
        $p := \norm*{\hat{\pi}_1 A^z \ket{\text{init}}}_2^2 \geq \frac{1}{2} > 0$.
        This gives exactness on input $z$ and fixes phases $\phi$ and $\varphi$ and the number of iterations
        $k \leq O\of*{\frac{1}{\sqrt{p}}} \leq O(1)$,
        which corresponds to $O\of*{\qoc\of*{f^\cert_z}}$ queries (since $G$ and $G^\dagger$ are computed using the same number of queries).

        \item For input
        $y \in \D$
        such that
        $f(y) \neq f(z)$:
        Notice that for any phases $\phi'$ and $\varphi'$, operator $Q^y(\phi',\varphi')$ only adds a global phase, i.e. there exists a phase $\gamma$ such that
        $Q(\phi',\varphi') A^y \ket{\text{init}} = e^{i\gamma} A^y \ket{\text{init}}$.
        In particular with phases $\phi$, $\varphi$ and $\pi$ it implies that
        $Q^y(\phi,\varphi) Q^y(\pi,\pi)^k G \ket{\text{init}} \in \hat{\H}_0 \otimes \hat{\H}_\bot$.
        Hence exactness on input $y$.\qedhere
    \end{itemize}
\end{proof}
\noindent\Cref{Q_0 cert = Q_E cert} follows as a corollary.

\section{Polynomial certification complexity }

In this section, we give the proofs of our results pertaining to the certification of polynomial degree comlexity measures.

\subsection{Exact polynomials}

In order to give a flavor of certifying polynomials, we start by giving  a direct proof that $\deg^\cert(f) \leq \cc(f)$.
(Notice that since $\deg(f) \leq \dqc(f)$, we can also immediately get $ \deg^\cert(f) \leq \cc^\cert(f) = \cc(f)$ by applying~\Cref{toolbox} \cref{cert monotonicity}.)

\begin{remark}[elementary proof of $\deg^\cert(f) \leq \cc(f)$]\label{deg cert < C}
    Let
    $f:\D \subseteq \zone^n \to \zone$
    be a Boolean (possibly partial) function. Then
    $$\deg^\cert(f)
    = \max\limits_{z\in\D} \deg\of*{f^\cert_z}
    \leq \max\limits_{z\in\D} \cc(f,z)
    = \cc(f)$$
    where the inequality follows from the fact that
    $P(x) := \prod\limits_{i \in c} x_i - z_i$
    is an exact polynomial for $f^\cert_z(x)$ whenever $c$ is an $f$-certificate of $z$. \qedhere
\end{remark}

By construction, the degree certification measure $\deg^\cert$ is a lower bound on $\qec^\cert$ -- as one can prove by applying \cref{cert monotonicity} from \Cref{toolbox} to
$\deg \leq 2\qec$~\cite[Theorem~4.3]{BBCM+01}.
Hence, it is a lower bound on $\qoc$, by the collapse $\qec^\cert = \qoc^\cert$ (\Cref{Q_0 cert = Q_E cert}). In \Cref{sec:osdeg} we will show another way to prove that degree certification gives a lower bound on $\qoc$ by proving that it collapses with its one-sided-error counterpart.

Finally, for completeness, we restate here the upper bounds in terms of $\deg^\cert$ proved in the introduction (\Cref{subsec:polynomial relations}).

\UpperBoundOnCByDegCertBs*

\subsection{Approximate polynomials with two-sided error}

We prove a new characterization of $\QC$ in terms of certification complexity  for approximate polynomials.

\PolynomialCharacterizationQuantumCertificateComplexity*
\begin{proof}
    Equalities and inequalities between complexity measures are up to $O(1)$ multiplicative factor.
    Apply \cref{A > B > A cert = B cert} from \Cref{toolbox} to
    $\qqc^\cert = \QC = \sqrt{\RC} = \sqrt{\fbs} \leq \adeg \leq \bmadeg \leq \qqc$
    where the first equality follows from \Cref{characterizations of QC: computational}, the second equality follows from \cite[Theorem~7]{Aar08}, the third equality follows from \Cref{local cert = local}, the first inequality follows from \cite[Theorem~28]{ABK21}, the second inequality follows from definitions, and the last inequality follows from \cites[Lemma~10]{AA15}[Lemma~2.2]{AAI+16}.
\end{proof}

\subsection{Approximate polynomials with one-sided error}\label{sec:osdeg}

We introduce one-sided-error approximate degree as the zero-error counterpart of degree and approximate degree, hence a new lower bound on zero-error quantum query complexity (see \Cref{def:deg measures}).

\begin{fact}\label{adeg < osdeg < deg}
    For any Boolean (possibly partial) function $f$
    $$\adeg(f) \leq \osdeg(f) \leq \deg(f).$$
\end{fact}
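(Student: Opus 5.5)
The plan is to read both inequalities of Fact~\ref{adeg < osdeg < deg} directly off \Cref{def:deg measures}. In that definition every measure is the same program $\Delta_{\varepsilon_0,\varepsilon_1,\beta}(f)$ with bounding parameter $\beta=1$, and only the error parameters differ. The program is monotone in the error parameters: if $\varepsilon_0\leq\varepsilon_0'$ and $\varepsilon_1\leq\varepsilon_1'$, then every polynomial feasible for $\Delta_{\varepsilon_0,\varepsilon_1,1}(f)$ is also feasible for $\Delta_{\varepsilon_0',\varepsilon_1',1}(f)$. The bounding condition $P(x)\in[0,1]$ on $\zone^n$ is unchanged, and each pointwise error constraint only gets weaker. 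Since we minimise $\deg(P)$ over a larger feasible set, we get $\Delta_{\varepsilon_0',\varepsilon_1',1}(f)\leq\Delta_{\varepsilon_0,\varepsilon_1,1}(f)$.

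For the right inequality, $\osdeg(f)\leq\deg(f)$, I take $b\in\zone$ and write $\osdeg[^b](f)=\Delta_{\varepsilon_0,\varepsilon_1,1}(f)$, where exactly one of $\varepsilon_0,\varepsilon_1$ is $\frac13$ and the other is $0$. Each of these is at least $0$, so monotonicity applied against $\deg(f)=\Delta_{0,0,1}(f)$ gives $\osdeg[^b](f)\leq\deg(f)$ for both $b$. Taking the maximum over $b$ gives $\osdeg(f)\leq\deg(f)$.

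For the left inequality, $\adeg(f)\leq\osdeg(f)$, I use monotonicity in the other direction. Both parameters of $\osdeg[^0](f)=\Delta_{\frac13,0,1}(f)$ are at most those of $\adeg(f)=\Delta_{\frac13,\frac13,1}(f)$, so $\adeg(f)\leq\osdeg[^0](f)$. It follows that $\adeg(f)\leq\osdeg[^0](f)\leq\max_b\osdeg[^b](f)=\osdeg(f)$. The bound for $b=0$ alone already suffices here.

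No step is a real obstacle. The one thing to confirm is that all three measures are defined with the same bounding parameter $\beta=1$ and the same error threshold $\frac13$, so that no renormalisation argument from Appendix~\ref{appendix:boundedness} is needed. Both points can be read directly from \Cref{def:deg measures}. The inequalities then hold exactly, not only up to constant factors.
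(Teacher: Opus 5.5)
Your proposal is correct and takes the paper's approach: the paper just says the fact ``follows directly from their definitions''. Your observation that $\Delta_{\varepsilon_0,\varepsilon_1,1}(f)$ can only decrease as the error parameters grow, with the same bounding parameter $\beta=1$ and threshold $\frac{1}{3}$ throughout, is exactly what that remark leaves implicit.
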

\begin{proof}
    This follows directly from their definitions.
\end{proof}

\LowerBoundOnQZeroByOsdeg*
\begin{proof}
    Let $A$ be a $T$-query zero-error quantum query algorithm for $f$. Then according to \cite[Lemma~4.2]{BBCM+01} there exist multilinear real polynomials $P^0$ and $P^1$ of degree at most $2T$ such that for all
    $x \in \D$
    $$P^0(x) = \Pr\bracket*{A\predicate{ outputs }0\predicate{ on }x}
    \qquad\predicate{and}\qquad
    P^1(x) = \Pr\bracket*{A\predicate{ outputs }1\predicate{ on }x}.$$
    Then, not only each $P^b$ is a $\frac{1}{2}$-approximating bounded polynomial for $f$, but also, since $A$ is zero-error, it is moreover $b$-sided-error. So
    $$\osdeg[^0_{\frac{1}{2}}](f) \leq \deg\of*{P^0} \leq 2T
    \qquad\predicate{and}\qquad
    \osdeg[^1_{\frac{1}{2}}](f) \leq \deg\of*{P^1} \leq 2T.$$
    Thus
    $\osdeg[_{\frac{1}{2}}](f) \leq 2T \leq 2\qoc(f)$
    and more generally for any
    $0 < \varepsilon \leq \frac{1}{2}$. Similarly, using standard error reduction of a $\qoc$ algorithm by iterating it a constant number of time, one gets
    $\qoc(f) \geq \Omega\of*{\osdeg[_\varepsilon](f)}$.
\end{proof}

We can easily derive the following relation on the corresponding certification measures.
\begin{corollary}\label{osdeg cert < Q_0 cert}
    For any
    $0 < \varepsilon < 1$
    and any Boolean (possibly partial) function $f$,
    $$\qoc^\cert(f) \geq \Omega\of*{\osdeg[_\varepsilon]^\cert(f)}.$$
\end{corollary}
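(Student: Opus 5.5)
This follows directly by applying the certification-monotonicity item of \Cref{toolbox} to \Cref{osdeg < Q_0}. Fix $0<\varepsilon<1$. \Cref{osdeg < Q_0} says there is a constant $c_\varepsilon>0$ such that for every Boolean (possibly partial) function $g$,
$$c_\varepsilon\,\osdeg[_\varepsilon](g) \leq \qoc(g).$$
It is stated for arbitrary partial functions, so it holds in particular for every $g=f^\cert_z$ with $z\in\D$.

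Define the measure $\mathsf{M}(g):=c_\varepsilon\,\osdeg[_\varepsilon](g)$ and take $\mathsf{N}:=\qoc$. The hypothesis $\mathsf{M}(g)\leq\mathsf{N}(g)$ of \cref{cert monotonicity} from \Cref{toolbox} then holds for all $g$. That item gives $\mathsf{M}^\cert(f)\leq\qoc^\cert(f)$, with no stability assumption needed.

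It remains to note that scaling by a constant commutes with the certification transform:
$$\mathsf{M}^\cert(f)=\max_{z\in\D} c_\varepsilon\,\osdeg[_\varepsilon](f^\cert_z)=c_\varepsilon\,\osdeg[_\varepsilon]^\cert(f).$$
Combining the two displays yields $\qoc^\cert(f)\geq c_\varepsilon\,\osdeg[_\varepsilon]^\cert(f)=\Omega\of*{\osdeg[_\varepsilon]^\cert(f)}$, where the hidden constant depends only on $\varepsilon$.

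The only point to check is that the constant in \Cref{osdeg < Q_0} is uniform over all partial functions and depends on $\varepsilon$ alone. This is the case: its proof gives the factor $2$ for $\varepsilon\leq\frac12$, and for larger $\varepsilon$ a constant number of repetitions of the zero-error algorithm, with the number depending only on $\varepsilon$.
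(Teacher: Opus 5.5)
Your proposal is correct and matches the paper's argument, which is the single step of applying \cref{cert monotonicity} from \Cref{toolbox} to \Cref{osdeg < Q_0}; absorbing the constant into $\mathsf{M}:=c_\varepsilon\,\osdeg[_\varepsilon]$ is a harmless way to account for that item being stated for exact inequalities. One small slip in your last paragraph, which also appears in the paper's proof of \Cref{osdeg < Q_0}: since $\osdeg[_\varepsilon]$ is non-increasing in $\varepsilon$, the factor $2$ applies directly for $\varepsilon\geq\frac{1}{2}$, and it is for $\varepsilon<\frac{1}{2}$ that one repeats the zero-error algorithm about $\log_2(1/\varepsilon)$ times, so your conclusion that the constant depends only on $\varepsilon$ still holds.
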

\begin{proof}
    Apply \cref{cert monotonicity} from \Cref{toolbox} to \Cref{osdeg < Q_0}.
\end{proof}

Before we prove that the one-sided-error approximate degree and exact degree certification measures collapse, we need the following lemma.

\begin{lemma}\label{deg cert < osdeg}
    For any
    $0 < \varepsilon < 1$
    and any Boolean (possibly partial) function
    $f:\D \subseteq \zone^n \to \zone$
    $$\deg^\cert(f) \leq O\of*{\osdeg[_\varepsilon](f)}.$$
\end{lemma}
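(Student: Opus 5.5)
The plan is to prove the bound for each fixed certification input $z \in \D$ separately. For each $z$ we will turn a one-sided-error approximating polynomial for $f$ into an exact bounded polynomial for $f^\cert_z$ by rescaling it. We choose the side of the error according to $f(z)$, so that the polynomial is already exact on $f^{-1}(1-f(z))$. Since $\osdeg[_\varepsilon](f)$ is the maximum of $\osdeg[^0_\varepsilon](f)$ and $\osdeg[^1_\varepsilon](f)$, both one-sided polynomials are available at degree at most $\osdeg[_\varepsilon](f)$.

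\emph{Case $f(z)=1$.} Take $P$ optimal for $\osdeg[^1_\varepsilon](f) = \Delta_{0,\varepsilon,1}(f)$. Then $P(y)=0$ exactly for every $y \in f^{-1}(0)$, we have $P(z) \geq 1-\varepsilon > 0$, and $P(x) \in [0,1]$ on all of $\zone^n$. Set $Q := P/P(z)$. Then $Q(z)=1$ and $Q(y)=0$ for all $y$ with $f(y)\neq f(z)$, so $Q$ represents $f^\cert_z$ exactly. Moreover $\deg(Q)=\deg(P)$, and $Q(x) \in \bracket*{0,\frac{1}{1-\varepsilon}}$ on the whole hypercube.

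\emph{Case $f(z)=0$.} Take $P$ optimal for $\osdeg[^0_\varepsilon](f)=\Delta_{\varepsilon,0,1}(f)$. Then $P=1$ exactly on $f^{-1}(1)$, $P(z)\leq\varepsilon$, and $P\in[0,1]$ on the cube. Set $Q := (1-P)/(1-P(z))$. This $Q$ vanishes on $f^{-1}(1)$, equals $1$ at $z$, has the same degree as $P$, and takes values in $\bracket*{0,\frac{1}{1-\varepsilon}}$ on $\zone^n$.

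In both cases $Q$ is an exact polynomial for $f^\cert_z$ that is bounded by the constant $\beta = \frac{1}{1-\varepsilon}$ instead of $1$. The only remaining issue is the bounding condition in $\deg = \Delta_{0,0,1}$. This is where the $O(\cdot)$ comes from, and it is the step needing care. I would invoke the boundedness lemmas of Appendix~\ref{appendix:boundedness}, as noted after \Cref{def:deg measures}. They say that changing the bounding parameter from $1$ to any constant $\beta$ changes $\deg$ by at most a constant factor depending only on $\beta$, with no change in the error parameters, here zero error.

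It is essential that the error stays exactly zero. Naively dividing by $\beta$ would break the value $1$ at $z$, so we must rely on the appendix's construction, which I expect composes $Q$ with a suitable univariate polynomial preserving $0$ and $1$ and mapping $[0,\beta]$ into $[0,1]$. Granting this, $\deg\of*{f^\cert_z} \leq O_\varepsilon\of*{\osdeg[_\varepsilon](f)}$ for every $z$, and taking the maximum over $z\in\D$ gives $\deg^\cert(f) \leq O\of*{\osdeg[_\varepsilon](f)}$.
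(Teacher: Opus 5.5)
Your proof is correct and follows the paper's own argument. For each $z$ you take the one-sided polynomial that is exact on $f^{-1}(1-f(z))$ and rescale it into an exact representation of $f^\cert_z$ with values in $\bigl[0,\frac{1}{1-\varepsilon}\bigr]$; you then apply the exact range-reduction lemma (\Cref{range-reduction: exact} with $C=\frac{1}{1-\varepsilon}>1$), which costs only a constant factor in degree depending on $\varepsilon$. Your matching of the two cases to $\Delta_{0,\varepsilon,1}$ and $\Delta_{\varepsilon,0,1}$ agrees with \Cref{def:deg measures}; the paper's write-up labels these two polynomials the other way round, which is harmless because the one-sided degree takes the maximum of both.
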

\begin{proof}
    Recall
    $\osdeg[_\varepsilon](f) = \max\of*{\osdeg[^0_\varepsilon](f),\osdeg[^1_\varepsilon](f)}$.
    Let $P^0$ and $P^1$ be the corresponding bounded polynomials that approximate $f^\cert_z$ with one-sided error $\varepsilon$, i.e.
    for all
    $x \in f^{-1}(0)$
    they satisfy 
    $P^0(x)=0$
    and
    $\card*{P^1(x)} \leq \varepsilon$,
    and for all
    $y \in f^{-1}(1)$
    they satisfy
    $\card*{P^0(y)-1} \leq \varepsilon$,
    and
    $P^1(y)=1$.
    
    Let
    $z \in \D$
    and
    $b := 1-f(z)$.
    Define polynomial $P_z$ as
    $$P_z(x) := \frac{P^b(x) - b}{P^b(z) - b}.$$
    Then $P_z$ is equal to $f^\cert_z$ on its domain, and is bounded by $\frac{1}{1-\varepsilon}$ since $P^b$ is bounded by $1$ and the denominator satisfies
    $1-\varepsilon \leq \card*{P^b(z)-b}$.
    Thus,
    \begin{align*}
        \osdeg[_\varepsilon]^\cert(f)
        &= \max\limits_{z\in\D} \osdeg[_\varepsilon]\of*{f^\cert_z}\\
        &= \max\limits_{z\in\D} \deg\of*{P_z}\\
        &\geq \max\limits_{z\in\D} \Delta_{0,0,\frac{1}{1-\varepsilon}} \of*{f^\cert_z}\\
        &\geq \Omega\of*{\max\limits_{z\in\D} \deg\of*{f^\cert_z}}\\
        &\geq\Omega\of*{\deg^\cert(f)}\\
    \end{align*}
    where the second inequality follows from \Cref{range-reduction: exact}.
\end{proof}

We now state and prove the collapse of exact and one-sided-error approximate degree certification complexities. This result can be viewed as the degree analogue of the collapse of $\dqc^\cert$ and $\roc^\cert$, as well as 
$\qec^\cert$ with $\qoc^\cert$.

\CollapseDegCertOsdegCert*
\begin{proof}
    Apply \cref{A > B > A cert = B cert} from \Cref{toolbox} to
    $\deg \geq \osdeg[_\varepsilon] \geq \Omega\of*{\deg^\cert}$
    where the last inequality follows from \Cref{deg cert < osdeg}. The lower bound on $\qoc^\cert$ is simply \Cref{osdeg cert < Q_0 cert}.
\end{proof}

Thus, both the exact/zero-error quantum query complexity collapses and the exact/one-sided-error degree collapse are individually sufficient to prove that the degree certification measure is a novel lower bound on $\qoc$.

\begin{note}
    Note that $\OR$ provides a separation showing that $\deg^\cert \neq \adeg^\cert$.
\end{note}

Our final result in this section is that rational degree is a lower bound on one-sided-error approximate degree certification complexity. First, we prove the following lemma.

\begin{lemma}\label{ndeg cert = ndeg}
    For any Boolean (possibly partial) function
    $f:\D \subseteq \zone^n \to \zone$,
    $$\rdeg^\cert(f) = \rdeg(f).$$
\end{lemma}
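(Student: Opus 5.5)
The plan is to prove the two inequalities separately, working throughout with the characterization $\rdeg(g) = \max\set*{\ndeg(g),\ndeg(\neg g)}$ from \cite[Fact~3]{KKW+26}.

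\textbf{Step 1: $\rdeg^\cert(f) \leq \rdeg(f)$.} I will check that $\rdeg$ is stable and then invoke \cref{cert is easier} from \Cref{toolbox}.
\begin{itemize}
\item Monotonicity under restriction is immediate: a representation $P/Q$ valid on $\D$ remains valid on any $\D' \subseteq \D$.
\item Invariance under negation holds because if $P/Q = f$ on $\D$, then $(Q-P)/Q = \neg f$ on $\D$, and the degree does not increase.
\end{itemize}
Equivalently, the maximum of $\ndeg$ and $\ndeg$ of the negation is symmetric in $f$ and $\neg f$.

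\textbf{Step 2: $\rdeg(f) \leq \rdeg^\cert(f)$.} Let $d := \rdeg^\cert(f)$. It suffices to show $\ndeg(f) \leq d$ and $\ndeg(\neg f) \leq d$.

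For $\ndeg(f)$, fix any $z \in f^{-1}(1)$. Then
$$\ndeg\of*{f^\cert_z} \leq \rdeg\of*{f^\cert_z} \leq d.$$
So there is a polynomial $P_z$ of degree at most $d$ with $P_z(z) \neq 0$ and $P_z(y) = 0$ for all $y \in f^{-1}(0)$. Now set
$$P := \sum_{z \in f^{-1}(1)} c_z P_z$$
with real coefficients $c_z$. This polynomial has degree at most $d$ and vanishes on $f^{-1}(0)$.

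For each $x \in f^{-1}(1)$, the value $P(x)$ is a linear form in the vector $(c_z)_z$. This form is not identically zero, because its coefficient on $c_x$ is $P_x(x) \neq 0$. Its zero set is therefore a hyperplane. There are only finitely many such $x$, so a generic choice of $(c_z)$ (for instance, avoiding a finite union of hyperplanes) makes $P(x) \neq 0$ for every $x \in f^{-1}(1)$. Hence $\ndeg(f) \leq d$.

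The symmetric argument handles $\ndeg(\neg f)$. For $z \in f^{-1}(0)$, the function $f^\cert_z$ is $1$ at $z$ and $0$ on $f^{-1}(1)$. So a nondeterministic polynomial for $f^\cert_z$ is nonzero at $z$ and vanishes on $f^{-1}(1) = (\neg f)^{-1}(0)$, which is exactly the building block needed for $\neg f$. The same generic linear combination then gives $\ndeg(\neg f) \leq d$. Degenerate cases where $f^{-1}(0)$ or $f^{-1}(1)$ is empty are trivial, since both sides are then $0$.

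The main obstacle is combining the local witnesses $P_z$ into a single global polynomial without increasing the degree. The natural alternative, a sum of squares, would double the degree. The generic linear combination avoids this, and it only requires that each $P_x$ be nonzero at its own point $x$, which is exactly what certification provides.
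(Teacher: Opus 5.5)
Your proposal is correct and follows essentially the same route as the paper. The hard direction combines the local nondeterministic polynomials $P_z$ for $z \in f^{-1}(b)$ into a single $b$-nondeterministic polynomial via a linear combination with generic coefficients, and the easy direction rests on each $f^\cert_z$ being a restriction of $f$ or of $\neg f$. The differences are cosmetic: you choose all coefficients at once by avoiding a finite union of hyperplanes, while the paper picks them one at a time by induction from a set of $k+1$ nonzero reals; and you phrase the easy direction as stability of $\rdeg$ plus Toolbox item (ii), while the paper spells it out through monotonicity of $\ndeg$ under restriction.
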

\begin{proof}
    We use the characterization
    $\rdeg(f) = \max\set*{\ndeg(f),\ndeg(\neg f)}$~\cite[Fact~3]{KKW+26}.
    For
    $z \in \D$
    let $P_z$ be a polynomial such that
    $\deg\of*{P_z} = \ndeg\of*{f^\cert_z}$
    and
    $P_z(z) \neq 0$
    and for any
    $y \in \D$
    such that
    $f(y) \neq f(z)$
    it satisfies
    $P_z(y)=0$.

    Let
    $b \in \zone$
    and
    $z \in f^{-1}(b)$.
    Enumerate 
    $f^{-1}(b) =: \set*{z_1,\ldots,z_k}$
    where
    $z_1 = z$. Let us prove by recurrence on
    $j \in [k]$
    the property $\of*{R_j}$ defined as:
    There exists real numbers $\of*{\alpha_i}_{i\in[j]}$ such that polynomial
    $P_j := \sum\limits_{i\in[j]} \alpha_i \cdot P_{z_i}$
    is non zero on inputs $\set*{z_1,\ldots,z_j}$. For the base case $\of*{R_1}$, let
    $\alpha_1 := 1$.

    Suppose $\of*{R_j}$, which gives real numbers $\of*{\alpha_i}_{i\in[j]}$. Let $A$ be a set of $k+1$ distinct non-zero real numbers. Since
    $P_{z_{j+1}}\of*{z_{j+1}} = 1 \neq 0$
    and
    $P_j$
    is non zero on inputs $\set*{z_1,\ldots,z_j}$, then there exists some
    $\alpha_{j+1} \in A$
    such that
    $P_j + \alpha_{j+1} \cdot P_{z_{j+1}} = P_{j+1}$
    is non-zero on $\set*{z_1,\ldots,z_{j+1}}$, hence proving $\of*{R_{j+1}}$.

    Thus $\of*{R_k}$ is proved, giving
    $P^b:=P_k$
    which is a $b$-nondeterministic polynomial for $f$, i.e. for all
    $x \in f^{-1}(b)$
    it satisfies
    $P^b(x) \neq 0$
    and for all
    $x \in f^{-1}(1-b)$
    it satisfies
    $P^b(x) = 0$.
    
    Thus, having done this for each value of $b$, one gets
    \begin{align*}
        \rdeg(f)
        &\leq \max\limits_{b\in\zone} \deg\of*{P^b}\\
        &\leq \max\limits_{b\in\zone} \max\limits_{z\in f^{-1}(b)} \deg\of*{P_z}\\
        &\leq \max\limits_{z\in\D} \rdeg\of*{f^\cert_z}\\
        &= \rdeg^\cert(f)\\
        &= \max\limits_{z\in\D} \rdeg\of*{f^\cert_z}\\
        &= \max \set*{\max\limits_{z\in f^{-1}(0)} \max\set*{\ndeg\of*{f^\cert_z},\ndeg\of*{\neg f^\cert_z}}, \max\limits_{z\in f^{-1}(1)} \max\set*{\ndeg\of*{f^\cert_z},\ndeg\of*{\neg f^\cert_z}}}\\
        &\leq \max \set*{\max\limits_{z\in f^{-1}(0)} \max\set*{\ndeg(\neg f),\ndeg(f)}, \max\limits_{z\in f^{-1}(1)} \max\set*{\ndeg(f),\ndeg(\neg f)}}\\
        &= \max\set*{\ndeg(f),\ndeg(\neg f)}\\
        &= \rdeg(f)
    \end{align*}
    Hence
    $\rdeg(f) = \rdeg^\cert(f)$.
\end{proof}

\begin{corollary}\label{ndeg < osdeg cert}
    For any
    $0 < \varepsilon < 1$
    and any Boolean (possibly partial) function $f$
    $$\rdeg(f) \leq \osdeg[_\varepsilon]^\cert(f).$$
\end{corollary}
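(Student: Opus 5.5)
The plan is to chain three facts. First, \Cref{ndeg cert = ndeg} gives $\rdeg(f) = \rdeg^\cert(f)$. Second, I will show the pointwise inequality $\rdeg(g) \leq \osdeg[_\varepsilon](g)$ for every Boolean (possibly partial) function $g$. Third, applying \cref{cert monotonicity} of \Cref{toolbox} to that inequality gives $\rdeg^\cert(f) \leq \osdeg[_\varepsilon]^\cert(f)$. Combining the three yields the claim, with no constant factor lost.

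The only real content is the inequality $\rdeg(g) \leq \osdeg[_\varepsilon](g)$. I will use the characterization $\rdeg(g) = \max\set*{\ndeg(g),\ndeg(\neg g)}$ from \cite[Fact~3]{KKW+26}.

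Take a polynomial $P^1$ witnessing $\osdeg[^1_\varepsilon](g)$. On $g^{-1}(1)$ it satisfies $P^1(x) = 1$, and on $g^{-1}(0)$ it satisfies $\card*{P^1(x)} \leq \varepsilon$.

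Symmetrically, take $P^0$ witnessing $\osdeg[^0_\varepsilon](g)$. On $g^{-1}(0)$ it satisfies $P^0(x)=0$, and on $g^{-1}(1)$ it satisfies $P^0(x) \geq 1-\varepsilon > 0$.

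Then $P^0$ vanishes exactly on the $0$-inputs and is nonzero on the $1$-inputs, so it is a nondeterministic polynomial for $g$. Hence $\ndeg(g) \leq \deg(P^0)$.

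Likewise, $1 - P^1$ is zero on $g^{-1}(1)$ and at least $1-\varepsilon > 0$ on $g^{-1}(0)$. So it is a nondeterministic polynomial for $\neg g$, of the same degree as $P^1$, and $\ndeg(\neg g) \leq \deg(P^1)$.

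Taking the maximum gives $\rdeg(g) \leq \max\set*{\osdeg[^0_\varepsilon](g),\osdeg[^1_\varepsilon](g)} = \osdeg[_\varepsilon](g)$.

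The one point needing care is which one-sided polynomial corresponds to which side. The polynomial that is exact on $b$-inputs must be turned into a polynomial vanishing exactly on them, which is $P^b - b$. This is where the hypothesis $\varepsilon < 1$ enters: it guarantees that $P^b - b$ does not vanish on $g^{-1}(1-b)$. Apart from this, the argument is routine and I expect no obstacle.

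One could also skip the certification step and argue through \Cref{deg cert = osdeg cert}, but the route above is direct.
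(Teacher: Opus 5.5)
Your proposal is correct and is essentially the paper's proof: the pointwise bound $\rdeg(g) \leq \osdeg[_\varepsilon](g)$ via the characterization $\rdeg = \max\set*{\ndeg(g),\ndeg(\neg g)}$ of \cite[Fact~3]{KKW+26}, then \cref{cert monotonicity} of \Cref{toolbox}, then \Cref{ndeg cert = ndeg}. One cosmetic point: under \Cref{def:deg measures}, $\osdeg[^0_\varepsilon] = \Delta_{\varepsilon,0,1}$ allows error on the $0$-inputs and is exact on the $1$-inputs, so the correct pairing is $\ndeg(g) \leq \osdeg[^1_\varepsilon](g)$ and $\ndeg(\neg g) \leq \osdeg[^0_\varepsilon](g)$, as in the paper; your labels are swapped, which is harmless because only the maximum of the two is used.
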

\begin{proof}
    One gets
    $\rdeg^\cert \leq \osdeg[_\varepsilon]^\cert$
    by applying \cref{cert monotonicity} from \Cref{toolbox} to
    $$\rdeg(f) = \max\set*{\ndeg(f),\ndeg(\neg f)}
    \leq \max\set*{\osdeg[^1_\varepsilon](f),\osdeg[^0_\varepsilon](f)}
    = \osdeg[_\varepsilon](f)$$
    where the first equality follows from \cite[Fact~3]{KKW+26}. Then conclude with \Cref{ndeg cert = ndeg}.
\end{proof}
\subsection{Separations}
\subsubsection{Separations between degree certification and degree measures}

The $\Tribes$ function  is a natural candidate to separate measures from their certification measure, since $\Tribes^\cert_x$ reduces to solving an $\OR$ function on a block (for $0$-instances) or on one variable per block (for $1$-instances). This holds for any measure that is stable. For degree and approximate degree we get the following separations.

\begin{proposition}[quadratic separation between $\deg$ and $\deg^\cert$ and between $\adeg$ and $\adeg^\cert$ by a total function]\label{separation: deg^2 < deg cert and adeg^2 < adeg cert}
    ~
    \begin{enumerate}
        \item   $\deg\of*{\Tribes_{\sqrt{n},\sqrt{n}}} = n$
        and
        $\deg^\cert\of*{\Tribes_{\sqrt{n},\sqrt{n}}} = \deg\of*{\OR_{\sqrt{n}}} = \sqrt{n}$.
        \item    $\adeg\of*{\Tribes_{\sqrt{n},\sqrt{n}}} = \sqrt{n}$
        and
        $\adeg^\cert\of*{\Tribes_{\sqrt{n},\sqrt{n}}} = \adeg\of*{\OR_{\sqrt{n}}} = n^\frac{1}{4}$.
    \end{enumerate}
\end{proposition}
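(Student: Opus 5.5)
The plan splits into two parts: the full (non-certified) degrees, which I take from the literature, and the certified degrees, which reduce to $\OR_{\sqrt n}$ on one block or on one variable per block.

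\emph{Full degrees.} For $\deg(\Tribes_{\sqrt n,\sqrt n}) = n$, I will write $\Tribes$ as $\prod_{i}\bigl(1-\prod_{j}(1-x_{ij})\bigr)$. Expanding, the monomial $\prod_{i,j}x_{ij}$ appears with coefficient $\pm1$, and by uniqueness of the multilinear representation the degree is exactly $n$. For $\adeg(\Tribes_{\sqrt n,\sqrt n})=\Theta(\sqrt n)$, I cite the known bounds. The upper bound follows from $\adeg \le 2\qqc$ and $\qqc(\AND\circ\OR)=O(\sqrt n)$. The lower bound is the $\Omega(\sqrt n)$ bound of Sherstov and of Bun--Thaler for $\AND_{\sqrt n}\circ\OR_{\sqrt n}$. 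I will also use the standard facts $\deg(\OR_m)=m$ and $\adeg(\OR_m)=\Theta(\sqrt m)$ (Nisan--Szegedy).

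\emph{Upper bounds on the certified degrees.} Fix $z$.
\begin{itemize}
\item If $\Tribes(z)=0$, pick an all-zero block $B$ of $z$. On $\D(f^\cert_z)$, $f^\cert_z(x)=1-\OR(x_B)$. This holds because $x=z$ has $x_B=0$, while any $x$ with $\Tribes(x)=1$ has $\OR(x_B)=1$.
\item If $\Tribes(z)=1$, pick one index $j_i$ per block with $z_{i j_i}=1$, and set $S=\{(i,j_i)\}$. Then $f^\cert_z(x)=\AND(x_S)=1-\OR(\bar x_S)$. This holds because $z$ satisfies it, and any $x$ with $\Tribes(x)=0$ has an all-zero block, so some $x_{ij_i}=0$.
\end{itemize}
In both cases, a bounded exact (resp.\ approximate) polynomial for $\OR_{\sqrt n}$, composed with the restriction, negation and variable-flip maps, gives a bounded polynomial for $f^\cert_z$ of the same degree, since these maps preserve degree, boundedness and error. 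Taking the maximum over $z$ yields $\deg^\cert\le \deg(\OR_{\sqrt n})=\sqrt n$ and $\adeg^\cert\le\adeg(\OR_{\sqrt n})=O(n^{1/4})$.

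\emph{Matching lower bounds.} I embed $\OR_{\sqrt n}$ into $f^\cert_z$ by taking $z$ with $\Tribes(z)=0$, say $z=0^n$. Restrict to inputs where every block other than the first is fixed to $e_1$ (a single $1$), and the first block $y\in\{0,1\}^{\sqrt n}$ is free. Then $y=0$ gives $x=z$, and every $y\neq0$ gives $\Tribes(x)=1$, so all of $\{0,1\}^{\sqrt n}$ lies in the domain and $f^\cert_z$ restricts to the total function $1-\OR(y)$. Substituting these constants into a polynomial for $f^\cert_z$ does not increase degree, preserves boundedness on the whole cube and preserves the error. Hence $\deg(f^\cert_z)\ge\deg(\OR_{\sqrt n})=\sqrt n$ and $\adeg(f^\cert_z)\ge\adeg(\OR_{\sqrt n})=\Omega(n^{1/4})$.

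\emph{Main obstacle.} The only nontrivial ingredient is the $\Omega(\sqrt n)$ lower bound on the approximate degree of the full $\Tribes$. I will cite Sherstov and Bun--Thaler for it rather than reprove it, with equalities understood up to constant factors as elsewhere in the paper. The remaining steps are the restriction arguments above, where the only care needed is checking domain membership.
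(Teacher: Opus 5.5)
Your route is the paper's. The paper supports this proposition only by remarking that $\Tribes^\cert_z$ reduces to $\OR$ on an all-zero block of $z$, or to $\AND$ on one $1$-variable per block. The values of $\deg$ and $\adeg$ for $\Tribes$ and $\OR$ are left to the literature. Your upper bounds on $\deg(\Tribes^\cert_z)$ and $\adeg(\Tribes^\cert_z)$ are correct, as is your treatment of the full degrees (the maxonomial of $\Tribes$, $\adeg\leq 2\qqc$, and the Sherstov and Bun--Thaler lower bound). Making the lower-bound direction explicit by embedding $\OR_{\sqrt n}$ into a certification function is the right idea, and it is more than the paper writes down.

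The embedding as you wrote it fails, however. With $z=0^n$ and every block except the first fixed to $e_1$, the choice $y=0$ produces $x=(0^{\sqrt n},e_1,\ldots,e_1)$, not $z$. This $x$ has $\Tribes(x)=0$ and $x\neq z$, so it lies outside the domain $\{z\}\cup\Tribes^{-1}(1)$ of $\Tribes^\cert_{0^n}$. On your subcube the restricted function is therefore defined only for $y\neq 0$, where it is constantly $0$, and you get no lower bound at all.

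The repair is to choose $z$ to match the subcube. Take $z:=(0^{\sqrt n},e_1,\ldots,e_1)$, or more generally any $z$ whose only all-zero block is the first, with the other blocks fixed to their values in $z$. Then $y=0$ really gives $z$, and every $y\neq 0$ gives a $1$-input of $\Tribes$. So $\Tribes^\cert_z$ restricts to the total function $1-\OR(y)$ on $\zone^{\sqrt n}$. Your substitution argument then gives $\deg^\cert(\Tribes)\geq\deg(\OR_{\sqrt n})$ and $\adeg^\cert(\Tribes)\geq\adeg(\OR_{\sqrt n})$, as intended. (For $z=0^n$ itself one can still show $\deg(\Tribes^\cert_{0^n})\geq\sqrt n$, for instance with a dual witness that is a product over blocks of $(-1)^{|y|+1}$, but your subcube restriction does not give it.)
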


\subsubsection{Separation between $\deg^\cert$ and $\rdeg$}\label{sec: ndeg degcert separation}
 {\em The proofs in this section were obtained using significant assistance from GPT‑5.6 Sol.}

Finally, we show that exact degree certification complexity is asymptotically different from rational degree.
This is exhibited by the following family of functions.
Define
$$f_m := \AND_m \circ \of*{\lnot \Exact_{1,m}}$$
with variables $x_{i,1},\ldots,x_{i,m}$ in the $i$-th block.
Here,
$\Exact_{1,m}\of*{z_1,\ldots,z_m} = 1$
iff
$\card*{z}=1$,
where $\card*{z}$ is the Hamming weight of $z$.

\SeparationRdegDegCert*

\begin{proof}
    First we deal with the rational degree and then proceed with the degree certification lower bound.

    \paragraph{Rational degree.}
    We prove that
    $\rdeg\of*{f_m} = m$
    via the non-deterministic degree characterization
    $\rdeg\of*{f_m} = \max\set*{\ndeg\of*{f_m},\ndeg\of*{\neg f_m}}$~\cite[Fact~3]{KKW+26}.
    For $\ndeg\of*{f_m}$, we have a non-deterministic polynomial $\prod\limits_{i=1}^m \of*{1-\card*{x_i}}$.
    For $\ndeg\of*{\neg f_m}$, first define
    $Q(w) := w\prod\limits_{i=2}^m (w-i)$.
    Then the polynomial
    $\sum\limits_{i=1}^m Q\of*{\card*{x_i}}$
    is a non-deterministic polynomial for $\neg f_m$.
    The degree of each polynomial is at most $m$, so
    $\rdeg(f) \leq m$.

    For the other direction, fix all the variables except one in each block to $0$.
    This restricted function is then $\neg \OR_m$.
    Since
    $\rdeg(f) = \max\set*{\ndeg(f),\ndeg(\neg f)}$~\cite[Fact~3]{KKW+26},
    and the non-deterministic degree is non-increasing under restrictions, and $\rdeg\of*{\OR_m} = m$ (see e.g.~\cite[Section~2.1]{dW03}), then we have $\rdeg(f) \geq m$.
    
    \paragraph{Degree lower bound.}
    For the $\deg^{\cert}$ lower bound, we use the notion of selective families \cite{CMS01}.
    An $(n,k)$-selective family of sets is a collection of subsets $S_1, \ldots, S_\ell \subseteq [n]$ such that for any subset $T \subseteq [n]$ with $1 \leq |T| \leq k$, we have $|S_i \cap T| = 1$ for at least one $i \in [\ell]$.
    A $s$-light selective family is such that $|S_i| \leq s$ for all $i \in [\ell]$.
    \begin{lemma} \label{lem:selective}
        There exists an $s$-light $(n,k)$-selective family of sets with size $\ell \leq O\of*{\of*{k+\frac{n}{s}}\cdot{\log(n)}^2}$.
    \end{lemma}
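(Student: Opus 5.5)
We will prove \Cref{lem:selective} by the probabilistic method: random sets with one inclusion density per scale of $\card*{T}$, plus a union bound over all small sets $T$. Assume $s \geq 2$ (otherwise take the $n$ singletons). For each scale $j = 0,1,\ldots,\lceil\log_2 k\rceil$, let $\mathcal{T}_j := \comprehension{T \subseteq [n]}{2^j \leq \card*{T} < 2^{j+1},\ \card*{T} \leq k}$. Set the density
$$p_j := \min\set*{2^{-j-1}, \frac{s}{4n}}.$$
Draw $\ell_j$ independent random sets $S$, each containing every element of $[n]$ independently with probability $p_j$. The family will consist of all these sets, after discarding those with $\card*{S} > s$. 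Discarding is harmless: a set only needs to select a $T$ if it survives.

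The key step is a lower bound, for a fixed $T \in \mathcal{T}_j$, on the probability that a single random $S$ both satisfies $\card*{S \cap T} = 1$ and $\card*{S} \leq s$. The events concerning $S \cap T$ and $S \setminus T$ depend on disjoint sets of coordinates, so they are independent. Hence this probability is at least
$$\Pr\bracket*{\card*{S\cap T}=1}\cdot \Pr\bracket*{\card*{S\setminus T}\leq s-1}.$$
For the second factor, $\mathbb{E}\card*{S\setminus T} \leq p_j n \leq s/4$, so Markov's inequality gives a probability of at least $1/2$ (using $s-1 \geq s/2$). For the first factor, $\Pr\bracket*{\card*{S\cap T}=1} = t p_j (1-p_j)^{t-1}$ with $t := \card*{T}$. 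Since $t p_j \leq 2^{j+1}p_j \leq 1$, this is at least $t p_j/e$. Therefore the success probability satisfies
$$q_j \geq c\cdot \min\set*{1,\ \frac{2^j s}{n}}$$
for an absolute constant $c>0$.

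Now take a union bound. There are at most $n^{2^{j+1}}$ sets in $\mathcal{T}_j$, and each fails for all $\ell_j$ draws with probability at most $(1-q_j)^{\ell_j} \leq e^{-q_j \ell_j}$. It therefore suffices to take
$$\ell_j := \left\lceil \frac{2^{j+2}\ln n}{q_j} \right\rceil = O\of*{\of*{2^j + \frac{n}{s}}\log n}.$$
The $2^j$ term comes from the regime $q_j = \Omega(1)$ and the $n/s$ term from the regime $q_j = \Omega(2^j s/n)$. Summing over the $O(\log k) \leq O(\log n)$ scales gives
$$\sum_j 2^j \log n = O(k\log n) \qquad\text{and}\qquad \sum_j \frac{n}{s}\log n = O\of*{\frac{n}{s}\log^2 n},$$
so $\ell \leq O\of*{\of*{k+\frac{n}{s}}\log^2 n}$. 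Every $T$ with $1 \leq \card*{T} \leq k$ lies in some $\mathcal{T}_j$, so with positive probability the resulting family is $(n,k)$-selective, and by the discarding step it is $s$-light.

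The main obstacle is the interaction between the lightness constraint and the selection probability. Conditioning on $\card*{S}\leq s$ could a priori destroy the small success probability $q_j$ in the sparse regime. The splitting $S = (S\cap T) \sqcup (S\setminus T)$ with independent coordinates is exactly what resolves this, so that step must be written carefully. Everything else is routine calculation.
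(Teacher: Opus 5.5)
Your proof is correct. Its skeleton is the same as the paper's: split the possible sizes of $T$ into dyadic scales, build a family of size $O\of*{\of*{2^j+\frac{n}{s}}\log n}$ for each scale, and take the union over $O(\log n)$ scales. The difference lies in the per-scale ingredient.

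The paper does not prove that ingredient. It cites Theorem~4 of Hradovich's thesis, which gives an $s$-light $(n,\omega)$-selector of size $O\of*{\of*{\omega+\frac{n}{s}}\log n}$. This is a stronger object: every $T$ with $\frac{\omega}{2}\leq\card*{T}\leq\omega$ is isolated by at least $\frac{\omega}{4}$ of the sets, not just one. The paper then takes the union of these selectors over $\omega=k,\frac{k}{2},\ldots,2$.

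You instead prove from scratch the weaker per-scale statement that is actually needed. You use random sets of density $p_j=\min\set*{2^{-j-1},\frac{s}{4n}}$. You use the independence of $S\cap T$ and $S\setminus T$ so that discarding heavy sets costs only a constant factor in the isolation probability. You then take a union bound over the at most $n^{2^{j+1}}$ sets of each scale. I checked the estimates ($(1-p_j)^{t-1}\geq e^{-1}$ from $tp_j<1$, the Markov step, and the union-bound arithmetic), and they go through.

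What your route buys is self-containment. It also makes transparent where the second logarithm comes from: in the sparse regime each scale pays $\frac{n}{s}\log n$ in the union bound. The paper's route is two lines, but it rests on an external result from a PhD thesis and on a stronger property than the application needs.

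The only things you should state explicitly are two harmless normalizations. First, assume $n\geq 2$, so that $\ln n>0$ and $\sum_{t<2^{j+1}}\binom{n}{t}\leq n^{2^{j+1}}$. Second, assume $k\leq n$; scales with $2^j>n$ are empty, so the number of scales really is $O(\log n)$.
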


    \begin{proof}
        In \cite{Hra21}, an $s$-light $(n,\omega)$-selector is a collection of subsets $S_1, \ldots, S_\ell \subseteq [n]$ such that for any subset $T \subseteq [n]$ with $\frac{\omega}{2} \leq \card*{T} \leq \omega$, we have $\card*{S_i \cap T} = 1$ for at least $\frac{\omega}{4}$ elements $i \in [\ell]$, and $\card*{S_i} \leq s$ for all $i \in \ell$.
        This is stronger than selective families, but is sufficient for our needs.
        Note the difference here that $\card*{T} \geq \frac{\omega}{2}$.
        Theorem 4 of the same work shows the existence of an $s$-light $(n,\omega)$-selector of size $O((\omega+n/s)\log n)$.
        By taking the union of such selectors for $\omega = k, \frac{k}{2}, \frac{k}{4}, \ldots, 2$, we obtain an $s$-light $(n,k)$-selective family of size $\ell \leq O\of*{\of*{k+\frac{n}{s}}{\log(n)}^2} \leq {\widetilde O}\of*{k+\frac{n}{s}}$.
    \end{proof}

    Now take $s = m$, $n \leq O\of*{\frac{m^2}{{\log(m)}^2}}$, $k \leq O\of*{\frac{m}{\log(m)^2}}$.
    Then by \Cref{lem:selective}, there exists an $m$-light $\of*{O\of*{\frac{m^2}{{\log(m)}^2}}, O\of*{\frac{m}{{\log(m)}^2}}}$-selective family of size
    $\ell \leq O\of*{m+\frac{m^2}{m}} \leq O(m)$.
    By picking appropriate constants, we can assume that
    $\ell \leq m$.

    Now define a partial function $g : \zone^n \to \zone$ on inputs $z_1, \ldots, z_n$, which is $1$ on the input $z = 0^n$ and $0$ on the inputs with $|z| \leq k$.
    We have that $\deg(g) \geq \sqrt{nk}$ by \Cref{lem:poly} (below).
    Now we show that this function can be embedded in $f_m$.

    For each $i \in \bracket*{\ell}$, map the elements of $S_i$ to $|S_i|$ variables in the $i$-th block $x_i$.
    Then for an input $z$, let $x(z)$ be defined as follows.
    Set all of the variables of $x(z)$ to which nothing has been mapped to $0$.
    If $z_j = b$, then all of the variables of $x$ to which $z_j$ has been mapped are equal to $b$ as well.
    By the selective family properties, if $|z|>0$, there is at least one block in which there is a single $1$.
    Hence, $f_m(x) = 0$.
    Finally, if $z = 0$, then $f_m(x) = 1$ because all variables are $0$.

    Now consider the polynomial representing the  $0^{m^2}$-certification function $f_{m,0^{m^2}}^\cert$.
    By replacing the mapped variables of $x$ with the corresponding variables of $z$, and other variables to $0$, we obtain a polynomial for $g$.
    Hence,
    $\deg(g) \leq \deg\of*{f_{m,0^{m^2}}^\cert} \leq \deg^\cert\of*{f_m}$,
    and therefore
    $$
        \deg^\cert\of*{f_m} \geq \Omega\of*{\sqrt{\frac{m^3}{{\log(m)}^4}}} \geq {\widetilde\Omega}\of*{m^{3/2}}.\qedhere
    $$
\end{proof}

It remains to prove \Cref{lem:poly}.
\begin{lemma} \label{lem:poly}
    Let  \(P:\mathbb{R}^n \to \mathbb{R}\) be a polynomial satisfying 
\begin{align*}
P\of*{0^n}=1&\\
P(z)=0 &\qquad\predicate{if }1 \leq |z| \leq k\\
0 \leq P(z) \leq 1
&\qquad\predicate{for all inputs }
z\in\zone^n.
\end{align*}
Then
\[
\deg(P) = \Omega(\sqrt{nk}).
\]
\end{lemma}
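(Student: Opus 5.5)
We plan to symmetrize $P$ over permutations of the $n$ variables and reduce to a univariate extremal problem.

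\emph{Step 1 (symmetrization).} Let $p(t) := \mathbb{E}_\sigma\bracket*{P(\sigma(z))}$ over $\card*{z}=t$ (Minsky--Papert). This is a univariate polynomial with $\deg(p)\leq\deg(P)$. It satisfies $p(0)=1$, $p(t)=0$ for every integer $1\leq t\leq k$, and $0\leq p(t)\leq 1$ for every integer $0\leq t\leq n$, since averaging preserves these properties. We may assume $k\leq n/2$, and also that $k\geq 1$ and $nk$ is large; otherwise the bound is trivial.

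\emph{Step 2 (univariate lower bound).} We show that such a $p$ has degree $d=\Omega(\sqrt{nk})$. The key point is that $p$ has $k$ roots at $1,\dots,k$ but jumps to $1$ at $0$. Suppose $d$ is small. By the Ehlers--Zeller / Coppersmith--Rivlin bound, a degree-$d$ polynomial bounded by $1$ on the integers $\{0,\dots,n\}$ with $d\leq c\sqrt n$ is bounded by $2$ on the whole interval $[0,n]$. For general $d\leq\sqrt{nk}$ we use the weaker form $\card*{p(x)}\leq e^{O(d^2/n)}$ on $[0,n]$. Since $p$ vanishes at $1,\dots,k$, write $p(x)=\prod_{j=1}^k(x-j)\cdot r(x)$, where $\deg r = d-k$. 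Note that $d\geq k$ is forced, since $p(0)=1\neq 0$.

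We then compare values. At $x=0$ we get $\card*{r(0)}=1/k!$. On the other hand, $r$ is controlled away from $[1,k]$. On the interval $[2k,n]$ we have $\prod_{j}(x-j)\geq k^k$ roughly, so $\card*{r}\leq 2k^{-k}$ there. By a Markov/Chebyshev-type extrapolation bound, extending from $[2k,n]$ to the point $0$, which lies at relative distance $\approx 2k/n$ outside the interval, multiplies the bound by at most $\exp\of*{O\of*{(d-k)\sqrt{k/n}}}$. Hence
$$\frac{1}{k!}\leq 2k^{-k}\exp\of*{O\of*{d\sqrt{k/n}}},$$
which gives $e^{\Omega(k)}\leq \exp\of*{O\of*{d\sqrt{k/n}}}$ and therefore $d=\Omega(\sqrt{nk})$.

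\emph{Main obstacle.} The hardest part is getting the uniform bound on $[2k,n]$ from boundedness only at integer points. We first want $\card*{p}=O(1)$ on $[0,n]$, which holds only once $d=O(\sqrt n)$. If $d\geq c\sqrt n$ is not already $\geq\sqrt{nk}$, we must handle this more carefully. In that case we restrict to a sub-interval: we consider only integer points in $[0,N]$ with $N\approx d^2/c^2$. There boundedness on the integers still transfers to the interval (Coppersmith--Rivlin), and the argument above with $n$ replaced by $N$ yields $d=\Omega(\sqrt{Nk})$, which is a contradiction unless $d$ is large.

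An alternative route we would keep in reserve is a dual witness (dual polynomial) in the style of $\adeg(\OR)$ lower bounds: a function $\psi$ on $\{0,\dots,n\}$ orthogonal to all degree-$<d$ polynomials, positive at $0$, with negative mass only on $[1,k]$. Its existence for $d=c\sqrt{nk}$ follows from a combinatorial construction $\psi(t)\propto(-1)^t\binom{n}{t}\prod_{j\notin T}(t-j)^{-1}$ with a well-chosen node set $T$.
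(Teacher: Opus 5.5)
Your route is essentially the paper's. You symmetrize, factor out the roots $1,\dots,k$, use that the root product is huge at points $\geq \mathrm{const}\cdot k$ to make the quotient tiny there, and extrapolate back to $0$ with a Chebyshev bound, using Coppersmith--Rivlin to pass from integer points to the interval. The paper packages the last two steps by applying Theorem~17 of Buhrman--Cleve--de~Wolf--Zalka to $S=1-R^2$; that theorem only needs the values at $0$ and at the integers $t,\dots,n$.

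However, the step you single out as the main obstacle is resolved incorrectly. When $d\geq c\sqrt n$, your $N\approx d^2/c^2$ is at least $n$. So $[0,N]$ is not a sub-interval of $[0,n]$, and you know nothing about $p$ at integers beyond $n$. You also cannot take $N\leq n$ instead, because the integer-to-interval transfer needs $d\leq c\sqrt N$, and using fewer integer points only weakens it. As a sanity check, the same reasoning applied to $\prod_i(1-z_i)$ would give a contradiction: this polynomial satisfies the hypotheses with $k=n$ and has degree $n$, yet the argument with $N\approx n^2/c^2$ would force $n=\Omega(n^{3/2}/c)$.

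No special treatment is actually needed; the bound you already quote suffices. If $d>n$ you are done. Otherwise Coppersmith--Rivlin gives $|p|\leq a e^{bd^2/n}$ on $[0,n]$, and your comparison becomes $k^k/k!\leq a\exp\bigl(bd^2/n+O(d\sqrt{k/n})\bigr)$. The loss $bd^2/n$ has the same form as the quantity you are bounding. With $x=d/\sqrt{nk}$ the inequality reads $1-o(1)\leq bx^2+O(x)$, which forces $x=\Omega(1)$. This is exactly the shape of the paper's inequality $k\leq C(D^2/n+D\sqrt{k/n})$.

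There is a second, smaller gap, concerning bounded $k$. With threshold $2k$ the gain is only $k^k/k!=e^{k-O(\log k)}$, which does not dominate the constants ($2$, or $a$) when $k$ is bounded. Nor is the bound trivial there: $k=1$ is the $\Omega(\sqrt n)$ lower bound for OR-type polynomials. The fix is to use a threshold $Ak$ with $A$ a large constant, as the paper does. On $[Ak,n]$ the root product is at least $((A-1)k)^k\geq (A-1)^k\,k!$, so the gain is at least $(A-1)^k$. This absorbs $\log a$ uniformly for all $k\geq 1$ once $\log(A-1)\geq 2\log a$. Correspondingly, assume $Ak<n/2$ (otherwise $d\geq k=\Omega(n)=\Omega(\sqrt{nk})$), so that the extrapolation factor from $[Ak,n]$ to $0$ is $\exp\bigl(O(d\sqrt{Ak/n})\bigr)$. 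With these two changes your direct argument is complete, and the dual-polynomial backup is unnecessary.
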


\begin{proof}
Symmetrizing $P$, we obtain a univariate polynomial $Q$ with $
\deg(Q) \leq \deg(P) $
such that
\begin{align*}
Q(0)=1& \\
Q(1)=\cdots=Q(k)=0& \\
0\leq Q(j)\leq 1&
\qquad\predicate{for }
j=0,\ldots,n.
\end{align*}
Let $ d=\deg(Q).$ Since $Q$ has $k$ distinct roots, we have $d\geq k$.

Then we can factor this polynomial:
\[
Q(x)
=
\prod_{i=1}^k
\left(1-\frac{x}{i}\right) R(x).
\]
Then $R(0)=1$ and $\deg(R)=d-k$ .

Let $A$ be a large integer constant to be defined later, and set $t=Ak$.
If $t\geq n/2$, then $k=\Omega(n)$, and therefore $d\geq k=\Omega(\sqrt{nk})$.
Thus, we may assume that $t<n/2.$

For every integer $j\geq t$,
\[
\left|
\prod_{i=1}^k
\left(1-\frac{j}{i}\right)
\right|
=
\prod_{i=1}^k \frac{j-i}{i}.
\]
Since $j\geq Ak$, for every $1\leq i\leq k$ we have
\[
\frac{j-i}{i}
\geq
\frac{Ak-k}{k}
=
A-1.
\]
Therefore
\[
\left|
\prod_{i=1}^k
\left(1-\frac{j}{i}\right)
\right|
\geq
(A-1)^k.
\]
Since \(0\leq Q(j)\leq 1\), it follows that for\(j=t,\ldots,n\) we have
$|R(j)|\leq(A-1)^{-k}$.

Define $S(x)=1-R(x)^2$ and $\varepsilon=(A-1)^{-2k}$. Then $S(0)=0$
and for \(j=t,\ldots,n\) we have
\[
1-\varepsilon
\leq
S(j)
\leq
1.
\]
Now let $D:=\deg(S)=2(d-k)$.

We now apply Theorem~17 of \cite{BCdWZ99}: there exist
absolute constants \(a,b>0\) such that, whenever \(D\le n-t\),
\[
\varepsilon
\ge
\frac{1}{a}
\exp\left(
-\frac{bD^2}{n-t}
-\frac{4D\sqrt{tn}}{n-t}
\right).
\]

If \(D>n-t\), then, since \(t<n/2\),
\[
d
=
k+\frac{D}{2}
\ge
\frac{D}{2}
>
\frac{n-t}{2}
>
\frac{n}{4}.
\]
Since \(k\le n\), this implies $d=\Omega(n)=\Omega(\sqrt{nk})$, and we are done.

It remains to consider the case \(D\le n-t\). Substituting 
$t=Ak$
and
$\varepsilon=(A-1)^{-2k}$
into the above inequality and taking logarithms gives
\[
2k\log(A-1)
\leq
\log a
+
\frac{bD^2}{n-t}
+
\frac{4D\sqrt{Akn}}{n-t}.
\]
Since $t<n/2$, we have $n-t\geq n/2$,
and therefore
\[
2k\log(A-1)
\leq
\log a
+
\frac{2bD^2}{n}
+
8\sqrt{A}\,
D\sqrt{\frac{k}{n}}.
\]
Choosing \(A\) to be a sufficiently large, we obtain
for some absolute constant \(C>0\),
\[
k
\leq
C \cdot \of*{
\frac{D^2}{n}
+
D\sqrt{\frac{k}{n}}
}.
\]

Now let
$x:=\frac{D}{\sqrt{nk}}$.
Dividing the preceding inequality by $k$ and replacing
$D=x\sqrt{nk}$
gives
\[
1\leq C \cdot \of*{x^2+x}.
\]
Hence $x=\Omega(1)$, and therefore $D=\Omega(\sqrt{nk})$.
Since $ D=2(d-k)\leq 2d$,  
we conclude that $d=\deg(Q)=\Omega(\sqrt{nk}).$
Finally, since $\deg(Q) \leq \deg(P)$, we have
\[
\deg(P)=\Omega\of*{\sqrt{nk}}.\qedhere
\]
\end{proof}

{
\subsection{Tight lower bound on $\qoc$}

Given that $\deg^\cert$ is a lower bound on $\qoc$ (\Cref{osdeg < Q_0}, \Cref{deg cert = osdeg cert} and \cref{cert is easier} from \Cref{toolbox}), as an immediate corollary of \Cref{thm:separation (deg cert)^2.5 < ndeg} we obtain that
$$\qoc(f_m) = \widetilde\Omega\of*{n^{{3}/{2}}}.$$
This lower bound is tight for this function.

\begin{lemma}\label{thm:Q_0 algo}
     ~
    \begin{enumerate}
    \item $\qoc(f_m) = O\of*{m^{{3}/{2}}}$.
    \item $\qqc(f_m) = O(m)$.
    \end{enumerate}
\end{lemma}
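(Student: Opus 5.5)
We need algorithms for $f_m=\AND_m\circ(\neg\Exact_{1,m})$ on $m$ blocks of $m$ bits: a zero-error one with $O(m^{3/2})$ queries and a bounded-error one with $O(m)$ queries. Both come from standard composition facts for quantum query complexity together with Grover-type subroutines for $\Exact_{1,m}$.

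\textbf{Bounded error.} First we show $\qqc(\Exact_{1,m})=O(\sqrt m)$. One way is to run Grover search on the block to find a $1$ (with bounded error), then search again over the remaining $m-1$ positions for a second $1$, and accept iff exactly one $1$ was found. Each stage costs $O(\sqrt m)$ queries. The block predicate $\neg\Exact_{1,m}$ therefore has bounded-error cost $O(\sqrt m)$.

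Next we compose with $\AND_m$. By the composition theorem for bounded-error quantum query complexity (Reichardt's tight composition via the adversary bound, $\qqc(f\circ g)=O(\qqc(f)\qqc(g))$, equivalently search with bounded-error inputs \`a la H{\o}yer--Mosca--de~Wolf), we get $\qqc(f_m)=O(\sqrt m\cdot\sqrt m)=O(m)$. This proves item~2.

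\textbf{Zero error.} The plan is to have the algorithm output a verified certificate, which makes it zero-error by construction. We use exact or Las Vegas Grover search, which has zero error and expected $O(\sqrt{N/t})$ queries when there are $t$ marked items, and expected $O(\sqrt N)$ queries to confirm that no item is marked. Truncating at a constant times the expectation gives a $\qoc$ algorithm by Markov's inequality.

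For each block $i$ we need to certify either that $\neg\Exact_{1,m}(x_i)=1$ or that it equals $0$:
\begin{itemize}
\item a $0$-certificate of the block is a single position holding a $1$, together with a proof that all other positions are $0$;
\item a $1$-certificate is either two $1$'s, or a proof that the block is all zero.
\end{itemize}
Running zero-error search for a $1$ on a block, and then zero-error search for a second $1$ (or a proof of absence), costs expected $O(\sqrt m)$ per block with zero error. Doing this sequentially over all $m$ blocks costs $m\cdot O(\sqrt m)=O(m^{3/2})$ in expectation. At the end we output $\AND$ of the block values, which are all exactly known, and we convert expected cost to worst-case cost with abort probability at most $1/2$ via Markov. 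This proves item~1.

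\textbf{Main obstacle.} The delicate point is ensuring that the zero-error Grover variants genuinely have expected cost $O(\sqrt m)$ per block regardless of the number of $1$'s. This includes the confirmation of absence, which needs an exact-search step over $m$ items costing $O(\sqrt m)$. I would cite Boyer--Brassard--H{\o}yer--Tapp for unknown $t$, or the exact amplitude amplification of \Cref{exact quantum amplitude amplification}, for the ``no marked item'' check. Linearity of expectation then handles the summation over blocks.
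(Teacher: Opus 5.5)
Your bounded-error argument (item~2) is fine and is essentially the paper's. The gap is in item~1: a zero-error ``confirmation that no item is marked'' in expected $O(\sqrt N)$ queries does not exist.
\begin{itemize}
\item BBHT-style search never halts when there is no marked item.
\item The exact amplitude amplification theorem requires the success probability to be known and nonzero, so it cannot certify absence either.
\end{itemize}
Such a routine is in fact impossible. After Markov truncation it would accept $0^N$ with constant probability within $T$ queries and never accept any other input. Its acceptance probability would then be a polynomial of degree at most $2T$ that is nonzero at $0^N$ and vanishes on the rest of the cube. On the cube such a polynomial is a nonzero multiple of $\prod_i(1-x_i)$, so $2T\geq N$; this is why $\qoc(\OR_N)=\Theta(N)$.

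Consequently, each of your block certificates involving ``all other positions are $0$'' or ``the block is all zero'' costs $\Theta(m)$, and your sequential block-by-block evaluation costs $\Theta(m^2)$. This is intrinsic to your design, not just a bad choice of subroutine. On $x=0^{m^2}$ the only certificates your scheme allows are all-zero proofs for every block. Producing them with zero error is a one-sided algorithm for $\neg\OR_{m^2}$, which needs $\Omega(m^2)$ queries.

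The missing idea is that you should not evaluate every block exactly. Instead, build two one-sided procedures and alternate them.

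\textbf{Certifying $1$-inputs.} For each block you only need to certify $\card*{x_i}\neq 1$, not $\card*{x_i}=0$.
\begin{itemize}
\item Run H{\o}yer's exact Grover search under the promise of a unique marked item ($O(\sqrt m)$ queries) and query the returned index.
\item If it holds a $0$, this certifies $\card*{x_i}\neq 1$: had $\card*{x_i}=1$, the search would have returned the $1$ with certainty.
\item If it holds a $1$, search for a second $1$. This succeeds in expected $O(\sqrt m)$ queries when $\card*{x_i}\geq 2$ and never when $\card*{x_i}=1$.
\end{itemize}
Summing over the $m$ blocks and truncating via Markov gives an $O(m^{3/2})$ procedure that never outputs $1$ on a $0$-input.

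\textbf{Certifying $0$-inputs.} Use your $O(m)$ bounded-error search over blocks to locate a block with $\card*{x_i}=1$, then verify it classically with $m$ queries. The $\Theta(m)$ cost of certifying $\card*{x_i}=1$ is thus paid only once, not once per block.
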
 

\begin{proof}
    For each $b \in \zone$, we show an algorithm that outputs $b$ on $b$-inputs with constant probability, never outputs $b$ on any $(1{-}b)$-input, and uses $O\of*{m^{{3}/{2}}}$ queries.
    By running the two algorithms in alternation repeatedly, we obtain a zero-error quantum algorithm for $f_m$.

    \paragraph{Positive inputs.}
    First we show the case for the $1$-inputs.
    For an input $x$ to be positive, each of the $\Exact_{1,m}$ functions have to evaluate to $0$.
    We build a one-sided procedure with $O\of*{m^{3/2}}$ expected number of queries for the inputs of $\Exact_{1,m}$.
    By Markov's inequality, we can stop the algorithm after $O\of*{m^{3/2}}$ queries to get a bounded-error one-sided algorithm.

    Let $z_1, \ldots, z_m$ be the input to an instance of $\Exact_{1,m}$.
    For $z$ to be a $0$-input, we need to have either $|z| = 0$ or $|z| \geq 2$.
    First we run the exact Grover's search with a promise that there is a unique marked element \cite{Hoy00}.
    If that returns $0$, then we know with certainty that $|z| \neq 1$, so in that case we return the correct answer.
    If we get $1$, then $|z| \geq 1$.
    To determine whether $|z| \geq 2$, we search for two distinct positions $i, j$ such that
    $z_i = z_j = 1$
    using two applications of Grover's search.
    If this doesn't find the two ones, we repeat.
    If we find the two ones, we query the two positions to certify that
    $|z| \geq 2$
    with certainty.
    
    If
    $|z| \geq 2$,
    this procedure evaluates $\Exact_{1,m}(z)$ to $0$ in $O\of*{\sqrt{m}}$ expected number of queries.
    If $|z| = 1$, we never output $0$ as the certification of the two $1$s will be unsuccessful.
    If $|z| = 0$, we evaluate $\Exact_{1,m}(z)$ correctly already after the exact Grover's search.
    By running this procedure for all $m$ input blocks of $f_m$, we get the required algorithm.
    
    \paragraph{Negative inputs.}
    For an input $x$ to be a $0$-input, there must be at least one $\Exact_{1,m}$ that evaluates to $1$.
    To determine that for a single copy of $\Exact_{1,m}$, with input $z_1, \ldots, z_m$, we can first run Grover's search to find a single $1$, and then run another Grover's search to find whether there are any more $1$s.
    If $|z| = 1$, this gives the correct answer to this $\Exact_{1,m}$ with constant probability.

    To determine whether any copy of $\Exact_{1,m}$ evaluates to $1$, we run another Grover's search with bounded-error inputs \cite{HMDw03} over the blocks.
    That in total requires $O\of*{\sqrt{m} \cdot \sqrt{m}} = O(m)$ queries.
    If the input $x$ is a $0$-input, this procedure will find the block $i$ where $\Exact_{1,m}$ evaluates to $1$ with constant success probability.
    To certify this, we then can classically query all variables in the $i$-th block to check whether there is a single $1$.
    Thus, with $O(m)$ queries, we will output $0$ on $0$-inputs with constant probability, and never output $0$ on $1$-inputs.    

{ Finally, this part immediately implies $\qqc\of*{f_m} = O(m)$.}
\end{proof}
}

{ As immediate  corollaries, we get that the same function separates 
$\QC$ from $\qqc_0^\cert$, $\adeg $ from $ \osdeg$, and
$\adeg^\cert $ from $ \osdeg^\cert$.}

\section{Conclusion and further work}

Systematically studying certification for various measures has allowed us to identify new bounds and to uncover new relationships between these and previously studied bounds.

There are many avenues to continue exploring.  We provided one separation, between degree certification complexity and rational degree.
We would also like to see other separations, or collapses.

Many polynomial relations between known bounds are not known to be tight. Can using certification measures help provide tighter bounds or bigger separations?

Finally, many relations between measures behave  differently for total and partial functions. They can hold for total functions, and be constant for some partial functions. What can be said about these relations for the class of  certifying functions? We hope that studying this class of partial functions will help us to gain a better understanding of why  this gap between total and partial functions can occur.

\section*{AI Disclosure}
We used GPT‑5.6 Terra, GPT‑5.6 Sol and Claude Opus 5 to assist with developing the proofs of \Cref{thm:separation (deg cert)^2.5 < ndeg}, \Cref{lem:selective}, \Cref{lem:poly} and \Cref{thm:Q_0 algo} in \Cref{sec: ndeg degcert separation}, as well as \Cref{range-reduction: exact}, \Cref{range-reduction: approximate} and \Cref{range-reduction: onesided} in \Cref{appendix:boundedness}. The authors verified the correctness and originality of all content including references. 

\section*{Acknowledgments}
E.L. has received support under the program ``Investissement d'Avenir'' launched by the French Government and implemented by ANR, with the reference ``ANR‐22‐CMAS-0001, QuanTEdu-France''. C.K. and E.L. are supported by the French PEPR integrated project EPiQ 
(ANR-22-PETQ-0007) and partially supported by the ANR Grant FLITTLA (ANR-21-CE48-0023). 
J.V.~is supported by the 1.1.1.9 Research application No 1.1.1.9/LZP/2/25/206 of the Activity ``Post-doctoral Research'' ``Practical applications and limitations of quantum search''.

\newpage
\printbibliography

@inproceedings{AA15,
    author = {Aaronson, Scott and Ambainis, Andris},
    title = {Forrelation: A Problem that Optimally Separates Quantum from Classical Computing},
    year = {2015},
    isbn = {978-1-45033-536-2},
    publisher = {Association for Computing Machinery},
    address = {New York, NY, USA},
    doi = {10.1145/2746539.2746547},
    booktitle = {Proceedings of the Forty-Seventh Annual ACM Symposium on Theory of Computing},
    pages = {307–316},
    numpages = {10},
    location = {Portland, Oregon, USA},
    series = {STOC '15},

    eprint    = {1411.5729},
    archivePrefix = {arXiv},
    primaryClass  = {quant-ph}
}

@inproceedings{AAI+16,
    author = {Aaronson, Scott and Ambainis, Andris and Iraids, J{\=a}nis and Kokainis, Martins and Smotrovs, Juris},
    title = {Polynomials, quantum query complexity, and Grothendieck's inequality},
    year = {2016},
    isbn = {978-3-95977-008-8},
    volume =	{50},
    editor =	{Raz, Ran},
    publisher =	{Schloss Dagstuhl -- Leibniz-Zentrum f{\"u}r Informatik},
    address =	{Dagstuhl, Germany},
    booktitle = {Proceedings of the 31st Conference on Computational Complexity},
    articleno = {25},
    pages =	{25:1--25:19},
    numpages = {19},
    series = {CCC '16},
    ISSN =	{1868-8969},
    doi =		{10.4230/LIPIcs.CCC.2016.25}
}

@inproceedings{HMDw03,
    author = {H\o{}yer, Peter and Mosca, Michele and {de} Wolf, Ronald},
    title = {Quantum Search on Bounded-Error Inputs},
    year = {2003},
    isbn = {3540404937},
    booktitle = {Automata, Languages and Programming},
    publisher = {Springer-Verlag},
    address = {Berlin, Heidelberg},
    pages = {291–299},
    numpages = {9},
    venue = {Eindhoven, The Netherlands},
    doi = {10.1007/3-540-45061-0_25},
    series = {ICALP 2003}
}

@article{Hoy00,
  title = {Arbitrary phases in quantum amplitude amplification},
  author = {H{\o}yer, Peter},
  journal = {Phys. Rev. A},
  volume = {62},
  issue = {5},
  pages = {052304},
  numpages = {5},
  year = {2000},
  publisher = {American Physical Society},
  doi = {10.1103/PhysRevA.62.052304},
}

@article{ABBL+17,
  author    = {Ambainis, Andris and
               Balodis, Kaspars and
               Belovs, Aleksandrs and
               Lee, Troy and
               Santha, Miklos and
               Smotrovs, Juris},
  title     = {Separations in Query Complexity Based on Pointer Functions},
  journal   = {Journal of the {ACM}},
  volume    = {64},
  number    = {5},
  pages     = {32:1--32:24},
  year      = {2017},
  doi       = {10.1145/3106234}
}

@misc{Jeffery2026QuantumQuery,
  author       = {Stacey Jeffery},
  title        = {Week 5: Quantum Query Complexity},
  year         = {2026},
  month        = mar,
  note         = {Lecture notes},
  howpublished = {\url{https://homepages.cwi.nl/~jeffery/notes/week5.pdf}}
}

@article{ABP19,
  author  = {Arunachalam, Srinivasan and Bri{\"e}t, Jop and Palazuelos, Carlos},
  title   = {Quantum Query Algorithms Are Completely Bounded Forms},
  journal = {SIAM Journal on Computing},
  volume  = {48},
  number  = {3},
  pages   = {903--925},
  year    = {2019},
  doi     = {10.1137/18M117563X}
}

@article{BBCM+01,
    author = {Beals, Robert and
               Buhrman, Harry and
               Cleve, Richard and
               Mosca, Michele and
               {de} Wolf, Ronald},
    title = {Quantum lower bounds by polynomials},
    year = {2001},
    issue_date = {July 2001},
    publisher = {Association for Computing Machinery},
    address = {New York, NY, USA},
    volume = {48},
    number = {4},
    issn = {0004-5411},
    url = {https://doi.org/10.1145/502090.502097},
    doi = {10.1145/502090.502097},
    journal = {J. {ACM}},
    month = jul,
    pages = {778–797},
    numpages = {20}
}

@phdthesis{Cade20,
  author = {Chris Cade},
  title  = {Postselection and Rational Functions in Quantum Complexity Theory},
  school = {University of Bristol},
  year   = {2020}
}

@article{Rub95,
  author    = {Rubinstein, David},
  title     = {Sensitivity vs. Block Sensitivity of {Boolean} Functions},
  journal   = {Combinatorica},
  volume    = {15},
  number    = {2},
  pages     = {297--299},
  year      = {1995},
  doi       = {10.1007/BF01200762}
}

@article{Hua19,
  title={Induced subgraphs of hypercubes and a proof of the sensitivity conjecture},
  author={Huang, Hao},
  journal={Annals of Mathematics},
  volume={190},
  number={3},
  pages={949--955},
  year={2019},
  doi = {10.4007/annals.2019.190.3.6},
}

@article{GSS16,
  author    = {Gilmer, Justin and
               Saks, Michael E. and
               Srinivasan, Srikanth},
  title     = {Composition limits and separating examples for some {Boolean} function
               complexity measures},
  journal   = {Combinatorica},
  volume    = {36},
  number    = {3},
  pages     = {265--311},
  year      = {2016},
  doi       = {10.1007/s00493-014-3189-x}
}

@article{BdW02,
  author    = {Buhrman, Harry and
               {de} Wolf, Ronald},
  title     = {Complexity measures and decision tree complexity: a survey},
  journal   = {Theoretical Computer Science},
  volume    = {288},
  number    = {1},
  pages     = {21--43},
  year      = {2002},
  doi       = {10.1016/S0304-3975(01)00144-X}
}

@inproceedings{ABK+21,
  author    = {Aaronson, Scott and
               Ben{-}David, Shalev and
               Kothari, Robin and
               Rao, Shravas and
               Tal, Avishay},
  title     = {Degree vs. approximate degree and Quantum implications of {H}uang's
               sensitivity theorem},
  booktitle = {{STOC}},
  pages     = {1330--1342},
  year      = {2021},
  doi       = {10.1145/3406325.3451047}
}

@article{NS94,
  author    = {Nisan, Noam and
               Szegedy, Mario},
  title     = {On the Degree of {Boolean} Functions as Real Polynomials},
  journal   = {Computational Complexity},
  volume    = {4},
  pages     = {301--313},
  year      = {1994},
  doi       = {10.1007/BF01263419}
}

@article{Aar08,
  author    = {Aaronson, Scott},
  title     = {Quantum certificate complexity},
  journal   = {Journal of Computer and System Sciences},
  volume    = {74},
  number    = {3},
  pages     = {313--322},
  year      = {2008},
  doi       = {10.1016/j.jcss.2007.06.020},
}

@inproceedings{Tal13,
  author    = {Tal, Avishay},
  title     = {Properties and applications of {Boolean} function composition},
  booktitle = {{ITCS}},
  pages     = {441--454},
  year      = {2013},
  doi       = {10.1145/2422436.2422485}
}

@article{KT16,
  author    = {Kulkarni, Raghav and
               Tal, Avishay},
  title     = {On Fractional Block Sensitivity},
  journal   = {Chicago Journal of Theoritical Computer Science},
  volume    = {2016},
  year      = {2016},
  url       = {http://cjtcs.cs.uchicago.edu/articles/2016/8/contents.html},
  bibsource = {dblp computer science bibliography, https://dblp.org}
}

@InProceedings{ABK21,
  author		=	{Anshu, Anurag and Ben{-}David, Shalev and Kundu, Srijita},
  title		=	{On Query-To-Communication Lifting for Adversary Bounds},
  booktitle	=	{36th Computational Complexity Conference (CCC 2021)},
  pages		=	{30:1--30:39},
  series		=	{Leibniz International Proceedings in Informatics (LIPIcs)},
  ISBN		=	{978-3-95977-193-1},
  ISSN		=	{1868-8969},
  year		=	{2021},
  volume		=	{200},
  editor		=	{Kabanets, Valentine},
  publisher	=	{Schloss Dagstuhl -- Leibniz-Zentrum f{\"u}r Informatik},
  address		=	{Dagstuhl, Germany},
  URL			=	{https://drops.dagstuhl.de/entities/document/10.4230/LIPIcs.CCC.2021.30},
  URN			=	{urn:nbn:de:0030-drops-143042},
  doi			=	{10.4230/LIPIcs.CCC.2021.30}
}

@article{JKK+20,
  author       = {Jain, Rahul and
                  Klauck, Hartmut and
                  Kundu, Srijita and
                  Lee, Troy and
                  Santha, Miklos and
                  Sanyal, Swagato and
                  Vihrovs, Jevg{\=e}nijs},
  title        = {Quadratically Tight Relations for Randomized Query Complexity},
  journal      = {Theory of Computing Systems},
  volume       = {64},
  number       = {1},
  pages        = {101--119},
  year         = {2020},
  date         = {2020/01/01},
  url          = {https://doi.org/10.1007/s00224-019-09935-x},
  doi          = {10.1007/s00224-019-09935-x},
  issn         = {1433-0490}
}

@inproceedings{CGL+23,
  author =	{Chakraborty, Sourav and G\'{a}l, Anna and Laplante, Sophie and Mittal, Rajat and Sunny, Anupa},
  title =	{{Certificate Games}},
  booktitle =	{14th Innovations in Theoretical Computer Science Conference (ITCS 2023)},
  pages =	{32:1--32:24},
  series =	{Leibniz International Proceedings in Informatics (LIPIcs)},
  ISBN =	{978-3-95977-263-1},
  ISSN =	{1868-8969},
  year =	{2023},
  volume =	{251},
  editor =	{Tauman Kalai, Yael},
  publisher =	{Schloss Dagstuhl -- Leibniz-Zentrum f{\"u}r Informatik},
  address =	{Dagstuhl, Germany},
  URL =		{https://drops.dagstuhl.de/entities/document/10.4230/LIPIcs.ITCS.2023.32},
  URN =		{urn:nbn:de:0030-drops-175353},
  doi =		{10.4230/LIPIcs.ITCS.2023.32}
}

@inproceedings{Nis89,
  author = {Nisan, Noam},
  title = {CREW PRAMS and decision trees},
  year = {1989},
  isbn = {0897913078},
  publisher = {Association for Computing Machinery},
  address = {New York, NY, USA},
  url = {https://doi.org/10.1145/73007.73038},
  doi = {10.1145/73007.73038},
  booktitle = {Proceedings of the Twenty-First Annual ACM Symposium on Theory of Computing},
  pages = {327–335},
  numpages = {9},
  location = {Seattle, Washington, USA},
  series = {STOC '89}
}

@InProceedings{BK19,
  author =	{Ben{-}David, Shalev and Kothari, Robin},
  title =	{{Quantum Distinguishing Complexity, Zero-Error Algorithms, and Statistical Zero Knowledge}},
  booktitle =	{14th Conference on the Theory of Quantum Computation, Communication and Cryptography (TQC 2019)},
  pages =	{2:1--2:23},
  series =	{Leibniz International Proceedings in Informatics (LIPIcs)},
  ISBN =	{978-3-95977-112-2},
  ISSN =	{1868-8969},
  year =	{2019},
  volume =	{135},
  editor =	{van Dam, Wim and Man\v{c}inska, Laura},
  publisher =	{Schloss Dagstuhl -- Leibniz-Zentrum f{\"u}r Informatik},
  address =	{Dagstuhl, Germany},
  URL =		{https://drops.dagstuhl.de/entities/document/10.4230/LIPIcs.TQC.2019.2},
  URN =		{urn:nbn:de:0030-drops-103944},
  doi =		{10.4230/LIPIcs.TQC.2019.2}
}

@article{dW03,
    author = {{de} Wolf, Ronald},
    title = {Nondeterministic Quantum Query and Communication Complexities},
    journal = {SIAM Journal on Computing},
    volume = {32},
    number = {3},
    pages = {681-699},
    year = {2003},
    doi = {10.1137/S0097539702407345},
    URL = {https://doi.org/10.1137/S0097539702407345},
    eprint = {https://doi.org/10.1137/S0097539702407345}
}

@incollection{BPM+02,
    author = {Brassard, Gilles and Høyer, Peter and Mosca, Michele and Tapp, Alain},
    series = {Contemporary Mathematics},
    issn = {0271-4132},
    pages = {53--74},
    volume = {305},
    publisher = {American Mathematical Society},
    booktitle = {Quantum Computation and Information},
    isbn = {978-0-82182-140-4},
    year = {2002},
    title = {Quantum amplitude amplification and estimation},
    language = {eng},
    address = {Providence, Rhode Island}
}

@InProceedings{CMP24,
  author =	{Cornelissen, Arjan and Mande, Nikhil S. and Patro, Subhasree},
  title =	{Quantum Sabotage Complexity},
  booktitle =	{44th IARCS Annual Conference on Foundations of Software Technology and Theoretical Computer Science (FSTTCS 2024)},
  pages =	{19:1--19:20},
  series =	{Leibniz International Proceedings in Informatics (LIPIcs)},
  ISBN =	{978-3-95977-355-3},
  ISSN =	{1868-8969},
  year =	{2024},
  volume =	{323},
  editor =	{Barman, Siddharth and Lasota, S{\l}awomir},
  publisher =	{Schloss Dagstuhl -- Leibniz-Zentrum f{\"u}r Informatik},
  address =	{Dagstuhl, Germany},
  URL =		{https://drops.dagstuhl.de/entities/document/10.4230/LIPIcs.FSTTCS.2024.19},
  URN =		{urn:nbn:de:0030-drops-222082},
  doi =		{10.4230/LIPIcs.FSTTCS.2024.19}
}

@article{AKP+21,
author = {Ambainis, Andris and Kokainis, Martins and Pr{\=u}sis, Kri{\v s}j{\=a}nis and Vihrovs, Jevg{\=e}nijs and Zajakins, Aleksejs},
title = {All Classical Adversary Methods Are Equivalent for Total Functions},
year = {2021},
issue_date = {March 2021},
publisher = {Association for Computing Machinery},
address = {New York, NY, USA},
volume = {13},
number = {1},
issn = {1942-3454},
url = {https://doi.org/10.1145/3442357},
doi = {10.1145/3442357},
journal = {ACM Trans. Comput. Theory},
month = jan,
articleno = {7},
numpages = {20}
}

@inproceedings{JK10,
author = {Jain, Rahul and Klauck, Hartmut},
title = {The Partition Bound for Classical Communication Complexity and Query Complexity},
year = {2010},
isbn = {978-0-76954-060-3},
year={2010},
publisher = {IEEE Computer Society},
address = {USA},
url = {https://doi.org/10.1109/CCC.2010.31},
doi = {10.1109/CCC.2010.31},
booktitle = {Proceedings of the 2010 IEEE 25th Annual Conference on Computational Complexity},
pages = {247–258},
numpages = {12},
series = {CCC '10}
}

@article{Bernstein12,
author="Bernstein, Sergei Natanovich",
title="Démonstration du théoréme de Weierstrass fondée sur le calcul des probabilités",
journal="Communications of the Kharkov Mathematical Society",
year="1912",
volume="2",
number="13",
pages="1"
}

@book{lorentz1986bernstein,
  title={Bernstein Polynomials},
  author={Lorentz, George G.},
  isbn={978-0-82840-323-8},
  lccn={85072466},
  series={AMS Chelsea Publishing Series},
  year={1986},
  publisher={Chelsea Publishing Company}
}

@article{midrijanis2004exact,
  title={Exact quantum query complexity for total Boolean functions},
  author={Midrijanis, Gatis},
  journal={arXiv preprint quant-ph/0403168},
  year={2004}
}

@inproceedings{CMS01,
    author = {Clementi, Andrea E. F. and Monti, Angelo and Silvestri, Riccardo},
    title = {Selective families, superimposed codes, and broadcasting on unknown radio networks},
    year = {2001},
    isbn = {0898714907},
    publisher = {Society for Industrial and Applied Mathematics},
    booktitle = {Proceedings of the Twelfth Annual ACM-SIAM Symposium on Discrete Algorithms},
    pages = {709–718},
    doi = {10.5555/365411.365756}
}

@phdthesis{Hra21,
    title={Efficient communication algorithms in shared channels with adversary},
    author={Hradovich, Ilya},
    year={2021},
    school={Institute of Computer Science, Polish Academy of Sciences},
    url={https://ipipan.waw.pl/pliki/doktoraty/Hradovich/Hradovich_Doktorat.pdf},
    type = {PhD Thesis}
}

@INPROCEEDINGS{BCdWZ99,
    author={Buhrman, Harry and Cleve, Richard and {de} Wolf, Ronald and Zalka, Christof},
    booktitle={40th Annual Symposium on Foundations of Computer Science}, 
    title={Bounds for small-error and zero-error quantum algorithms}, 
    year={1999},
    pages={358-368},
    doi={10.1109/SFFCS.1999.814607},
    eprint={cs/9904019},
    archivePrefix={arXiv},
    primaryClass={cs.CC},
}

@inproceedings{KKW+26,
  author    = {Kothari, Robin and
               Kovacs-Deak, Matt and
               Wang, Daochen and
               Yang, Rain Zimin},
  title     = {Rational degree is polynomially related to degree},
  booktitle = {{FOCS}},
  year      = {2026},
  note      = {to appear}
}

@article{Kou93,
  title={Improvements on Khrapchenko's theorem},
  author={Koutsoupias, Elias},
  journal={Theoretical Computer Science},
  volume={116},
  number={2},
  pages={399--403},
  year={1993},
  publisher={Elsevier}
}

@article{MdW15,
    author = {Mahadev, Urmila and {de} Wolf, Ronald},
    title = {Rational approximations and quantum algorithms with postselection},
    year = {2015},
    issue_date = {March 2015},
    publisher = {Rinton Press, Incorporated},
    address = {Paramus, NJ},
    volume = {15},
    number = {3–4},
    issn = {1533-7146},
    journal   = {Quantum Information \& Computation},
    month = mar,
    pages = {295–307},
    numpages = {13},
    doi       = {10.26421/QIC15.3-4-5},
    url       = {https://doi.org/10.26421/QIC15.3-4-5}
}

\newpage

\appendix
\section{Nondeterminism vs certification}\label{nondeterminism}

We now give the $\NP$-like definition of certification complexity (\Cref{def:cert qc witness version}) -- which is only defined on complexity measures that correspond to the query complexity of an algorithm, i.e. complexity measures coming from a model of computation (\Cref{def:query model of computation}). On such query complexity measures, this definition is equivalent to the general one above (\Cref{equivalence of cert qc versions}).

\begin{defi}[query model of computation]\label{def:query model of computation}
    A \emph{query model of computation} $\mathsf{M}$ specifies:
    \begin{itemize}
        \item what are the possible \emph{algorithms} (their operations, possibly some constraints on their structure)
        \item what it means for an algorithm $A$ to \emph{compute} an output $b$ when given query access to an input
        $x \in \zone^n$, i.e. $A^x = b$
        \item a corresponding \emph{query complexity} $\mathsf{M}(A,x)$ of algorithm $A$ on input $x$.
    \end{itemize}
    Thus, for a Boolean (possibly partial) function
    $f : \D \subseteq \zone^n \to \zone$,
    it gives a \emph{query complexity measure} of the function:
    $$\mathsf{M}(f) := \max\limits_{\substack{A\predicate{ algorithm in model } \mathsf{M}:\\A \predicate{ computes } f}} \  \max\limits_{x \in \D} \mathsf{M}(A,x)$$
    where $A$ \emph{computes} $f$ iff for all
    $x \in \D$
 computes $f(x)$ on input $x$.
\end{defi}

The the $\NP$-like definition of certification complexity is one of the motivations of our systematic study of the general notion.

\begin{defi}[query certification complexity $\mathsf{M}^\cert$ (witness version)]\label{def:cert qc witness version}
    Consider a query model of computation $\mathsf{M}$.
    Let
    $f : \D \subseteq \zone^n \to \zone$
    be a Boolean (possibly partial) function. Let
    $\C = \of*{w_x}_{x \in \D}$
    be a family of \emph{witnesses}. (Intuitively $w_x$ will be chosen to be a proof of knowledge of the value of $f(x)$, i.e. the correct proof $w_x$ should be accepted, and any wrong proof $w_y$ such that $f(x) \neq f(y)$ should be rejected. Thus, $\C$ certifies $f$, and we are interested in the best achievable query complexity of such a certification procedure.)
    
    In the context of the query model of computation $\mathsf{M}$, for any input
    $x \in \D$,
    we say that algorithm $A$ \emph{certifies $f$ at $x$ w.r.t. $\C$} with query complexity at most $T$ if
    \begin{itemize}
        \item certification soundness: the computation of $A$ with query access to input $x$ and plain access (for free, and classical) to witness $w_x$ accepts, i.e.
        $A^x\of*{w_x} = 1$,
        and query complexity of $A\of*{w_x}$ on $x$ is at most $T$, i.e.
        $\mathsf{M}\of*{A\of*{w_x},x} \leq T$
        \item certification completeness: for any input
        $y \in \D$
        such that
        $f(x) \neq f(y)$,
        the computation of $A$ with query access to input $x$ and plain access (for free, and classical) to witness $w_y$ (a wrong witness) rejects, i.e.
        $A^x\of*{w_y} = 0$,
        and query complexity of $A\of*{w_y}$ on $x$ is at most $T$, i.e.
        $\mathsf{M}\of*{A\of*{w_y},x} \leq T$.
    \end{itemize}
    The \emph{query $f$-certification complexity of $A$ w.r.t. $\C$} is
    $$\mathsf{M}^\cert(A,\C,f) := \max\limits_{x \in \D}\ \min\limits_{\substack{T \in \NN:\\A \predicate{ certifies } f \predicate{ at } x \predicate{ w.r.t. } \C\\\predicate{with query complexity at most }T}}\ T.$$
    Thus, the \emph{query certification complexity of $f$} in model $\mathsf{M}$ is
    $$\mathsf{M}^\cert(f) := \min\limits_{\C}\ \min\limits_{\substack{A \predicate{ algorithm in model } \mathsf{M}:\\\for{x \in \D} A \predicate{ certifies } f \predicate{ at } x \predicate{ w.r.t. } \C}}\ \mathsf{M}^\cert(A,\C,f).$$
\end{defi}

Saying that algorithm $A$ certifies $f$ at $x$ w.r.t. $\C$ only specifies how $A$ behaves with query access to an input in $\set{z} \cup \comprehension{y\in\D}{f(x) \neq f(y)}$ and nothing outside of this domain. Restrictions on the behavior of the algorithm outside of this domain fall within the specification of the query model of computation. More generally (\Cref{def:cert qc certification version}), one should be aware that the complexity measure $\mathsf{M}$ should be defined on partial functions for $\mathsf{M}^\cert$ to be defined (event on total functions).

\begin{lemma}[equivalence of the definitions of query certification complexity]\label{equivalence of cert qc versions}
    \Cref{def:cert qc certification version} and \Cref{def:cert qc witness version} define the same query certification complexity, i.e. for any query model of computation $\mathsf{M}$ and any Boolean (possibly partial) function
    $f : \D \subseteq \zone^n \to \zone$
    they define the same complexity measure $\mathsf{M}^\cert(f)$.
\end{lemma}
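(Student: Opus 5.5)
The plan is to prove that the minimum in \Cref{def:cert qc witness version} and the maximum over $z$ in \Cref{def:cert qc certification version} coincide, by establishing the inequality in each direction. Throughout, the witness $w_y$ plays the role of the fixed certification input $z$, with $z := y$. The one assumption I rely on about the query model $\mathsf{M}$ is mild: an algorithm may depend arbitrarily on classical information that is available for free. Concretely, a family of algorithms $(A_z)_{z\in\D}$ can be packaged as a single algorithm $A$ that reads the classical witness and then runs $A_{w}$.

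\textbf{Step 1: witness version $\leq$ certification version.} Choose the canonical witness family $\C := (w_x)_{x\in\D}$ with $w_x := x$. For each $z\in\D$, let $A_z$ be an optimal $\mathsf{M}$-algorithm computing $f^\cert_z$, with cost $\mathsf{M}(f^\cert_z)$. Define $A$ so that on witness $w$ it runs $A_w$ with query access to the input.

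Certification soundness holds because on input $x$ with witness $w_x = x$, $A_x$ is run on $x$, and $f^\cert_x(x)=1$, so $A$ accepts. Certification completeness holds because on input $x$ with a wrong witness $w_y = y$, where $f(x)\neq f(y)$, $A_y$ is run on $x$. Now $x$ lies in the domain of $f^\cert_y$ and $f^\cert_y(x)=0$, so $A$ rejects.

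The cost at $x$ is at most $\max\set*{\mathsf{M}(A_x,x), \max_y \mathsf{M}(A_y,x)}$, and each term is bounded by $\max_z \mathsf{M}(f^\cert_z)$. Hence the witness-version complexity is at most $\mathsf{M}^\cert(f)$.

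\textbf{Step 2: certification version $\leq$ witness version.} Take any witness family $\C$ and any algorithm $A$ achieving the witness-version optimum $T$. Fix $z\in\D$ and let $B_z := A(w_z)$, that is, $A$ with the classical witness $w_z$ hard-wired. This is a legitimate algorithm in $\mathsf{M}$, since the witness is free classical advice.

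On input $z$, soundness says $B_z$ accepts with cost at most $T$. On any input $x$ with $f(x)\neq f(z)$, completeness applied at $x$ with the wrong witness $w_z$ says $A^x(w_z)=0$ with cost at most $T$. So $B_z$ computes $f^\cert_z$ with cost at most $T$, giving $\mathsf{M}(f^\cert_z)\leq T$ for every $z$, hence $\mathsf{M}^\cert(f)\leq T$.

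\textbf{Main obstacle.} The subtle point is the quantifier order and the cost accounting. In the witness version, completeness for the pair $(x, w_z)$ is charged at input $x$, so it is the maximum over $x$ of the per-$x$ bound that controls $B_z$ on all of its domain. This is exactly what Step 2 requires.

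A second issue is that the model $\mathsf{M}$ must allow hard-wiring classical advice (Step 2) and case-splitting on free classical input (Step 1). These are the only properties of \Cref{def:query model of computation} used, and I would state them explicitly as the implicit convention behind "plain access (for free, and classical)".

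Finally, the typo in \Cref{def:query model of computation}, where a $\max$ over algorithms appears, should be read as a $\min$ over algorithms computing $f$. With that reading both sides are minima over algorithms and the two inequalities combine to give equality.
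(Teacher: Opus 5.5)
Your proposal is correct and follows essentially the same argument as the paper. One direction hard-wires the witness $w_z$ into the witness-version algorithm to get an algorithm for $f^\cert_z$. The other takes the canonical witnesses $w_z := z$ and packages the optimal algorithms for the $f^\cert_z$ into a single algorithm, which the paper justifies by finiteness of $\D$. Your reading of the $\max$ over algorithms in \Cref{def:query model of computation} as a $\min$ is also the one the paper's proof implicitly uses.
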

\begin{proof}
    We label by $\text{w}$ objects and predicates defined by the witness version of query certification complexity (\Cref{def:cert qc witness version}) and by $\text{c}$ those defined by the usual certification version (\Cref{def:cert qc certification version}). Consider a query model of computation $\mathsf{M}$.

    Fist we show
    $\mathsf{M}^\cert_\text{c}(f) \leq \mathsf{M}^\cert_\text{w}(f)$.
    Let the family of witnesses
    $\C = \of*{w_z}_{z\in\D}$ and algorithm $A$ be such that for all
    $x \in \D$
    algorithm $A$ certifies $f$ at $x$ w.r.t. $\C$, and
    $\mathsf{M}^\cert_\text{w}(f) = \mathsf{M}^\cert_\text{w}(A,\C,f)$
    (i.e. they are optimal).
    So for any inputs
    $x,z\in\D$,
    if
    $z=x$
    then
    $A^x\of*{w_z}=1$
    with query complexity
    $\mathsf{M}(A\of*{w_z),x} \leq \mathsf{M}^\cert_\text{w}(A,\C,f) = \mathsf{M}^\cert_\text{w}(f)$,
    and if
    $f(z) \neq f(x)$
    then
    $A^x\of*{w_z}=0$
    with query complexity
    $\mathsf{M}(A\of*{w_z),x} \leq \mathsf{M}^\cert_\text{w}(A,\C,f) = \mathsf{M}^\cert_\text{w}(f)$.
    Let algorithm $B$ be such that for any inputs
    $x,z\in\zone^n$,
    $B^x(z)$ first computes $w_z$ (for free) and continues like $A^x\of*{w_z}$. Then, for all inputs
    $z \in \D$
    and
    $x \in \set*{z} \cup \comprehension{y \in \D}{f(y) \neq f(z)}$,
    $B^x(z)=A^x(z)=f^\cert_z(x)$
    with query complexity
    $\mathsf{M}(B(z),x) = \mathsf{M}\of*{A\of*{w_z},x} \leq \mathsf{M}^\cert_\text{w}(f)$.
    So
    $$\mathsf{M}^\cert_\text{c}(f)
    = \max\limits_{z\in\D} \mathsf{M}(f^\cert_z)
    \leq \max\limits_{z\in\D} \max\limits_{x\in\set*{z} \cup \comprehension{y \in \D}{f(y) \neq f(z)}} \mathsf{M}(B(z),x)
    \leq \max\limits_{z\in\D} \mathsf{M}^\cert_\text{w}(f)
    = \mathsf{M}^\cert_\text{w}(f).$$

    Now we show the other way
    $\mathsf{M}^\cert_\text{w}(f) \leq \mathsf{M}^\cert_\text{c}(f)$.
    For all
    $z\in\D$,
    let
    $A(z)$ be an algorithm computing $f^\cert_z$ with optimal query complexity $\mathsf{M}\of*{f^\cert_z}$. So for all inputs
    $z\in\D$
    and
    $x \in \set*{z} \cup \comprehension{y \in \D}{f(y) \neq f(z)}$,
    $A^x(z)=f^\cert_z(x)$
    with query complexity
    $\mathsf{M}(A(z),x) \leq \mathsf{M}\of*{f^\cert_z}$.
    Since $\D$ is finite, then $\of*{A(z)}_{z\in\D}$ is a uniform family of algorithms, so $A$ itself is an algorithm. So $A$ $\text{w}$-certifies $f$ w.r.t.
    $\C:=(z)_{z\in\D}$ with query complexity
    $\mathsf{M}^\cert_w(A,\C,f) \leq \max\limits_{z\in\D} \mathsf{M}\of*{f^\cert_z} = \mathsf{M}^\cert_\text{c}(f)$. Thus
    $$\mathsf{M}^\cert_\text{w}(f)
    \leq \mathsf{M}^\cert_w(A,\C,f)
    \leq \mathsf{M}^\cert_\text{c}(f).\qedhere$$
\end{proof}
%==================
\section{Proof of \Cref{C = (C_0 C_1) cert}}\label{appendix:proof of lemmas}

\begin{lemma}\label{C = (C_0 C_1) cert}
    For any boolean (possibly partial) function
    $f:\D \subseteq \zone^n \to \zone$
    and any
    $k > 0$,
    $$\of*{\cc_0\cdot\cc_1}^\cert(f) = \cc(f)$$
\end{lemma}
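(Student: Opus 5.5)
We compute $\of*{\cc_0\cdot\cc_1}\of*{f^\cert_z}$ directly for every $z \in \D$. The claim is that this value is exactly $\cc(f,z)$ whenever $f$ is non-constant. Taking the maximum over $z$ then gives $\cc(f)$. If $f$ is constant, every $f^\cert_z$ has domain $\set{z}$ and is constant, so both sides are $0$, since the empty certificate works and $\cc(f)=0$. From now on we assume $f$ is non-constant, so $f^\cert_z$ is a genuinely two-valued partial function.

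We first bound the $1$-side. The only $1$-input of $f^\cert_z$ is $z$, so $\cc_1\of*{f^\cert_z} = \cc\of*{f^\cert_z,z}$. A partial assignment $c \prec z$ is a certificate for $z$ with respect to $f^\cert_z$ iff it conflicts with every $y \in \D$ with $f(y)\neq f(z)$. This is exactly the condition for $c$ to be a (weak) certificate for $z$ with respect to $f$. Hence $\cc_1\of*{f^\cert_z} = \cc(f,z)$.

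Next we show $\cc_0\of*{f^\cert_z} = 1$. Take a $0$-input $y$ of $f^\cert_z$, so $f(y)\neq f(z)$ and $y\neq z$. Choose an index $i$ with $y_i\neq z_i$, and let $c$ assign only $y_i$ at position $i$. Then $c \prec y$ and $c$ conflicts with $z$, the unique $1$-input, so $c$ is a $0$-certificate of size $1$. A size-$0$ certificate is impossible because $z$ lies in the domain, so the minimum is exactly $1$. The $0$-inputs are nonempty because $f$ is non-constant.

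Combining the two sides gives $\of*{\cc_0\cdot\cc_1}\of*{f^\cert_z} = 1\cdot \cc(f,z)$. Maximizing over $z\in\D$ yields $\of*{\cc_0\cdot\cc_1}^\cert(f) = \max_z \cc(f,z) = \cc(f)$. The parameter $k$ in the statement plays no role. The only delicate point is the bookkeeping in the degenerate cases: $f$ constant, where $\cc_0$ or $\cc_1$ of $f^\cert_z$ is taken over an empty set and is $0$, and the check that $\cc(f,z)=0$ there as well. No real obstacle is expected.
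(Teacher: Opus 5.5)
Your proof is correct and takes essentially the same route as the paper. Both arguments show $\cc_0(f^\cert_z)=1$ using a single bit where $y$ differs from $z$, and both identify the remaining factor with the certificate complexity of $f$ at $z$. The only difference is bookkeeping: the paper rewrites the product as $\max\{\cc_0(f^\cert_z),\cc_1(f^\cert_z)\}$ and then invokes $\cc^\cert=\cc$ from the locality remark, while you check $\cc_1(f^\cert_z)=\cc(f,z)$ directly, which is exactly the fact that remark encodes.
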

\begin{proof}
    Suppose $f$ is non-constant, as whenever $f$ is constant
    $\of*{\cc_0\cdot\cc_1}^\cert(f) = 0 = \cc(f)$
    and the equality holds.
    Since for a $0$-input $y$ of $f^\cert_z$, a single bit $i$ where $y_i$ differs from $z_i$ is enough to certify $y$ (and at least one bit is necessary to certify $y$ as $f$ is non-constant), then
    $\cc_0\of*{f^\cert_z} = 1$
    holds for any input
    $z \in \D$.
    Hence
    \begin{align*}
        \of*{\cc_0\cdot\cc_1}^\cert(f)
        &= \max\limits_{z\in\D} {\cc_0\of*{f^\cert_z}} \cdot \cc_1\of*{f^\cert_z}\\
        &= \max\limits_{z\in\D} 1 \cdot \max\of*{1 , \cc_1\of*{f^\cert_z}}\\
        &= \max\limits_{z\in\D} \max\of*{\cc_0\of*{f^\cert_z} , \cc_1\of*{f^\cert_z}}\\
        &= \cc^\cert(f)\\
        &= \cc(f)
    \end{align*}
    where the second equality follows from
    $\cc_0\of*{f^\cert_z} = 1$
    and
    $\cc_1\of*{f^\cert_z} \geq 1$
    as $f$ is non constant, the third equality follows from
    $\cc_0\of*{f^\cert_z} = 1$,
    and the last equality follows from \Cref{local cert = local}.
\end{proof}

\section{Independence of bounded-polynomial degree measures from the boundedness parameter}\label{appendix:boundedness}

{\em The proofs in Appendix~\ref{appendix:boundedness} were obtained using significant assistance from GPT‑5.6 Terra and Claude Opus 5.}

The following three lemmas allow us to show that up to constant factors in the degree, the bounding parameter for bounded polynomials can be taken to be 1.
Rescaling the polynomials does not increase the error.

\begin{lemma}[range reduction]\label{range-reduction: exact}
    Let
    $f: \D \subseteq \zone^n \to \zone$
    be a Boolean (possibly partial) function and let $P$ be a degree $d$ exact polynomial for $f$, with values bounded to the range $\bracket*{-C,C}$ for all points on the Boolean hypercube, where $C > 1$, i.e.
    \begin{align*}
        P(x)=f(x)  & \qquad \forall x \in \D, \\
        P(x) \in \bracket*{-C,C} & \qquad \forall x \in \zone^n ,\\
        \deg(P)=d.
    \end{align*}
    Then there exists a polynomial $Q$ that exactly computes $f$ with range $\bracket*{0,1}$ on the hypercube and degree $m_{C} \cdot d$, where $m_{C}$ is a constant depending only on $C$, i.e.
    \begin{align*}
        Q(x)=f(x) & \qquad \forall x \in \D, \\
        Q(x) \in \bracket*{0,1} & \qquad \forall x \in \zone^n ,\\
        \deg(Q) \leq m_{C} \cdot d.
    \end{align*}
\end{lemma}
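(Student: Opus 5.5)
The plan is to compose $P$ with a univariate polynomial $q$ of constant degree, depending only on $C$, that fixes $0$ and $1$ exactly and maps the whole interval $[-C,C]$ into $[0,1]$. Set $Q := q\circ P$, multilinearized on $\zone^n$ (using $x_i^2=x_i$, which never increases degree). Then $Q(x)=q(f(x))=f(x)$ on $\D$, $Q(x)\in q([-C,C])\subseteq[0,1]$ on the hypercube, and $\deg(Q)\le \deg(q)\cdot d$. So the lemma reduces to the existence of such a $q$, and we take $m_C:=\deg(q)$.

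To construct $q$, I would first rescale: let $u(t) := \frac{t+C}{2C}$, which maps $[-C,C]$ onto $[0,1]$ with $u(0)=\frac12$ and $u(1)=\frac{1}{2}+\frac{1}{2C}$. These two images are distinct interior points of $(0,1)$. It then suffices to find a polynomial $r$ with $r([0,1])\subseteq[0,1]$, $r(u(0))=0$ and $r(u(1))=1$, and to set $q:=r\circ u$.

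Step 1 is to amplify. Take a fixed polynomial $h$ of constant degree, mapping $[0,1]$ into $[0,1]$, that pushes the two points $a:=u(0)$ and $b:=u(1)$ apart toward $0$ and $1$. One option is to first shift by an affine map so that the threshold $\frac{a+b}{2}$ sits at $\frac12$, clipping the range suitably. Then iterate the amplifier $t\mapsto 3t^2-2t^3$, or use a Bernstein-type threshold polynomial (Bernstein polynomials are cited in the bibliography), a constant number of times depending only on $C$. This gives $h(a)\le\delta$ and $h(b)\ge 1-\delta$ for a small $\delta$. Since $b-a=\frac{1}{2C}$, the number of iterations, and hence the degree, depends only on $C$.

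Step 2, the main obstacle, is to turn this approximate separation into \emph{exact} values $0$ and $1$ while staying inside $[0,1]$. The obvious correction $r(t) := \frac{h(t)-h(a)}{h(b)-h(a)}$ is exact at $a$ and $b$, but it can leave $[0,1]$ by about $O(\delta)$ at either end. My fix is to shrink and re-amplify with a two-point exact polynomial. Write $g(s)=\alpha s+\beta$ with $g(h(a))=\eta$ and $g(h(b))=1-\eta$ for a small fixed $\eta$. Because $h(a),h(b)$ are $\delta$-close to $0,1$ and $\delta\ll\eta$, $g$ maps $[0,1]$ into $[0,1]$. Finally apply a fixed univariate polynomial $p_\eta$ with $p_\eta(\eta)=0$, $p_\eta(1-\eta)=1$ and $p_\eta([0,1])\subseteq[0,1]$. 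For instance, take $p_\eta(s)$ to be $0$ to high order at $\eta$ and $1$ to high order at $1-\eta$. A concrete choice is the regularized incomplete-beta-type polynomial $\int_\eta^s (x-\eta)^k(1-\eta-x)^k\,dx$ normalized appropriately, but this overshoots outside $[\eta,1-\eta]$. So instead I would choose $\eta=0$ in the interior construction and handle the endpoints by composing with $t\mapsto t^2$-type maps that fix $0$.

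If that becomes messy, a cleaner fallback is Hermite/Bernstein interpolation: a polynomial on $[0,1]$ monotone increasing, equal to $0$ on the single point $a$ and $1$ at $b$, with flat derivatives there. One can take the even polynomial $1-(1-\text{scaled }(t-a)^2)^K$ style constructions and verify the bound $[0,1]$ by monotonicity. I expect verifying the range condition with exact interpolation to be the only real work; every choice depends only on $C$, giving $m_C$.
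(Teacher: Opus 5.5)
Your reduction to a univariate $q$ with $q(0)=0$, $q(1)=1$ and $q([-C,C])\subseteq[0,1]$, composed with $P$, is exactly what the paper does. Your Step~1 (affine recentring, then a constant number of iterations of $t\mapsto 3t^2-2t^3$) is also fine. The gap is Step~2, which carries the entire content of the lemma and which the proposal never settles.

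\begin{itemize}
\item The requirement on $p_\eta$ is again a two-point exact interpolation problem of the same kind as the one you started from. It restates the difficulty rather than resolving it.
\item The ``monotone increasing'' fallback is impossible. A polynomial that is nondecreasing on $[0,1]$ and vanishes at the interior point $a=\tfrac12$ is negative somewhere on $[0,a)$ unless it is identically zero. Any valid $q$ must have interior extrema at both $a$ and $b$.
\item The construction $1-\bigl(1-c(t-a)^2\bigr)^K$ equals $1$ at $b$ only if $c=(b-a)^{-2}=4C^2$. Then $c(t-a)^2=C^2$ at $t=0$, and $1-(1-C^2)^K$ lies outside $[0,1]$ for every $K\geq 1$ once $C>\sqrt{2}$.
\item The remark about ``$t\mapsto t^2$-type maps'' is left unverified.
\end{itemize}

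The missing observation is that the overshoot of your normalized $r=(h-h(a))/(h(b)-h(a))$ is small. A fixed polynomial that fixes $0$ and $1$ with zero derivative there absorbs it. If $\delta\le\tfrac14$ then $r([0,1])\subseteq[-\tfrac12,\tfrac32]$. On that interval $p(s)=3s^2-2s^3$ satisfies $p(s)=s^2(3-2s)\geq 0$ and $1-p(s)=(s-1)^2(2s+1)\geq 0$. Hence $q:=p\circ r\circ u$ works, where $u$ is your rescaling, with $m_C$ polynomial in $C$.

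Up to an affine change of variable, this $p$ is your incomplete-beta polynomial with $k=1$. You discarded that family too quickly. For odd $k$ it has a local minimum $0$ at $\eta$ and a local maximum $1$ at $1-\eta$, so it stays in $[0,1]$ on all of $[0,1]$ once $\eta$ is small enough. For $k=1$, $\eta\le\tfrac14$ suffices.

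The paper instead does it in one stroke with a Chebyshev polynomial, $A(t)=\tfrac12\bigl(1+T_m(\sin(\tfrac{\pi}{m})\,t)\bigr)$ with $m\equiv 2\pmod 4$ and $\pi C\le m\le \pi C+4$. The scaling sends $[-C,C]$ into $[-1,1]$, where $|T_m|\le 1$. It sends $0$ and $1$ to the adjacent extremal points $\cos(\tfrac m2\cdot\tfrac{\pi}{m})$ and $\cos((\tfrac m2-1)\tfrac{\pi}{m})$ of $T_m$, where $T_m$ equals $-1$ and $+1$ respectively. The oscillation of $T_m$ supplies exactly the two interior extrema your constructions were missing. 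It also yields $m_C=O(C)$ without any amplification.
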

\begin{proof}
    We will construct a univariate polynomial $A$ with
    $$
    A\of*{\bracket*{-C,C}} \subseteq \bracket*{0,1}, \qquad A(0)=0, \qquad A(1)=1.
    $$

    Choose
    $m \equiv 2 \pmod 4$
    such that
    $\pi C \leq m \leq \pi C + 4$. Then
    $$
    A(t) = \frac{1 + T_m\of*{\sin\of*{\frac{\pi}{m}} t}}{2}
    $$
    where $T_m$ is the degree-$m$ Chebyshev polynomial of the first kind, i.e. defined by the property that
    $T_m(\cos \theta)=\cos(m \theta)$.
    Denote
    $u(t) := \sin\of*{\frac{\pi}{m}} t$.

    Since
    $\sin\of*{\frac{\pi}{m}} \leq \frac{\pi}{m} \leq \frac{1}{C}$,
    we have for every
    $t \in \bracket*{-C,C}$
    $$
    \card*{t} \sin\of*{\frac{\pi}{m}} \leq C \sin\of*{\frac{\pi}{m}} \leq 1 ,
    $$
    so
    $u(t) \in \bracket*{-1,1}$.
    As
    $\card*{T_m} \leq 1$
    on
    $\bracket*{-1,1}$,
    it follows that
    $A(t) \in \bracket*{0,1}$
    for all
    $t \in \bracket*{-C,C}$.

    Since
    $m \equiv 2 \pmod 4$,
    then $\frac{m}{2}$ is an odd integer and
    $T_m(u(0))=T_m(0) = \cos\of*{\frac{m\pi}{2}} = -1 $,
    therefore
    $$
    A(0) = \frac{1 + (-1)}{2} = 0 .
    $$
    On the other hand,
    $u(1) = \sin\of*{\frac{\pi}{m}} = \cos\of*{\frac{\pi}{2} - \frac{\pi}{m}}
    = \cos\of*{\frac{\of*{\frac{m}{2} - 1}\pi}{m}}$,
    and
    $\frac{m}{2} - 1$
    is even, so
    $T_m(u(1)) = \cos\of*{\of*{\frac{m}{2} - 1}\pi} = 1$.
    Therefore 
    $$
    A(1) = \frac{1 + 1}{2} = 1 .
    $$

    Finally, define polynomial $Q$ as
    $Q(x):=A(P(x))$.
    This satisfies all the conditions, and has degree
    $\deg(Q) \leq m \cdot d$,
    where $m$ depends only on $C$.
\end{proof}

\begin{lemma}[approximate range reduction]\label{range-reduction: approximate}
    Let $f: \D \subseteq \zone^n \to \zone$ be a Boolean (possibly partial) function and let $P$ be an $\varepsilon$-approximating polynomial for $f$ with $0 < \varepsilon < \frac{1}{2}$ and degree $d$, with values bounded to the range $\bracket*{-C,C}$ for all points on the Boolean hypercube:
    \begin{align*}
        \card*{P(x)-f(x)} \leq \varepsilon & \qquad \for{x \in \D}\\
        P(x) \in \bracket*{-C,C} & \qquad \for{x \in \zone^n}\\
        \deg(P)=d.
    \end{align*}
    Then there exists a polynomial $Q$ that $\varepsilon$-approximates $f$ with range $\bracket*{0,1}$ on the hypercube and degree at most $m_{C,\varepsilon} \cdot d$, where $m_{C,\varepsilon}$ is a constant depending only on $C$ and $\varepsilon$:
    \begin{align*}
        \card*{Q(x)-f(x)} \leq \varepsilon & \qquad \for{x \in \D}\\
        Q(x) \in \bracket*{0,1} & \qquad \for{x \in \zone^n}\\
        \deg(Q) \leq m_{C,\varepsilon} \cdot d.
    \end{align*}
\end{lemma}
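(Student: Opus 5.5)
We compose $P$ with a fixed univariate polynomial $B$ that maps all of $\bracket*{-C,C}$ into $\bracket*{0,1}$ while keeping the two error windows $\bracket*{-\varepsilon,\varepsilon}$ and $\bracket*{1-\varepsilon,1+\varepsilon}$ inside $\bracket*{0,\varepsilon}$ and $\bracket*{1-\varepsilon,1}$. Then $Q := B\circ P$ satisfies the conclusion, with $\deg(Q) \leq \deg(B)\cdot d$, and $\deg(B)$ depends only on $C$ and $\varepsilon$. The reduction from multivariate to univariate is immediate. For every $x\in\zone^n$ we have $P(x)\in\bracket*{-C,C}$, so $Q(x)\in\bracket*{0,1}$. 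For $x\in f^{-1}(0)$ we have $P(x)\in\bracket*{-\varepsilon,\varepsilon}$, so $Q(x)\in\bracket*{0,\varepsilon}$. For $x\in f^{-1}(1)$ we have $P(x)\in\bracket*{1-\varepsilon,1+\varepsilon}$, so $Q(x)\in\bracket*{1-\varepsilon,1}$. Only the construction of $B$ remains.

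To build $B$, we first use \Cref{range-reduction: exact}, which the paper proves earlier. That proof gives a polynomial $A(t)=\frac{1+T_m(\sin(\pi/m)t)}{2}$ with $A(\bracket*{-C,C})\subseteq\bracket*{0,1}$, $A(0)=0$ and $A(1)=1$. It does not directly control the error windows, because $A$ moves $\varepsilon$-neighbourhoods of $0$ and $1$ by up to $\card*{A'}\cdot\varepsilon$. We therefore work in two stages.

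Stage one: shrink the errors. We need an amplifying polynomial $E$ with $E(\bracket*{-C,C})\subseteq\bracket*{-C',C'}$ for a constant $C'$, and $E$ mapping $\bracket*{-\varepsilon,\varepsilon}$ into $\bracket*{-\delta,\delta}$ and $\bracket*{1-\varepsilon,1+\varepsilon}$ into $\bracket*{1-\delta,1+\delta}$, where $\delta$ is a small constant chosen below. Because $\varepsilon<\frac12$, the two windows are disjoint compact intervals. We take $E$ to be a polynomial approximation (by Weierstrass or Jackson) of a continuous function equal to $0$ on a neighbourhood of the first window, equal to $1$ on a neighbourhood of the second, and clipped to $\bracket*{0,1}$ elsewhere on $\bracket*{-C,C}$. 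The approximation accuracy is $\delta$ and the degree depends only on $(C,\varepsilon,\delta)$; we can take $C'=1+\delta$.

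Stage two: exact range reduction with slack. On $\bracket*{-C',C'}$ we apply a polynomial $A'$ with $A'(\bracket*{-C',C'})\subseteq\bracket*{0,1}$, $A'(\bracket*{-\delta,\delta})\subseteq\bracket*{0,\varepsilon}$ and $A'(\bracket*{1-\delta,1+\delta})\subseteq\bracket*{1-\varepsilon,1}$. We take the Chebyshev construction $A$ from \Cref{range-reduction: exact} built for the bound $C'$. It is Lipschitz with constant $L=O(m\cdot\sin(\pi/m)\cdot m)=O(C')$. Since $0$ and $1$ are extremal points of $A$ (its minimum $0$ and maximum $1$), it is quadratically flat there, so $A$ maps a $\delta$-neighbourhood of $0$ into $\bracket*{0,O(L^2\delta^2)}$, and similarly near $1$. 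Choosing $\delta$ small enough in terms of $C$ and $\varepsilon$ gives the required inclusions. We set $B:=A'\circ E$, whose degree is a constant depending only on $(C,\varepsilon)$.

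The main obstacle is stage two: mapping into the exact closed range $\bracket*{0,1}$ while keeping the error at most $\varepsilon$, with no loss at all. What makes this work is that the Chebyshev map attains $0$ and $1$ as its extrema at the points $0$ and $1$, so it never overshoots the range and compresses small neighbourhoods of those points. Stage one is then only needed so that the windows are small enough for the quadratic flatness to take effect. Without that flatness we would get the error $L\varepsilon$, which may exceed $\varepsilon$. If the Chebyshev extremum argument turned out to be delicate, the fallback would be to compose with a Bernstein-type smoothing polynomial of a clipped step function: it is monotone and range-preserving on $\bracket*{0,1}$ and has error as small as desired.
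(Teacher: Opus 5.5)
Your proof is correct, but your main route differs from the paper's. The paper's proof is essentially your fallback. It composes $P$ with a single Bernstein polynomial $A_m$ of the clipped piecewise-linear function $g$ (equal to $0$ for $t\leq\varepsilon$, to $1$ for $t\geq 1-\varepsilon$, linear in between), with $\bracket*{-C,C}$ rescaled to $\bracket*{0,1}$. The Bernstein basis is nonnegative and sums to $1$, so $A_m$ maps $\bracket*{-C,C}$ into $\bracket*{0,1}$ automatically. Since $g$ is exactly $0$ or $1$ on the error windows, the uniform bound $\card*{A_m-g}\leq\varepsilon$ directly gives $\card*{Q(x)-f(x)}\leq\varepsilon$, and no second stage is needed.

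You instead use a generic Weierstrass/Jackson approximation, which may overshoot $\bracket*{0,1}$. You repair this by composing with the Chebyshev map of \Cref{range-reduction: exact}, which lies in $\bracket*{0,1}$ on $\bracket*{-C',C'}$ and takes the values $0$ and $1$ at $t=0$ and $t=1$. That repair is why you must first shrink the windows to width $\delta$.

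Two small points on your stage two. First, quadratic flatness is not actually needed: after stage one, the Lipschitz bound alone puts the images in $\bracket*{0,L\delta}$ and $\bracket*{1-L\delta,1}$. Second, $C'=1+\delta$ and $\delta$ are chosen in terms of each other, which looks circular. To avoid this, fix the Chebyshev map once for, say, $C'=2$, which is valid for all $\delta\leq 1$. Then $L$ is an absolute constant and $\delta$ depends only on $\varepsilon$.

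What your modular route buys is quantitative. With Jackson's theorem, stage one has degree $O\of*{C/(\delta(1-2\varepsilon))}$ and stage two has constant degree, so $m_{C,\varepsilon}$ grows linearly in $C$. Flatness then lets you take $\delta=\Theta(\sqrt{\varepsilon})$ rather than $\Theta(\varepsilon)$. In the paper's route, the $\omega(m^{-1/2})$ convergence rate of Bernstein polynomials gives a bound quadratic in $C$. For the qualitative statement, which is all the lemma needs, the paper's one-step argument is shorter and fully self-contained.
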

\begin{proof}
    Define the continuous function $g:\bracket*{-C,C} \to \bracket*{0,1}$:
    $$
        g(t)=
        \begin{cases}
            0 &\predicate{if } t\leq\varepsilon, \\
            \dfrac{t-\varepsilon}{1-2\varepsilon} 
            &\predicate{if } \varepsilon<t<1-\varepsilon, \\
            1 &\predicate{if } t\geq 1-\varepsilon.
        \end{cases}
    $$
    Then we can rescale $\bracket*{-C,C}$ to $\bracket*{0,1}$ and examine the $m$-th Bernstein approximation polynomial of this function:
    $$
    A_m(t):=
    \sum_{k=0}^m
    g\of*{-C+\frac{2Ck}{m}}
    \binom{m}{k}
    \of*{\frac{t+C}{2C}}^k
    \of*{1-\frac{t+C}{2C}}^{m-k}.
    $$
    Then, by the Bernstein approximation theorem \cite{Bernstein12}, there exists some $m_{C,\varepsilon}$ such that
    $\card*{ A_{m_{C,\varepsilon}}(t) - g(t) } \leq \varepsilon$ on $t \in \bracket*{-C,C}$.

    Furthermore, by the properties of Bernstein polynomials \cite{lorentz1986bernstein}, for
    $t\in\bracket*{-C,C}$,
    the quantities multiplying the values of $g$ are nonnegative and sum to $1$. Since
    $g(t) \in \bracket*{0,1}$,
    it follows that
    $A_{m_{C,\varepsilon}}(t)\in\bracket*{0,1}$
    for all
    $t \in\bracket*{-C,C}$.

    Now let $Q(x)=A_{m_{C,\varepsilon}}(P(x))$. Since $P(x)\in\bracket*{-C,C}$ for every $x\in\zone^n$, we have $Q(x)\in\bracket*{0,1}$
    for all $x\in\zone^n$.
 
    On the other hand, if
    $x\in \D$,
    we have that either
    $P(x)\in\bracket*{-\varepsilon,\varepsilon}$
    or 
    $P(x)\in\bracket*{1-\varepsilon,1+\varepsilon}$ 
    and hence
    $g(P(x))=0$
    or
    $g(P(x))=1$
    exactly. Therefore
    $$
    \card*{Q(x)-f(x)}
    =\card*{A_{m_{C,\varepsilon}}(P(x))-g(P(x))}
    \leq \varepsilon.
    $$
    Finally,
    $\deg(Q)\leq \deg(A) \cdot \deg(P)\leq m_{C,\varepsilon} \cdot d$.
\end{proof}

\begin{lemma}[one-sided range reduction]\label{range-reduction: onesided}
    Let $f: \D \subseteq \zone^n \to \zone$ be a Boolean (possibly partial) function and let $P$ be a one-sided-error $\varepsilon$-approximating polynomial for $f$ with $0 < \varepsilon < \frac{1}{2}$ and degree $d$, with values bounded to the range $\bracket*{-C,C}$ for all points on the Boolean hypercube.

    Then there exists a polynomial $Q$ that is a one-sided $\varepsilon$-approximation of  $f$ with range $\bracket*{0,1}$ on the hypercube and degree at most $m_{C,\varepsilon}\cdot d$, where $m_{C,\varepsilon}$ is a constant depending only on $C$ and $\varepsilon$.
\end{lemma}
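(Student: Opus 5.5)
The plan is to follow the proof of \Cref{range-reduction: approximate} closely: compose $P$ with a univariate polynomial $A$ that maps $\bracket*{-C,C}$ into $\bracket*{0,1}$. The new requirement is that $A$ must also preserve the exact value on the side where the error is zero. By symmetry (replace $f$ by $\neg f$, $P$ by $1-P$, and $Q$ by $1-Q$ at the end), it suffices to treat the $0$-sided case. Here $P(x)=1$ exactly for $x \in f^{-1}(1)$, and $\card*{P(x)} \leq \varepsilon$ for $x \in f^{-1}(0)$. We therefore need $A(1)=1$ exactly, $A(t) \in \bracket*{0,\varepsilon}$ for $t \in \bracket*{-\varepsilon,\varepsilon}$, and $A(\bracket*{-C,C}) \subseteq \bracket*{0,1}$. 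Then $Q := A \circ P$ has $Q(x)=1$ on $1$-inputs, $Q(x) \leq \varepsilon$ on $0$-inputs, $Q \in \bracket*{0,1}$ on the hypercube, and $\deg Q \leq \deg(A)\, d$.

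For the construction, I would first take the continuous target $g$ from the proof of \Cref{range-reduction: approximate}. It satisfies $g=0$ near the interval $\bracket*{-\varepsilon,\varepsilon}$ and $g=1$ on $\bracket*{1-\varepsilon,C}$. I would approximate it by a Bernstein-type polynomial $B$ of constant degree $m_{C,\varepsilon}$ with uniform error $\delta$, and values in $\bracket*{0,1}$ on $\bracket*{-C,C}$, as before. The issue is that $B(1)$ is only within $\delta$ of $1$, not equal to it. To fix this, I would do an exact renormalization similar to \Cref{deg cert < osdeg}, but keeping boundedness. Set $B' := (1-\eta) B$ and compose with a fixed polynomial $h$ that maps $\bracket*{0,1}$ into $\bracket*{0,1}$ and sends the specific known value $B'(1) \in \bracket*{1-\eta-\delta,1}$ to exactly $1$. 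A clean choice for $h$ is the Chebyshev trick from \Cref{range-reduction: exact}: an affine rescaling composed with $\frac{1+T_k(\cdot)}{2}$, arranged to hit $1$ at the point $B'(1)$.

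A simpler alternative I would try first avoids exact renormalization. Take $A(t) := 1 - \of*{1 - \frac{(t+C)^2}{(1+C)^2}\cdot c}^{r}$-type constructions, or more directly $A(t) := \frac{1+T_k(\alpha t + \beta)}{2}$. Choose $\alpha,\beta$ so that $\alpha \cdot 1 + \beta = 1$, which gives $A(1)=1$ since $T_k(1)=1$. Choose them also so that $\bracket*{-C,C}$ maps into $\bracket*{-1,1}$, and so that $\bracket*{-\varepsilon,\varepsilon}$ maps into a neighborhood of a point where $T_k=-1$.

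The main obstacle is this second requirement. Since $1$ is an interior point of $\bracket*{-C,C}$, the interval cannot map into $\bracket*{-1,1}$ with $1 \mapsto 1$ unless the map is degenerate. Therefore the exactness at $1$ must instead come from $T_k$ hitting $1$ at an interior node $\cos(2\pi j/k)$. Meanwhile $\bracket*{-\varepsilon,\varepsilon}$ has width $2\varepsilon$, and on an interval that wide $T_k$ cannot stay within $\varepsilon$ of $-1$ if $k$ is large. The resolution I would pursue is to iterate: first apply an exact-at-$\set{0,1}$ bounded map, as in \Cref{range-reduction: exact}, that contracts $\bracket*{-\varepsilon,\varepsilon}$ toward $0$. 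Then apply an amplifier such as $t \mapsto 1-(1-t)^{j}$, which preserves $1$ exactly and maps $\bracket*{0,1}$ to itself, or the dual $t \mapsto t^{j}$, which fixes $1$ and shrinks small values. The composite $t^{j}$ after a bounded map into $\bracket*{0,1}$ with $A_0(1)=1$ and $A_0 \leq 1-\gamma$ on $\bracket*{-\varepsilon,\varepsilon}$ already works, with $j = O(\log(1/\varepsilon)/\gamma)$. So the crux is building $A_0$ of constant degree that is bounded in $\bracket*{0,1}$ on $\bracket*{-C,C}$, equals $1$ at $1$, and stays at most $1-\gamma$ on $\bracket*{-\varepsilon,\varepsilon}$ for some constant $\gamma>0$. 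A suitably shifted Chebyshev polynomial, whose extremal node is placed at $1$ and whose neighboring extrema are separated from $\bracket*{-\varepsilon,\varepsilon}$ because $\varepsilon<\frac{1}{2}$, provides this.
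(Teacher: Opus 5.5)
Your final construction is correct, and it takes a genuinely different route from the paper.

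The paper treats the mirror case, where $P$ is exact at $0$. It reuses the Bernstein polynomial $A$ of the piecewise-linear $g$ from \Cref{range-reduction: approximate}, with a smaller error $\varepsilon'$. It then restores exactness at $0$ by setting $B(t)=\of*{\frac{A(t)-\delta}{1-\delta}}^2$ with $\delta=A(0)\leq\varepsilon'$. The squaring absorbs the small negative values created by the shift, and the other side only degrades to $(1-2\varepsilon')^2\geq 1-\varepsilon$.

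You instead take a single bounded polynomial $A_0$ with $A_0(\bracket*{-C,C})\subseteq\bracket*{0,1}$, $A_0(1)=1$ and $A_0\leq 1-\gamma$ on $\bracket*{-\varepsilon,\varepsilon}$. You then amplify by powering, $A:=A_0^j$ with $j=\lceil \ln(1/\varepsilon)/\gamma\rceil$. Powering keeps the value $1$ exact, preserves $\bracket*{0,1}$, and pushes the error side down to at most $(1-\gamma)^j\leq\varepsilon$.

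The polynomial of \Cref{range-reduction: exact} is indeed such an $A_0$: its nearest points with value $1$ are $t=\pm 1$, so $\gamma>0$ by compactness. Still, you should name one explicitly instead of appealing to a ``suitably shifted Chebyshev polynomial''. For instance, $A_0(t):=1-\frac{(t-1)^2}{(C+1)^2}$ works with $\gamma=\frac{(1-\varepsilon)^2}{(C+1)^2}$.

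What your route buys is an elementary, fully explicit construction. It has $\deg(A)=2j=O\of*{(C+1)^2\log(1/\varepsilon)/(1-\varepsilon)^2}$, is valid for every $\varepsilon<1$, and needs no Bernstein approximation theorem (whose constant the paper leaves implicit). The paper's route buys uniformity with the preceding lemma.

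Three smaller remarks:
\begin{enumerate}
\item The two preliminary alternatives in your write-up are unnecessary. The $(1-\eta)B$ renormalization would also need $h$ to map $\bracket*{0,\delta}$ into $\bracket*{0,\varepsilon}$, which you did not state.
\item In the symmetry reduction, $1-P$ takes values in $\bracket*{1-C,1+C}$. So the exact-at-$1$ case must be applied with $C+1$ in place of $C$. This only changes the constant, and the paper's ``w.l.o.g.'' glosses over the same point.
\item Your ``$0$-sided case'' follows the paper's $\osdeg^0$ convention (error on $f^{-1}(0)$, exact on $f^{-1}(1)$), which is the opposite of the case the paper's proof writes out.
\end{enumerate}
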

\begin{proof}
    W.l.o.g. assume that $P(x)=0$ on $f^{-1}(0)$ and $P(x) \in \bracket*{1-\varepsilon,1+\varepsilon}$ on $f^{-1}(1)$.

    The proof extends the construction of \Cref{range-reduction: approximate}, define the function $g(t)$ and polynomial $A_m(t)$ as in that proof. However, we will now pick $m_{C,\varepsilon'}$ so that $\card*{ A_{m_{C,\varepsilon'}}(t) - g(t) } \leq \varepsilon'$, for $\varepsilon' < \varepsilon$ to be chosen later.

    Let $\delta:=A_{m_{C,\varepsilon'}}(0)$, by construction $\delta \leq \varepsilon'$.
    Then define the polynomial 
    $$
    B(t)=\of*{\frac {A_{m_{C,\varepsilon'}}(t)-\delta }{1-\delta}} ^2.
    $$
    For
    $t \in \bracket*{-C,C}$
    we have
    $\frac{A_{m_{C,\varepsilon'}}(t)-\delta }{1-\delta} \in \bracket*{-\frac{\delta}{1-\delta},1}$,
    therefore
    $B(t) \in \bracket*{0,1}$.

    For $t=0$ we have $B(t)=0$. 

    Finally, for $t\in \bracket*{1-\varepsilon,1+\varepsilon}$ we have that $A_{m_{C,\varepsilon'}}(t) \in \bracket*{1-\varepsilon', 1}$, therefore 
    $$
    B(t) \geq \of*{\frac {1-\varepsilon'-\delta}{1-\delta}}^2 \geq 
    (1-2\varepsilon')^2.
    $$
    Then we can pick $\varepsilon'$ so that this value is at least $1-\varepsilon$.

    Finally, take $Q(x)=B(P(x))$, obtaining a polynomial with degree $2 d \cdot m_{C,\varepsilon'}$.
\end{proof}

\end{document}